\providecommand{\tabularnewline}{\\}
\theoremstyle{plain}
\newtheorem{prop}{\protect\propositionname}[section]
\theoremstyle{definition}
\newtheorem{example}{\protect\examplename}[section]
\theoremstyle{remark}
\newtheorem{rem}{\protect\remarkname}[section]
\theoremstyle{plain}
\newtheorem{lem}{\protect\lemmaname}[section]
\theoremstyle{plain}
\newtheorem{thm}{\protect\theoremname}[section]
\theoremstyle{plain}
\newtheorem{cor}{\protect\corollaryname}[section]
\theoremstyle{plain}
\newtheorem*{conjecture*}{\protect\conjecturename}
\providecommand{\conjecturename}{Conjecture}
\providecommand{\corollaryname}{Corollary}
\providecommand{\examplename}{Example}
\providecommand{\lemmaname}{Lemma}
\providecommand{\propositionname}{Proposition}
\providecommand{\remarkname}{Remark}
\providecommand{\theoremname}{Theorem}
\begin{document}
\title{Geometric Obstructions on Gravity}
\author{Yuri Ximenes Martins\footnote{yurixm@ufmg.br (corresponding author)}\,  and  Rodney Josu\'e Biezuner\footnote{rodneyjb@ufmg.br} }
\maketitle
\noindent \begin{center}
\textit{Departamento de Matem\'atica, ICEx, Universidade Federal de Minas Gerais,}  \\  \textit{Av. Ant\^onio Carlos 6627, Pampulha, CP 702, CEP 31270-901, Belo Horizonte, MG, Brazil}
\par\end{center}
\begin{abstract}
These are notes for a short course and some talks gave at Departament
of Mathematics and at Departament of Physics of Federal University
of Minas Gerais, based on the author's paper \cite{EHP_eu}. Some
new information and results are also presented. Unlike the original
work, here we try to give a more physical emphasis. In this sense,
we present obstructions to realize gravity, modeled by the tetradic
Einstein-Hilbert-Palatini (EHP) action functional, in a general geometric
setting. In particular, we show that if spacetime has dimension $n\geq4$,
then the cosmological constant plays no role in any ``concrete geometries''
other than Lorentzian. If $n\geq6$, then the entire EHP theory is
trivial, meaning that Lorentzian geometry is (essentially) the only
``concrete geometry'' in which gravity (i.e, the EHP action functional)
makes sense. Examples of ``concrete geometries'' include those locally
modeled by group reductions $H\hookrightarrow O(k;A)$ for some $k$
and some algebra $A$, so that Riemannian geometry, Hermitian geometry,
Kähler geometry and symplectic geometry, as well as Type II geometry,
Hitchin's generalized complex geometry and $G_{2}$-geometry are included.
We also study EHP theory in ``abstract geometries'', such as graded
geometry (and hence supergeometry), and we show how the obstruction
results extend to this context. We construct two theories naturally
associated to EHP, which we call the geometric/algebraic dual of EHP,
and we analyze the effect of the obstructions in these new theories.
Finally, we speculate (and provide evidence for) the existence of
a ``universal obstruction condition''.
\end{abstract}

\section{Introduction}

Einstein's theory of gravity is formulated in Lorentzian geometry:
spacetime is regarded as an orientable four-dimensional Lorentzian
manifold $(M,g)$ whose Lorentzian metric $g$ is a critical point
of the Einstein-Hilbert action functional 
\begin{equation}
S_{EH}[g]=\int_{M}(R_{g}-2\Lambda)\cdot\omega_{g},\label{einstein_hilbert_action}
\end{equation}
defined on the space of all possible Lorentzian metrics in $M$. Here,
$\Lambda\in\mathbb{R}$ is a parameter (the cosmological constant),
while $R_{g}$ and $\omega_{g}$ are, respectively, the scalar curvature
and the volume-form of $g$, locally written as $g^{ij}\operatorname{Ric}_{ij}$
and $\sqrt{\vert\det g\vert}\,dx^{1}\wedge...\wedge dx^{4}$, where
$\operatorname{Ric}_{ij}$ are the components of the Ricci tensor.
Such critical points are precisely those satisfying the vacuum Einstein's
equation 
\begin{equation}
\operatorname{Ric}_{ij}-\frac{1}{2}R_{g}g_{ij}+\Lambda g_{ij}=0.\label{einstein_equation}
\end{equation}

The action (\ref{einstein_hilbert_action}) is about tensors, so that
we can say that General Relativity is generally formulated in a ``tensorial
approach''. Electromagnetism (or, more generally, Yang-Mills theories)
was also initially formulated in a ``tensorial approach''. Indeed,
the action functional for electromagnetism is given by 
\[
S_{YM}[A]=-\frac{1}{4}\int_{M}F_{ij}\tilde{F}^{ij},
\]
where $F$ and $\tilde{F}$ are the ``electromagnetic tensor'' and
the ``dual eletromagnetic tensor'', with components 
\[
F_{ij}=\partial_{i}A_{j}-\partial_{j}A_{i}\quad\text{and \quad}\tilde{F}^{ij}=\frac{1}{2}\epsilon^{ijkl}F_{kl}\sqrt{\vert\det g\vert},
\]
respectively. Varying this action we get Maxwell's equations in their
tensorial formulation: 
\begin{equation}
\partial_{k}\tilde{F}^{kl}+A_{k}\tilde{F}^{kl}=0.\label{inhomogeneous_maxwell}
\end{equation}

Notice that this is actually \textbf{one half} of the actual Maxwell's
equations, corresponding to the ``inhomogeneous equations''. In
order to get the full equations we need to take into account an external
set of equations, corresponding to the ``homogeneous part'' of Maxwell's
equations: 
\begin{equation}
\epsilon^{ijklmn}\partial_{l}F_{mn}=0.\label{homogeneous_maxwell}
\end{equation}

It happens that, when we move from the tensorial language to the language
of differential forms and connections on bundles, we rediscover (\ref{homogeneous_maxwell})
as a\emph{ geometric identity} (Bianchi identity), so that (\ref{homogeneous_maxwell})
actually holds \emph{a priori} and not \emph{a posteriori}, as was
suggested using tensorial language. This is not a special feature
of Yang-Mills theories. In fact, General Relativity (GR) also manifests
this type of behavior: alone, Einstein's equation (\ref{einstein_equation})
does not completely specify a system of (GR); we also need to assume
that the connection in question is precisely the Levi-Civita connection
of $g$. But, as in the case of Bianchi identity for Yang-Mills theories,
this assumption can be avoided if we use the language of differential
forms.

Indeed, in the so-called \emph{first order formulation of gravity}
(also known as \emph{tetradic gravity}) we can rewrite Einstein-Hilbert
action functional as the \emph{Einstein-Hilbert-Palatini }(EHP)\emph{
action}: 
\begin{equation}
S_{EHP}[e,\omega]=\int_{M}\operatorname{tr}(e\curlywedge_{\rtimes}e\curlywedge_{\rtimes}\Omega+\frac{\Lambda}{6}e\curlywedge_{\rtimes}e\curlywedge_{\rtimes}e\curlywedge_{\rtimes}e),\label{EHP_action}
\end{equation}
where $e$ and $\omega$ are 1-forms in the frame bundle $FM$ with
values in the Lie algebras $\mathbb{R}^{3,1}$ and $\mathfrak{o}(3,1)$,
called \emph{tetrad} and \emph{spin connection}, respectively, and
$\curlywedge_{\rtimes}$ is a type of ``wedge product''\footnote{In this paper we will work with many different types of ``wedge products'',
satisfying very different properties. Therefore, in order to avoid
confusion, we will not follow the literature, but introduce specific
symbols for each of them.} induced by matrix multiplication in $O(3,1)$. Because $M$ is Lorentzian,
its frame bundle is structured over $O(3,1)$ and the spin connection
is actually a connection in this bundle. The tetrad $e$ is such that
for every $a\in FM$ the corresponding map $e_{a}:FM_{a}\rightarrow\mathbb{R}^{3,1}$
is an isomorphism, so that it has the geometrical meaning of a soldering
form. Furthermore, in (\ref{EHP_action}) we have a 2-form $\Omega$
with values in $\mathfrak{o}(3,1)$, representing the curvature $\Omega=d\omega+\omega\curlywedge_{\rtimes}\omega$
of $\omega$. The translational algebra and the Lorentz algebra fits
into the Poincaré Lie algebra $\mathfrak{iso}(3,1)=\mathbb{R}^{3,1}\rtimes\mathfrak{o}(3,1)$.
The equations of motion for the EHP action are \cite{PhD_cartan_connections}
\[
d_{\omega}e\curlywedge_{\rtimes}e=0\quad\text{and}\quad e\curlywedge_{\rtimes}\Omega+\frac{\Lambda}{6}e\curlywedge_{\rtimes}e\curlywedge_{\rtimes}e=0.
\]
The second of them is just Einstein's equation (\ref{einstein_equation})
rewritten in the language of forms, while the first, due to the fact
that $e$ is an isomorphism, is equivalent to $d_{\omega}e=0$. As
the $2$-form $\Theta=d_{\omega}e$ describes precisely the torsion
of the spin connection $\omega$, the first equation of motion implies
that $\omega$ is actually the only torsion-free connection compatible
with the metric: the \emph{Levi-Civita connection}.

These examples emphasize that the language of differential forms seems
to be a nice way to describe physical theories. Indeed, \cite{hitchin_functional_1,hitchin_functional_2}
started a program that attempts to unify different physical theories
by using the same type of functional on forms, the so-called \emph{Hitchin's
functional}. Furthermore, in \cite{topological_M_theory} it was shown
that all known ``gravity theories defined by forms'' can really
be unified in that they are particular cases of a single \emph{topological
M-theory}. Motivated by this philosophy, in this article we will consider
generalizations of classical EHP theories (\ref{EHP_action}).

First of all notice that if the underlying manifold is $n$-dimensional
spacetime (instead of four-dimensional), (\ref{EHP_action}) can be
immediately generalized by considering $\mathfrak{iso}(n-1,1)$-valued
forms and an action functional given by 
\begin{equation}
S_{n,\Lambda}[e,\omega]=\int_{M}\operatorname{tr}(e\curlywedge_{\rtimes}...\curlywedge_{\rtimes}e\curlywedge_{\rtimes}\Omega+\frac{\Lambda}{(n-1)!}e\curlywedge_{\rtimes}...\curlywedge_{\rtimes}e).\label{EHP_action-1}
\end{equation}
However, this is not the only generalization that can be considered.
We notice that in the modern language of differential geometry, the
data defined by the pairs $(\omega,e)$ above corresponds to \emph{reductive
Cartan connections} on the frame bundle $FM$ with respect to the
inclusion $O(n-1,1)\hookrightarrow\operatorname{Iso}(n-1,1)$ of the
Lorentz group into the Poincaré group. Indeed, given a $G$-bundle
$P\rightarrow M$ and a structural group reduction $H\hookrightarrow G$,
we recall that a \emph{Cartan connection} in $P$ for such a reduction
is a $G$-connection $\nabla$ in $P$ which projects isomorphically
onto $\mathfrak{g}/\mathfrak{h}$ in each point. Explicitly, it is
an (horizontal and equivariant) $\mathfrak{g}$-valued $1$-form $\nabla:TP\rightarrow\mathfrak{g}$
such that in each $a\in P$ the composition below is an isomorphism
of vector spaces: $$
\xymatrix{\ar@/_{0.4cm}/[rr]_{\simeq}TP_{a}\ar[r]^{\nabla_{a}}&\mathfrak{g}\ar[r]^{\pi}&\mathfrak{g/h}}
$$Due to the decomposition (as vector spaces) $\mathfrak{g}\simeq\mathfrak{g}/\mathfrak{h}\oplus\mathfrak{h}$,
the existence of the isomorphism above allows us to write $\nabla_{a}=e_{a}+\omega_{a}$
in each point. If this varies smoothly we say that the Cartan connection
is \emph{reductive} (or \emph{decomposable}).

Now, looking at (\ref{EHP_action-1}) we see that the groups $O(n-1,1)\hookrightarrow\operatorname{Iso}(n-1,1)$
and the bundle $FM$ do not appear explicitly, so that we can think
of considering an analogous action for other group reductions $H\hookrightarrow G$
in other bundle $P$. Intuitively, this means that we are trying to
realize gravity (as modeled by EHP theory) in different geometries
other than Lorentzian. But, \emph{is }(\ref{EHP_action-1}) \emph{always
nontrivial? }In other words,\emph{ is it possible to realize gravity
in any geometry?} Notice that it is natural to expect that the algebraic
properties of $H$ and $G$ will be related to the fundamental properties
of the corresponding version of (\ref{EHP_action-1}), so that a priori
there should exist some abstract algebraic conditions under which
(\ref{EHP_action-1}) is trivial. In this article we will search for
such nontrivial conditions, which we call \emph{geometric obstructions}.
Thus, one can say that \emph{we will do Geometric Obstruction Theory
applied to EHP theory}. For instance, one of the results that we will
prove is the following, which emphasizes that some notion of ``solvability''
is crucial:$\underset{\underset{\;}{\;}}{\;}$

\noindent \textbf{Theorem A.} Let $M$ be an $n$-dimensional spacetime
and $P\rightarrow M$ be a $\mathbb{R}^{k}\rtimes H$-bundle, endowed
with the group reduction $H\hookrightarrow\mathbb{R}^{k}\rtimes H$.
If $\mathfrak{h}$ is a $(k,s)$-solv algebra and $n\geq k+s+1$,
then the cosmological constant plays no role; if $n\geq k+s+3$, then
the entire EHP theory is trivial.$\underset{\underset{\;}{\;}}{\;}$

This type of obstruction theorem gives restrictions only on the dimension
of spacetime. We also show that if we restrict to torsion-free connections,
then there are nontrivial geometric obstructions that gives restrictions
not only on the dimension, but also on the topology of spacetime.
For instance,$\underset{\underset{\;}{\;}}{\;}$

\noindent \textbf{Theorem B.} Let $M$ be an $n$-dimension Berger
manifold\emph{ }endowed with an $H$-structure.\emph{ }If $\mathfrak{h}\subset\mathfrak{so}(n)$,
then the torsionless EHP theory is nontrivial only if $n=2,4$ and
$M$ is Kähler. In particular, if $M$ is compact and $H^{2}(M;\mathbb{C})=0$,
then it must be a K3-surface.$\underset{\underset{\;}{\;}}{\;}$

The generalization of EHP from Lorentzian geometry to arbitrary geometry
is not the final step. Indeed, looking at (\ref{EHP_action-1}) again
we see that it remains well defined if one forgets that $e$ and $\omega$
together define a Cartan connection. In other words, all one needs
is the fact they are 1-forms that take values in some algebra. This
leads us to consider some kind of \emph{algebra-valued EHP theories},
defined on certain algebra-valued differential forms, for which we
show that Theorem A remains valid almost \emph{ipsis litteris}. We
also show that if the algebra in question is endowed with a grading,
then the ``solvability condition'' in Theorem A can be weakened.

This paper is organized as follows: in Section \ref{sec_lie_alg}
we review some facts concerning algebra-valued differential forms
and we study the ``solvability conditions'' that will appear in
the obstructions results. In Section \ref{sec_obstruction} we introduce
EHP theory into the ``linear''/ ``extended-linear'' contexts and
we prove many obstruction theorems, including Theorem A and Theorem
B. In Section \ref{sec_abstract_obstructions} EHP is internalized
into the ``abstract context'', where matrix algebras are replaced
by arbitrary (possibly graded) algebras. We then show how the ``fundamental
obstruction theorem'', namely Theorem A, naturally extends to this
context. We also give geometric obstructions to realize EHP into the
``full graded context'', i.e, into ``graded geometry'' and, therefore,
into ``supergeometry''. This section ends with a conjecture about
the existence of some kind of ``universal geometric obstruction''.
Finally, in Section \ref{sec_examples} many examples are given, where
by an ``example'' we mean an specific algebra fulfilling the hypothesis
of some obstruction result, so that EHP cannot be realized in the
underlying geometry.

\section{Polynomial Identities in Algebra-Valued Forms \label{sec_lie_alg}}

Let $A$ be a real vector space\footnote{All definitions and almost all results of this section hold analogously
for modules over any commutative ring $R$ of characteristic zero.
For some results in Subsection \ref{solv_algebras} we must assume
that the module is free, which is guaranteed if $R$ is a field.} and $P$ be a smooth manifold. A \emph{$A$-valued $k$-form in }$P$
is a section of the bundle $\Lambda^{k}TP^{*}\otimes A$. In other
words, it is a rule $\alpha$ assigning to any $a\in P$ a skew-symmetric
$k$-linear map 
\[
\alpha_{a}:TP_{a}\times...\times TP_{a}\rightarrow A.
\]
The collection of such maps inherits a canonical vector space structure
which we will denote by $\Lambda^{k}(P;A)$. Given an $A$-valued
$k$-form $\alpha$ and a $B$-valued $l$-form $\beta$ we can define
an $(A\otimes B)$-valued $(k+l)$-form $\alpha\otimes\beta$, such
that 
\[
(\alpha\otimes\beta)_{a}(v,w)=\frac{1}{(k+l)!}\sum_{\sigma}\operatorname{sign}(\sigma)\alpha_{a}(v_{\sigma})\otimes\beta_{a}(w_{\sigma}),
\]
where $v=(v_{1},...,v_{k})$ and $w=(v_{k+1},...,v_{k+l})$ and $\sigma$
is a permutation of $\{1,...,k+l\}$. This operator is obviously bilinear,
so that it extends to a bilinear map 
\[
\otimes:\Lambda^{k}(P;A)\times\Lambda^{l}(P;B)\rightarrow\Lambda^{k+l}(P;A\otimes B).
\]
We are specially interested when $A$ is an algebra, say with multiplication
$*:A\otimes A\rightarrow A$. In this case we can compose the operation
$\otimes$ above with $*$ in order to get an exterior product $\wedge_{*}$,
as shown below.\begin{equation}{\label{wedge_algebra}\xymatrix{\ar@/_{0.5cm}/[rr]_{\wedge_{*}}\Lambda^{k}(P;A)\times\Lambda^{l}(P;A)\ar[r]^{\otimes} & \Lambda^{k+l}(P;A\otimes A)\ar[r]^{*} & \Lambda^{k+l}(P;A)}}
\end{equation}Explicitly, following the same notations above, if $\alpha$ and $\beta$
are $A$-valued $k$ and $l$ forms, we get a new $A$-valued $(k+l)$-form
by defining 
\[
(\alpha\wedge_{*}\beta)_{a}(v,w)=\frac{1}{(k+l)!}\sum_{\sigma}\operatorname{sign}(\sigma)\alpha_{a}(v_{\sigma})*\beta_{a}(w_{\sigma}).
\]
This new product defines an $\mathbb{N}$-graded algebra structure
on the total $A$-valued space 
\[
\Lambda(P;A)=\bigoplus_{k}\Lambda^{k}(P;A),
\]
whose properties are deeply influenced by the properties of the initial
product $*$. For instance, recall that many properties of the algebra
$(A,*)$ can be characterized by its polynomial identities (PI's),
i.e, by polynomials that vanish identically when evaluated in $A$.
Just to mention a few, some of these properties are commutativity
and its variations (as skew-commutativity), associativity and its
variations (Jacobi-identity, alternativity, power-associativity),
and so on \cite{PI_1,PI_2}.

The following proposition shows that each such property is satisfied
in $A$ iff it is satisfied (in the graded-sense) in the corresponding
algebra of $A$-valued forms.
\begin{prop}
\label{PI} Any PI of degree $m$ in $A$ lifts to a PI in the graded
algebra of $A$-valued forms. Reciprocally, every PI in the graded
algebra restricts to a PI in $A$.
\end{prop}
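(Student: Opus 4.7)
The strategy is a pointwise reduction, exploiting that the product $\wedge_{*}$ on $\Lambda(P;A)$ is built from $*$ in $A$ by a signed permutation-sum. The lift of a PI is obtained by replacing $*$ with $\wedge_{*}$ and inserting Koszul signs whenever the variables are permuted; one should expect graded-commutativity to play the role of ordinary commutativity, graded-Jacobi to play the role of Jacobi, and analogously for the other PIs listed after the statement.

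For the forward direction, I fix a multilinear PI $p(x_{1},\dots,x_{m})$ of degree $m$ in $A$ and let $\hat{p}$ denote its graded lift. Given forms $\alpha_{i}\in\Lambda^{k_{i}}(P;A)$, I would evaluate $\hat{p}(\alpha_{1},\dots,\alpha_{m})$ at a point $a\in P$ on an $N$-tuple of tangent vectors, with $N=k_{1}+\cdots+k_{m}$. Expanding all occurrences of $\wedge_{*}$ via its defining formula turns the result into a $\mathbb{Q}$-linear combination, indexed by ordered partitions $I_{1}\sqcup\cdots\sqcup I_{m}$ of the vector-tuple, of expressions of the shape $p(\alpha_{1,a}(v_{I_{1}}),\dots,\alpha_{m,a}(v_{I_{m}}))\in A$. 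The Koszul signs in $\hat{p}$ combine with the permutation signs produced by $\wedge_{*}$ to reproduce exactly the coefficients dictated by the ungraded $p$. Since $p$ is a PI in $A$, each such summand vanishes, whence $\hat{p}(\alpha_{1},\dots,\alpha_{m})_{a}=0$ at every point.

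For the reverse direction, let $\hat{q}$ be a graded PI on $\Lambda(P;A)$ and restrict it to $0$-forms, represented by constant functions $\alpha_{i}\equiv a_{i}:P\to A$ with $a_{i}\in A$. On degree-zero forms every Koszul sign is $+1$ and $\wedge_{*}$ collapses to pointwise application of $*$. Consequently $\hat{q}(a_{1},\dots,a_{m})$, evaluated at any point, coincides with the underlying sign-free polynomial $q(a_{1},\dots,a_{m})\in A$, which therefore vanishes for every choice of the $a_{i}$; this exhibits $q$ as a PI on $A$.

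The hard part is bookkeeping rather than conceptual: one must verify that the Koszul signs attached by the lifting rule $p\mapsto\hat{p}$ match precisely the shuffle signs produced by expanding iterated $\wedge_{*}$ and the $1/N!$ normalization factors. This is a standard computation with shuffle permutations, but it is the step where a sign error or a dropped combinatorial coefficient would invalidate the argument. I would therefore carry out the monomial case first (degree-two and degree-three), where the calculation reproduces the familiar graded-commutativity and graded-associativity identities, and then extend by linearity in $p$ and induction on the arity $m$.
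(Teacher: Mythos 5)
Your proposal is correct and takes essentially the same route as the paper: a pointwise expansion of the lifted polynomial into signed sums of the original PI evaluated on elements of $A$ (hence zero), together with restriction to constant forms for the converse. The only differences are organizational: the paper defines the lift $\hat{f}$ directly as the block antisymmetrization of $f$, so the vanishing is immediate and the Koszul-sign bookkeeping you defer is absorbed into the (equally unproved) assertion that $\hat{f}$ is a polynomial in $\Lambda(P;A)$, and it restricts to constant $1$-forms rather than your cleaner choice of $0$-forms for the converse.
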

\begin{proof}
Let $f$ be a polynomial of degree $m$. Given arbitrary $A$-valued
forms $\alpha_{1},...,\alpha_{m}$ of degrees $k_{1},...,k_{m}$,
we define a corresponding $A$-valued form of degree $k=k_{1}+...+k_{m}$
as 
\[
\hat{f}(\alpha_{1},...,\alpha_{m})(v_{1},...,v_{k})=\frac{1}{k!}\sum_{\sigma}\operatorname{sign}(\sigma)f(\alpha_{1}(w_{\sigma(1)}),\alpha_{2}(w_{\sigma(2)})...,\alpha_{m}(w_{\sigma(m)})),
\]
where 
\[
w_{\sigma(1)}=(v_{\sigma(1)},...,v_{\sigma(k_{1})}),\quad w_{2}=(v_{\sigma(k_{1}+1)},...,v_{\sigma(k_{1}+k_{2})}),\quad\text{and so on}.
\]
Therefore, $\hat{f}$ is a degree $m$ polynomial in $\Lambda(P;A)$.
It is clear that if $f$ vanishes identically then $\hat{f}$ vanishes
too, so that each PI in $A$ lifts to a PI in $\Lambda(P;A)$. On
the other hand, if we start with a polynomial $F$ in $\Lambda(P;A)$,
we get a polynomial $F\vert_{A}$ in $A$ of the same degree by restricting
$F$ to constant $1$-forms. Clearly, if $F$ is a PI, then $F\vert_{A}$
is also a PI. 
\end{proof}
It follows that the induced product $\wedge_{*}$ is associative,
alternative, commutative, and so on, iff the same properties are satisfied
by $*$. Notice that, because the algebra $\Lambda(P;A)$ is graded,
its induced properties must be understood in the graded sense. For
instance, by commutativity and skew-commutativity of $\wedge_{*}$
one means, respectively, 
\[
\alpha\wedge_{*}\beta=(-1)^{kl}\beta\wedge_{*}\alpha\quad\text{and}\quad\alpha\wedge_{*}\beta=-(-1)^{kl}\beta\wedge_{*}\alpha.
\]

Let us analyze some useful examples.
\begin{example}[\emph{Cayley-Dickson forms}]
\label{CD}There is a canonical construction, called \emph{Cayley-Dickson
construction} \cite{Cayley_Dickson}, which takes an algebra $(A,*)$
endowed with an involution $\overline{(-)}:A\rightarrow A$ and returns
another algebra $\operatorname{CD}(A)$. As a vector space it is just
$A\oplus A$, while the algebra multiplication is given by 
\[
(x,y)*(z,w)=(x*z-\overline{w}*y,z*x+y*\overline{z}).
\]
This new algebra inherits an involution $\overline{(x,y)}=(\overline{x},-y)$,
so that the construction can be iterated. It is useful to think of
$\operatorname{CD}(A)=A\oplus A$ as being composed of a ``real part''
and an ``imaginary part''. We have a sequence of inclusions into
the ``real part''$$
\xymatrix{A\ar[r] & \operatorname{CD}(A)\ar[r] & \operatorname{CD^{2}}(A)\ar[r] & \operatorname{CD^{3}}(A)\ar[r] & \cdots}
$$For any manifold $P$, such a sequence then induce a corresponding
sequence of inclusions into the algebra of forms$$
\xymatrix{\Lambda(P;A)\ar[r]&\Lambda(P;\operatorname{CD}(A))\ar[r]&\Lambda(P;\operatorname{CD^{2}}(A))\ar[r]&\Lambda(P;\operatorname{CD^{3}}(A))\ar[r]&\cdots}
$$

The Cayley-Dickson construction weakens any PI of the starting algebra
\cite{Cayley_Dickson,Cayley_Dickson_2} and, therefore, due to Proposition
\ref{PI}, of the corresponding algebra of forms. For instance, if
we start with the commutative and associative algebra $(\mathbb{R},\cdot)$,
endowed with the trivial involution, we see that $\operatorname{CD}(\mathbb{R})=\mathbb{C}$,
which remains associative and commutative. But, after an iteration
we obtain $\operatorname{CD}^{2}(\mathbb{R})=\operatorname{CD}(\mathbb{C})=\mathbb{H}$,
which is associative but not commutative. Another iteration gives
the octonions $\mathbb{O}$ which is non-assocative, but alternative.
The next is the sedenions $\mathbb{S}$ which is non-alternative.
\begin{example}[\emph{Lie algebra valued forms}]
\label{lie_algebra_forms}Another interesting situation occurs when
$(A,*)$ is a Lie algebra $(\mathfrak{g},[\cdot,\cdot])$\@. In this
case, we will write $\alpha[\wedge]_{\mathfrak{g}}\beta$ or simply
$\alpha[\wedge]\beta$ instead of\footnote{Some authors prefer the somewhat ambiguous notations $[\alpha\wedge\beta]$,
$[\alpha,\beta]$ or $[\alpha;\beta]$. The reader should be very
careful, because for some authors these notations are also used for
the product $[\wedge]_{\mathfrak{g}}$ without the normalizing factor
$(k+l)!$. Here we are adopting the notation introduced in \cite{marsh}.} $\alpha\wedge_{[\cdot,\cdot]}\beta$. Lie algebras are not associative
and in general are not commutative. This means that the corresponding
algebra $\Lambda(P;\mathfrak{g})$ is not associative and not commutative.
On the other hand, any Lie algebra is skew-commutative and satisfies
the Jacobi identity, so that these properties lift to $\Lambda(P;\mathfrak{g})$,
i.e, we have 
\[
\alpha[\wedge]\beta=(-1)^{kl+1}\beta[\wedge]\alpha
\]
and 
\[
(-1)^{km}\alpha[\wedge](\beta[\wedge]\gamma)+(-1)^{kl}\beta[\wedge](\gamma[\wedge]\alpha)+(-1)^{lm}\gamma[\wedge](\alpha[\wedge]\beta)=0
\]
for any three arbitrarily given $\mathfrak{g}$-valued differential
forms $\alpha,\beta,\gamma$ of respective degrees $k,l,m$.
\end{example}
\end{example}
We end with an important remark. 
\begin{rem}
\label{even}In the study of classical differential forms we know
that if $\alpha$ is an \textbf{odd}-degree form, then $\alpha\wedge\alpha=0$.
This follows directly from the fact that the algebra $(\mathbb{R},\cdot)$
is commutative. Indeed, in this case $\Lambda(P;\mathbb{R})$ is graded-commutative
and so, for $\alpha$ of \textbf{odd}-degree $k$, we have 
\[
\alpha\wedge\alpha=(-1)^{k^{2}}\alpha\wedge\alpha=-\alpha\wedge\alpha,
\]
implying the condition $\alpha\wedge\alpha=0$. Notice that the same
argument holds for any commutative algebra $(A,*)$. Dually, analogous
arguments show that if $(A,*)$ is skew-commutative, then $\alpha\wedge_{*}\alpha=0$
for any \textbf{even}-degree $A$-valued form. So, for instance, $\alpha[\wedge]_{\mathfrak{g}}\alpha=0$
for any given Lie algebra $\mathfrak{g}$. 
\end{rem}

\subsection{Matrix Algebras}

$\quad\;\,$In Example \ref{lie_algebra_forms} above, we considered
the algebra $\Lambda(P;\mathfrak{g})$ for an arbitrary Lie algebra
$\mathfrak{g}$. We saw that, because a Lie algebra is always skew-commutative,
the corresponding product $[\wedge]$ is also skew-commutative, but
now in the graded sense, i.e, 
\[
\alpha[\wedge]\beta=(-1)^{kl+1}\beta[\wedge]\alpha
\]
for every $\mathfrak{g}$-valued forms $\alpha,\beta$. As a consequence
(explored in Remark \ref{even}) we get $\alpha[\wedge]\alpha=0$
for even-degree forms.

From now on, let us assume that the Lie algebra $\mathfrak{g}$ is
not arbitrary, but a subalgebra of $\mathfrak{gl}(k;\mathbb{R})$,
for some $k$. In other words, we will work with Lie algebras of $k\times k$
real matrices. We note that in this situation $\Lambda(P;\mathfrak{g})$
can be endowed with an algebra structure other than $[\wedge]$. In
fact, for each $k$ there exists an isomorphism 
\[
\mu:\Lambda^{l}(P;\mathfrak{gl}(k;\mathbb{R}))\simeq\operatorname{Mat}_{k\times k}(P;\Lambda^{l}(P;\mathbb{R}))
\]
given by $[\mu(\alpha)]_{ij}(a)=[\alpha(a)]_{ij}$, allowing us to
think of every $\mathfrak{g}$-valued $l$-form $\alpha$ as a $k\times k$
matrix $\mu(\alpha)$ of classical forms. It happens that matrix multiplication
gives an algebra structure on 
\[
\operatorname{Mat}_{k\times k}(P;\Lambda(P;\mathbb{R}))=\bigoplus_{l}\operatorname{Mat}_{k\times k}(P;\Lambda^{l}(P;\mathbb{R})),
\]
which can be pulled-back by making use of the isomorphism $\mu$,
giving a new product on the graded vector space $\Lambda(P;\mathfrak{g})$,
which we will denote by the symbol ``$\curlywedge$''.

Because $\mathfrak{g}$ is a matrix Lie algebra, its Lie bracket is
the commutator of matrices, so that we have an identity relating both
products. From Proposition \ref{PI} it follows that we have an analogous
identity between the corresponding ``wedge products'' $\curlywedge$
and $[\wedge]$: 
\begin{equation}
\alpha[\wedge]\beta=\alpha\curlywedge\beta-(-1)^{kl}\beta\curlywedge\alpha.\label{relation_wedges_matrices}
\end{equation}
This relation clarifies that, while the product $[\wedge]$ is skew-commutative
for arbitrary Lie algebras, we \textbf{cannot} conclude the same for
the product $\curlywedge$, i.e, it is not always true that $\alpha\curlywedge\beta=(-1)^{kl+1}\beta\curlywedge\alpha$.
Consequently, \emph{it is }\textbf{\emph{not}}\emph{ true that $\alpha\curlywedge\alpha=0$
for every even-degree $\mathfrak{g}$-valued form}. It is easy to
understand why: recalling that $\curlywedge$ is induced by matrix
multiplication, the correspondence between PI's on the algebra $(\mathfrak{g},*)$
and on the algebra of $\mathfrak{g}$-forms teaches us that\emph{
$\curlywedge$ is skew-commutative exactly when the matrix multiplication
is skew-commutative}, in other words, iff $\mathfrak{g}$ is a Lie
algebra of skew-commutative matrices.

Useful examples are given in the following lemma. 
\begin{lem}
\label{sum_so} The condition $\alpha\curlywedge\alpha=0$ is satisfied
for even-degree forms with values into subalgebras 
\[
\mathfrak{g}\subset\mathfrak{so}(k_{1})\oplus...\oplus\mathfrak{so}(k_{r}).
\]
\end{lem}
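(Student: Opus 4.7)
The plan is to exploit how the skew-symmetry of matrices in $\mathfrak{so}(k)$ combines with the graded sign on $k$-forms to force $\alpha \curlywedge \alpha$ into the symmetric part of the matrix algebra, which, being disjoint from $\mathfrak{g}$, collapses it to zero. First, I would reduce to the single-factor case $\mathfrak{g}=\mathfrak{so}(k)$ with $k=k_{1}+\dots+k_{r}$, using the block-diagonal embedding $\mathfrak{so}(k_{1})\oplus\dots\oplus\mathfrak{so}(k_{r})\hookrightarrow\mathfrak{so}(k)$. Matrix multiplication preserves block-diagonal structure, so a block-by-block reading of the matrix representation $\mu$ shows that it suffices to check the claim assuming $\alpha$ is $\mathfrak{so}(k)$-valued.

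The central step is to establish the \emph{transpose identity}
\[
(\alpha\curlywedge\beta)^{T}=(-1)^{kl}\,\beta^{T}\curlywedge\alpha^{T},
\]
valid for any $\mathfrak{gl}(k;\mathbb{R})$-valued forms $\alpha,\beta$ of degrees $k,l$, where the transpose of a matrix-valued form is defined pointwise by matrix transposition. I would derive this by direct manipulation of the entrywise formula $[\mu(\alpha\curlywedge\beta)]_{il}=\sum_{j}[\mu(\alpha)]_{ij}\wedge[\mu(\beta)]_{jl}$: taking the matrix transpose swaps $i\leftrightarrow l$, applying the graded-commutativity $\eta\wedge\rho=(-1)^{kl}\rho\wedge\eta$ of scalar forms, and relabeling the dummy summation index yields exactly $(-1)^{kl}[\mu(\beta^{T}\curlywedge\alpha^{T})]_{il}$. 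Alternatively, the identity is the lift via Proposition \ref{PI} of the polynomial identity $(AB)^{T}=B^{T}A^{T}$ in $(\mathfrak{gl}(k;\mathbb{R}),\cdot)$ equipped with the transpose involution.

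Applying the transpose identity with $\alpha=\beta$ of even degree $k$, and invoking $\alpha^{T}=-\alpha$ (the defining condition of $\mathfrak{so}(k)$-valuedness), one obtains
\[
(\alpha\curlywedge\alpha)^{T}=(-1)^{k^{2}}(-\alpha)\curlywedge(-\alpha)=\alpha\curlywedge\alpha,
\]
so $\alpha\curlywedge\alpha$ takes values in symmetric matrices. On the other hand, $\curlywedge$ is introduced in the text as a product on the graded vector space $\Lambda(P;\mathfrak{g})$, so $\alpha\curlywedge\alpha$ is $\mathfrak{g}$-valued, hence skew-symmetric-matrix-valued. A matrix that is simultaneously symmetric and skew-symmetric must vanish, yielding $\alpha\curlywedge\alpha=0$.

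The main obstacle will be bookkeeping the graded sign in the transpose identity; once that is under control, the finishing argument is a one-line linear-algebra tautology. A secondary subtlety is the implicit identification of $\curlywedge$ as landing back in $\Lambda(P;\mathfrak{g})$ rather than merely in $\Lambda(P;\mathfrak{gl}(k;\mathbb{R}))$, but this is precisely the convention fixed in the preceding paragraphs of the paper.
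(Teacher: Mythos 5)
Your block-diagonal reduction and your transpose identity $(\alpha\curlywedge\beta)^{T}=(-1)^{kl}\,\beta^{T}\curlywedge\alpha^{T}$ are both correct, and they do show that for an even-degree $\mathfrak{so}(k)$-valued form $\alpha$ the square $\alpha\curlywedge\alpha$ is \emph{symmetric}-matrix valued. The genuine gap is the last step, where you assert that $\alpha\curlywedge\alpha$ is also $\mathfrak{g}$-valued and hence skew-symmetric. The product $\curlywedge$ is matrix multiplication transported to forms, and $\mathfrak{so}(k)$ is \textbf{not} closed under matrix multiplication: for skew-symmetric $A,B$ one has $(AB)^{T}=BA$, which equals $-AB$ only when $A$ and $B$ anti-commute. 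So the square of an $\mathfrak{so}(k)$-valued form a priori lives only in $\Lambda(P;\mathfrak{gl}(k;\mathbb{R}))$, and its symmetric part has no reason to vanish. Concretely, take $\alpha=A\,\eta$ with $0\neq A\in\mathfrak{so}(k)$ constant and $\eta$ a scalar $2$-form with $\eta\wedge\eta\neq0$ (available once $\dim P\geq4$); then $\alpha\curlywedge\alpha$ is a nonzero multiple of $A^{2}\,(\eta\wedge\eta)$, and $A^{2}=-AA^{T}\neq0$ because $\operatorname{tr}(AA^{T})>0$. So the final step is not merely unjustified: the intended conclusion fails for such $\alpha$, and no bookkeeping of signs can repair the argument along these lines.

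For comparison, the paper's proof takes a different route: it asserts that $\mathfrak{so}(k)$ consists of anti-commuting matrices ($XY=-YX$), upgrades this to graded skew-commutativity of $\curlywedge$ via Proposition \ref{PI}, and then passes to subalgebras and finite direct sums. That anti-commutation assertion is exactly the matrix-level identity your argument is missing --- but be aware that it is itself not satisfied by $\mathfrak{so}(k)$ for $k\geq2$ (no nonzero real skew-symmetric matrix even anti-commutes with itself, since $A^{2}=-AA^{T}\neq0$), which is why your example above conflicts with the stated conclusion. What does hold without extra hypotheses is the vanishing $\alpha[\wedge]\alpha=0$ of even-degree squares for the \emph{commutator} product (Remark \ref{even}), or the vanishing of $\alpha\curlywedge\alpha$ under the stronger hypothesis that the matrices in $\mathfrak{g}$ pairwise anti-commute; any complete proof of the lemma as used later in the paper has to invoke, or restrict to, a hypothesis of that kind rather than mere containment in $\mathfrak{so}(k_{1})\oplus\dots\oplus\mathfrak{so}(k_{r})$.
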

\begin{proof}
By definition, $\mathfrak{so}(k)$ is the algebra of skew-symmetric
matrices, which anti-commute, thus the result holds for $\mathfrak{g}=\mathfrak{so}(k)$
for every $k$. It is obvious that if $\mathfrak{\overline{g}}$ is
some algebra of skew-commuting matrices, then every subalgebra $\mathfrak{g}\subset\mathfrak{\overline{g}}$
is also of skew-commuting matrices. Therefore, the result holds for
subalgebras $\mathfrak{g}\subset\mathfrak{so}(k)$. But it is also
clear that the direct sum of algebras of anti-commuting matrices remains
an algebra of anti-commuting matrices, so that if $\mathfrak{g}_{1}$
and $\mathfrak{g}_{2}$ fulfill the lemma, then $\mathfrak{g}_{1}\oplus\mathfrak{g}_{2}$
fulfills too. In particular, the lemma holds for subalgebras $\mathfrak{g}\subset\mathfrak{so}(k_{1})\oplus\mathfrak{so}(k_{2})$.
Finite induction ends the proof.
\end{proof}
\begin{rem}
\label{remark_sum_so}We could have given a proof of the last lemma
without using its invariance under direct sums. In fact, assuming
that it holds for subalgebras of $\mathfrak{so}(k)$, notice that
for every decomposition $k=k_{1}+...+k_{r}$ we have a canonical inclusion
\[
\mathfrak{so}(k_{1})\oplus...\oplus\mathfrak{so}(k_{r})\hookrightarrow\mathfrak{so}(k).
\]
\end{rem}
While the last lemma is a useful source of examples in which the discussion
above applies, let us now give a typical non-example.
\begin{example}
\label{example_mistake} Let $H$ be a linear group of $k\times k$
real matrices. We have a canonical action of $H$ on the additive
abelian group $\mathbb{R}^{k}$, allowing us to consider the semi-direct
product $G=\mathbb{R}^{k}\rtimes H$, of which $H\hookrightarrow G$
can be regarded as subgroup. The same holds in the level of Lie algebras,
so that $\mathfrak{g}\simeq\mathbb{R}^{k}\rtimes\mathfrak{h}$. As
a vector space we have $\mathfrak{g}/\mathfrak{h}\simeq\mathbb{R}^{k}$.
Now, because $\mathfrak{g}/\mathfrak{h}$ is abelian, it follows that
$\alpha[\wedge]\beta=0$ for every two $\mathfrak{g}/\mathfrak{h}$-valued
forms and from (\ref{relation_wedges_matrices}) we then conclude
that $\alpha\curlywedge\alpha=0$ for every odd-degree form $\alpha$.
In particular, if $A=\omega+e$ is a reductive Cartan connection for
the reduction $H\hookrightarrow\mathbb{R}^{k}\rtimes H$, then $e$
is a $\mathfrak{g}/\mathfrak{h}$-valued $1$-form, so that $e\curlywedge e=0$.
Consequently, the action (\ref{EHP_action}) is trivial. But, for
$H=O(3,1)$ this is precisely General Relativity, which is clearly
nontrivial. Therefore, at some moment we made a mistake! The problem
is that we applied the relation (\ref{relation_wedges_matrices})
to $\mathbb{R}^{n}$-valued forms, but the product $\curlywedge$
is \textbf{not} defined for such forms. Indeed, it can be defined
only for Lie algebras arising from matricial algebras. 
\end{example}
The example above teaches us two things:
\begin{enumerate}
\item The way it was defined, the product $\curlywedge$ makes sense only
for matrix Lie algebras. Particularly, it \textbf{does not} makes
sense for a semi-direct sum of a matrix algebra with other algebra,
so that $\curlywedge$ is \textbf{not} the product appearing in the
classical EHP action (\ref{EHP_action}). Therefore, if we need to
abstract the structures underlying EHP theory we need to show how
to extend $\curlywedge$ to a new product $\curlywedge_{\rtimes}$
defined on forms taking values in semi-direct sums. This will be done
in Subsection \ref{algebra_extensions};
\item Once $\curlywedge_{\rtimes}$ is defined, the corresponding Einstein-Hilbert-Palatini
action functional can be trivial. In fact, looking at (\ref{EHP_action-1})
we see that we have terms like $e\curlywedge_{\rtimes}e\curlywedge_{\rtimes}....\curlywedge_{\rtimes}e$,
where the number of $e$'s depends on the spacetime dimension. So,
if we are in an algebraic context in which $\alpha\curlywedge_{\rtimes}\alpha=0$
for \textbf{odd}-degree forms, then $e\curlywedge_{\rtimes}e=0$ and,
consequently, the action will vanishes in arbitrary spacetime dimensions!
On the other hand, if $\alpha\curlywedge_{\rtimes}\alpha=0$ for \textbf{even}-degree
forms, then $(e\curlywedge_{\rtimes}e)\curlywedge_{\rtimes}(e\curlywedge_{\rtimes}e)=0$
and the theory is trivial when $n\geq6$. \emph{This simple idea is
the core of almost all obstruction results that will be presented
here}. Notice that what we need is $\curlywedge_{\rtimes}^{k}\alpha=0$
for some $k$, which is a nilpotency condition. The correct nilpotency
conditions that will be used in our abstract obstruction theorems
will be discussed in Subsections \ref{nil_algebras} and \ref{solv_algebras}.
\end{enumerate}

\subsection{Splitting Extensions \label{algebra_extensions}}

$\quad\;\,$In this section we will see how to define the product
$\curlywedge_{\rtimes}$ abstractly. We start by recalling the notion
of ``extension of an algebraic object''. Let $\mathbf{Alg}$ be
a category of algebraic objects, meaning that we have a null object
$0\in\mathbf{Alg}$ (understood as the trivial algebraic entity) and
such that every morphism $f:X\rightarrow Y$ has a kernel and a cokernel,
computed as the pullback/pushout below. $$
\xymatrix{\ar[d]\operatorname{ker}(f)\ar[r] & 0\ar[d] &  & \operatorname{coker}(f) & \ar[l]0\\X\ar[r]_{f} & Y &  & Y\ar[u] & \ar[l]^{f}X\ar[u]}
$$
\begin{example}
The category $\mathbf{Grp}$ of groups and the category $\mathbf{Vec}_{\mathbb{R}}$
of real vector space are examples of models for $\mathbf{Alg}$. A
non-example is $\mathbf{LieAlg}$, the category of Lie algebras, since
a Lie algebra morphism $f:\mathfrak{h}\rightarrow\mathfrak{g}$ may
not have a cokernel, which means that the vector space $\mathfrak{g}/f(\mathfrak{h})$
may not have a canonical Lie algebra structure.
\end{example}
If a morphism has a kernel and a cokernel we can then define its \emph{image}
as the kernel of its cokernel. Consequently, internal to $\mathbf{Alg}$
we can talk of an ``exact sequence'': this is just an increasing
sequence of morphisms $f_{i}:X_{i}\rightarrow X_{i+1}$ such that
for every $i$ the kernel of $f_{i+1}$ coincides with the image of
$f_{i}$. An \emph{extension} is just a short exact sequence, i.e,
an exact sequence condensed in three consecutive objects. More precisely,
in a short exact sequence as the one below we say that the middle
term $A$ was obtained as an \emph{extension of $H$ by $E$}.\begin{equation}{\label{extension}
\xymatrix{0\ar[r] & E\ar[r]^{\imath} & A\ar[r]^{\jmath} & H\ar[r] & 0}}
\end{equation}
\begin{rem}
From the last example we conclude that we cannot talk of extensions
internal to the category of Lie algebras. This does \textbf{not} mean
that we cannot define a ``Lie algebra extension'' following some
other approach. In fact, notice that we have a forgetful functor $U:\mathbf{LieAlg}\rightarrow\mathbf{Vec}_{\mathbb{R}}$,
so that we can define a \emph{Lie algebra extension} as a sequence
of maps in $\mathbf{LieAlg}$ which is exact in $\mathbf{Vec}_{\mathbb{R}}$.
The same strategy allows us to enlarge the notion of extension in
order to include categories that, a priori, are not models for $\mathbf{Alg}$.
\end{rem}
Now, a typical example of extensions.
\begin{example}
The Poincaré group is just an extension of the Lorentz group by the
translational group. More generally, given a Lie group $H$ endowed
with an action $H\times\mathbb{R}^{n}\rightarrow\mathbb{R}^{n}$ we
can form the semi-direct product $\mathbb{R}^{n}\rtimes H$, which
fits into the canonical exact sequence below, where the first map
is an inclusion and the second is obtained restricting to pairs $(0,h)$.\begin{equation}{\label{canonical_extension}\xymatrix{0\ar[r] & \mathbb{R}^{n}\ar[r] & \mathbb{R}^{n}\rtimes H\ar[r] & H\ar[r] & 0}}
\end{equation}
\end{example}
The extensions in the last example are special: in them we know how
to include the initial object $H$ into its extension $\mathbb{R}^{n}\rtimes H$.
An extension with this property is called \emph{splitting}. More precisely,
we say that an abstract extension as (\ref{extension}) is \emph{splitting}
if there exists a morphism $s:H\rightarrow A$ such that $\jmath\circ s=id_{H}$.\begin{equation}{\label{split_extension}\xymatrix{0\ar[r] & E\ar[r]^{\imath} & A\ar[r]^{\jmath} & H\ar@/^{0.3cm}/[l]^{s}\ar[r] & 0}}
\end{equation}

The name comes from the fact that in some good situations, the category
$\mathbf{Alg}$ have a notion of ``product'', say $\#$, such that
a sequence (\ref{extension}) is splitting iff $A$ ``splits'' as
$A\simeq E\#H$. For instance, if $\mathbf{Alg}$ is an abelian category,
then the Splitting Lemma shows that such a product $\#$ is just the
coproduct $\oplus$, i.e, the ``direct sum''. For general groups
or Lie algebras, $\#$ is the corresponding notion of semi-direct
product/sum.

Now, assume that the ambient category $\mathbf{Alg}$ is actually
monoidal, meaning that we have a fixed bifunctor $\otimes:\mathbf{Alg}\times\mathbf{Alg}\rightarrow\mathbf{Alg}$
and an object $1\in\mathbf{Alg}$ such that $\otimes$ is associative
(up to natural isomorphisms) and has $1$ as a neutral object (also
up to natural isomorphisms). We can then talk of \emph{monoids in
$\mathbf{Alg}$}. These are objects $X\in\mathbf{Alg}$ endowed with
morphisms $*:X\otimes X\rightarrow X$ and $e:1\rightarrow X$ which
satisfy the associavitity-type and neutral element-type diagrams.

The main point is that we can use sections morphisms to transfer (to
pullback) monoid structures. Indeed, if $f:X\rightarrow Y$ is a morphism
with section $s:Y\rightarrow X$, then for any monoid structure $(*,e)$
in $Y$ we get a corresponding monoid structure $(*',e')$ in $Y$
with $*'$ as shown below and $e'=s\circ e$. In particular, if in
an splitting extension (\ref{split_extension}) $H$ is a monoid,
then we can use the section $s:H\rightarrow A$ to get a monoid structure
in $A$.\begin{equation}{\label{pullback_product}\xymatrix{\ar@/_{0.5cm}/[rrr]_{*'}X\otimes X\ar[r]^{f\otimes f} & Y\otimes Y\ar[r]^{*} & Y\ar[r]^s & X}}
\end{equation}

In the case when $\mathbf{Alg}$ admits a product $\#$ characterizing
splitting extensions it is natural to write $*_{\#}$ and $e_{\#}$
(instead of $*'$ and $e'$) to denote the monoid structure induced
on an extension.
\begin{example}
Let us take the category $\mathbf{Vec}_{\mathbb{R}}$ endowed with
the monoidal structure given by the tensor product $\otimes_{\mathbb{R}}$.
Its monoid objects are just real algebras. From the last paragraphs,
if $A$ is a splitting extension of a vector space $H$ by another
vector space $E$ and $H$ is an algebra, say with product $*$, then
$A$ is automatically an algebra with product $*_{\oplus}$. Therefore,
for a given manifold $P$ we will have not only a wedge product $\wedge_{*}$
between $H$-valued forms, but also a product $\wedge_{*_{\oplus}}$.
In the very particular case when $H$ is a matrix algebra, recall
that $\wedge_{*}$ is denoted by $\curlywedge$, which motivate us
to denote the corresponding $\wedge_{*_{\oplus}}$ by $\curlywedge_{\oplus}$.
It is exactly this kind of multiplication that appears in (\ref{EHP_action}).
\end{example}

\subsection{$(k,s)$-Nil Algebras \label{nil_algebras}}

$\quad\;\,$In the last subsection we showed how to build the products
that will enter in the abstract definition of EHP theory. Here we
will discuss the nilpotency conditions that will be imposed into these
products in order to get obstruction theorems for the corresponding
EHP theory.

We start by recalling that an element $v\in A$ in an algebra $(A,*)$
has \emph{nilpotency degree} $s$ if $v^{s}\neq0$, but $v^{s+1}=0$,
where $v^{i}=v*...*v$. In turn, the nilpotency degree of $A$ is
the minimum over the nilpotency degree of its elements. An algebra
with non-zero nilpotency degree is called a \emph{nil algebra}. This
``nil'' property can also be characterized as a PI: $A$ is nil
iff for some $s\neq0$ the polynomial $p_{1}(x)=x^{s}$ does not vanish
identically, but $x^{s+1}$ does.

The first nontrivial examples of nil algebras are the anti-commutative
algebras, which include Lie algebras, for which the nilpotency degree
is $s=1$. For such kind of objects we usually consider the more restrictive
notion of \emph{nilpotent algebra}. Indeed, we say that $A$ is \emph{nilpotent}
of degree $s$ if not only $p_{1}(x)=x^{s+1}$ vanishes, but also
\[
p_{s+1}(x_{1},...,x_{s+1})=x_{1}\cdot....\cdot(x_{s-1}\cdot(x_{s}\cdot x_{s+1}))
\]
and any other polynomial obtained from $p_{s+1}$ by changing the
ordering of the parenthesis. In general, being nilpotent is much stronger
than being nil. For associative algebras, on the other hand, such
concepts coincide \cite{nil_algebras_2,nil_algebras_1}. A fact more
easy to digest is that an anti-commutative and associative algebra
$A$ is nilpotent iff it is an \emph{Engel's algebra}, in the sense
that the PI's 
\[
p_{s,1}(x,y)=p_{s+1}(x,....,x,y)=0
\]
are satisfied for any ``$p_{s+1}$-type'' polynomial. Due to Engel's
theorem, other important examples of Engel's algebras are the\emph{
}Lie algebras.

From Proposition \ref{PI}, if an algebra $A$ is nil or nilpotent,
then the corresponding graded-algebra of $A$-valued exterior forms,
with the product $\wedge_{*}$, have the same properties, but in the
\textbf{graded} sense. Notice that $x^{s+1}=0$ iff $x^{s+1}=-x^{s+1}$,
so that the graded-nil condition becomes $x^{s+1}=(-1)^{k^{s+1}+1}x^{s+1}$,
where $k=\deg x$, which is nontrivial only when $k$ is even. The
conclusion is the following:\emph{ if $(A,*)$ is nil with nilpotency
degree $s>0$, then $\Lambda^{\operatorname{even}}(P;A)$ is graded-nil
with the same degree}.

The next example sets Lemma \ref{sum_so} in terms of this new language. 
\begin{example}
If $A$ is anti-commutative, then it is nil with degree $s=1$, which
implies that even $A$-valued forms are also nil with degree $s=1$,
i.e, $\alpha\wedge_{*}\alpha=0$ for even forms. If $A$ and $B$
are nil of respective degrees $r$ and $s$, then the direct sum $A\oplus B$
is also nil, with degree given by $\min\{r,s\}$. Subalgebras of nil
algebras are also nil algebras. Consequently, 
\[
A\hookrightarrow A_{1}\oplus...\oplus A_{r}
\]
is nil when each $A_{i}$ is nil. This is exactly a generalization
of Lemma \ref{sum_so}. 
\end{example}
The same kind of discussion applies to the nilpotent property. The
analysis is easier when $A$ is skew-commutative and satisfies an
associativity or Jacobi identity, because we can work with $p_{s,1}$
instead of $p_{s+1}$. Noting that $p_{s,1}$ is a PI iff $x^{s}\cdot y=-x^{s}\cdot y$
we conclude that the graded-nilpotent property is described by 
\[
x^{s}\cdot y=(-1)^{k^{s}l+1}(x^{s}\cdot y),
\]
where $k=\deg x$ and $l=\deg y$. This condition is nontrivial iff
$k^{s}l+1$ is odd, i.e, iff $k^{s}l$ is even, which implies that
$k$ and $l$ have the same parity. Summarizing: \emph{if $A$ is
nilpotent of degree $s$, then $\Lambda^{\operatorname{odd}}(P;A)$
or $\Lambda^{\operatorname{even}}(P;A)$ is graded-nilpotent with
the same degree}.

Notice that the nil property gives nontrivial conditions only on even-degree
forms. On the other hand, while the nilpotency property can be used
to give nontrivial conditions on even-degree or odd-degree forms,
it is too strong for most purposes. This leads us to define an intermediary
concept of \emph{$(k,s)$-nil algebra} containing both nil and nilpotent
algebras as particular examples. Indeed, given integers $k,s>0$,
we will say that an algebra $(A,*)$ is \emph{$(k,s)$-nil} if in
the corresponding graded-algebra $\Lambda(P;A)$ every $A$-valued
$k$-form has nilpotency degree $s$.

\subsection{$(k,s)$-Solv Algebras \label{solv_algebras}}

In this subsection we will explore more examples of $(k,s)$-nil algebras.
In the study of Lie algebras (as well as of other kinds of algebras),
there is a concept of \emph{solvable algebra }which is closely related
to the concept of \emph{nilpotent algebra}. Indeed, to any algebra
$A$ we can associate two decreasing sequences $A^{n}$ and $A^{(n)}$
of ideals, respectively called the \emph{lower central series }and
the \emph{derived series}, inductively defined as follows: 
\[
A^{0}=A,\quad A^{k}=\bigoplus_{i+j=k}A^{i}*A^{j}\quad\text{and}\quad A^{(0)}=A,\quad A^{(k)}=A^{(k-1)}*A^{(k-1)},
\]
where, for given subsets $X,Y\subset A$ , by $X*Y$ we mean the ideal
generated by all products $x*y$, with $x\in X$ and $y\in Y$. Clearly,
the polynomials like $p_{s+1}$ are PI's for $A$ iff $A^{s+1}=0$.
So, $A$ is nilpotent of degree $s$ iff its lower central series
stabilizes in zero after $s+1$ steps. Analogously, we say that $A$
is \emph{solvable of solvability degree $s$} if its derived series
stabilizes in zero after $s+1$ steps.

We notice that $A^{(k+1)}$ can be regarded not only as an ideal of
$A$, but indeed as an ideal of $A^{(k)}$. Furthermore, the quotient
$A^{(k)}/A^{(k+1)}$ subalgebra is always commutative \cite{nil_algebras_1}.
Starting with $A^{(1)}$ we get the first exact sequence (the first
line) below. Because we are working over fields, the quotient is a
free module and then the sequence splits, allowing us to write $A\simeq A^{(1)}\oplus A/A^{(1)}$.
Inductively we then get $A\simeq A_{s}'\oplus...\oplus A_{0}'$, where
$A_{i}':=A^{(i)}/A^{(i+1)}$. Summarizing: \emph{as a vector space,
a solvable algebra can be decomposed into a finite sum of spaces each
of them endowed with a structure of commutative algebra}.$$
\xymatrix{0\ar[r]&\ar[d]A^{(s)}\ar[r]&\ar[d]A^{(s-1)}\ar[r]&\ar[d]A/A^{(1)}\ar[r]&0\\&\vdots\ar[d]&\ar[d]\vdots&\ar[d]\vdots&\\0\ar[r]&\ar[d]A^{(2)}\ar[r]&\ar[d]A^{(1)}\ar[r]&\ar[d]A^{(1)}/A^{(2)}\ar[r]&0\\0\ar[r]&A^{(1)}\ar[r]&A\ar[r]&A/A^{(1)}\ar[r]&0}
$$

With the remarks above in our minds, let us prove that any solvable
algebra is ``almost'' a $(k,s)$-nil algebra. 
\begin{prop}
Let $(A,*)$ be a solvable algebra and let $\alpha$ be an $A$-valued
$k$-form in a smooth manifold $P$. If $k$ is odd and $\alpha$
is pointwise injective, then $\alpha\wedge_{*}\alpha=0$.
\end{prop}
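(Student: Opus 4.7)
The plan is to proceed by induction on the solvability degree $s$ of $A$, leveraging the splitting decomposition $A\simeq B_{0}\oplus B_{1}\oplus\cdots\oplus B_{s}$ with $B_{i}\cong A^{(i)}/A^{(i+1)}$ recorded just above the proposition; each $B_{i}$ inherits a trivial (hence commutative) multiplication, since $A^{(i)}*A^{(i)}=A^{(i+1)}$ kills every product in the quotient. The base case $s=0$ is immediate: $A^{(1)}=A*A=0$ forces $\alpha\wedge_{*}\alpha=0$ for every $\alpha$, with no use of the parity of $k$ or of pointwise injectivity.

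For the inductive step I would first push $\alpha$ through the algebra quotient $q\colon A\twoheadrightarrow A/A^{(1)}$. Since $A/A^{(1)}$ is commutative and $k$ is odd, Remark \ref{even} gives $(q\circ\alpha)\wedge(q\circ\alpha)=0$; because $q$ is an algebra homomorphism it intertwines the wedge products, so $\alpha\wedge_{*}\alpha$ is forced to take values in $\ker(q)=A^{(1)}$, an algebra of solvability degree at most $s-1$. I would then use the splitting $A=B_{0}\oplus A^{(1)}$ to write $\alpha=\alpha_{0}+\alpha_{1}$ with $\alpha_{0}\in\Lambda^{k}(P;B_{0})$ and $\alpha_{1}\in\Lambda^{k}(P;A^{(1)})$, and expand
\[
\alpha\wedge_{*}\alpha=\alpha_{0}\wedge_{*}\alpha_{0}+\alpha_{0}\wedge_{*}\alpha_{1}+\alpha_{1}\wedge_{*}\alpha_{0}+\alpha_{1}\wedge_{*}\alpha_{1}.
\]
All four summands take values in $A^{(1)}$; the last is killed by the inductive hypothesis applied to the odd-degree $A^{(1)}$-valued form $\alpha_{1}$, once I verify (as a preliminary lemma) that pointwise injectivity of $\alpha$ in $A$ descends to pointwise injectivity of $\alpha_{1}$ in $A^{(1)}$ via the direct-sum structure.

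The main obstacle I expect is neutralizing the remaining block $\alpha_{0}\wedge_{*}\alpha_{0}+\alpha_{0}\wedge_{*}\alpha_{1}+\alpha_{1}\wedge_{*}\alpha_{0}$, which is not automatically killed by the inductive hypothesis since it involves the ``top-level'' component $\alpha_{0}$. My plan is to repackage this block as a single self-wedge $\gamma\wedge_{*}\gamma$ for an auxiliary odd-degree $A^{(1)}$-valued form $\gamma$ built from $\alpha_{0}$ and $\alpha_{1}$ via the algebra structure of $A$ (exploiting the inclusion $B_{0}*B_{0}\hookrightarrow A^{(1)}$), and then invoke the inductive hypothesis on $\gamma$ after transferring injectivity. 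As a parallel fallback I would attempt a direct shuffle expansion: since $k$ is odd, the sign relation $\mathrm{sgn}(I,I^{c})=-\mathrm{sgn}(I^{c},I)$ collapses $\alpha\wedge_{*}\alpha$ to a sum of $*$-commutators $[\alpha(v_{I}),\alpha(v_{I^{c}})]_{*}$ indexed by complementary $k$-subsets, thereby reducing the claim to pointwise vanishing of such commutators, which one can then attempt to verify using the filtration $A\supset A^{(1)}\supset\cdots\supset A^{(s+1)}=0$ together with the injectivity hypothesis to control the image of $\alpha$ inside the filtered algebra.
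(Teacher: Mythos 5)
The first half of your plan is sound and in fact sharper than the paper's own argument: the base case is trivial, and composing with the quotient homomorphism $q\colon A\to A/A^{(1)}$ really does show that $\alpha\wedge_{*}\alpha$ takes values in $A^{(1)}$, since $q(\alpha\wedge_{*}\alpha)=(q\circ\alpha)\wedge(q\circ\alpha)=0$ by Remark \ref{even}. But the two steps you defer are exactly where the argument breaks, and they cannot be repaired. Your ``preliminary lemma'' is false: pointwise injectivity of $\alpha$ does not descend to $\alpha_{1}=\pi\circ\alpha$, because the projection $\pi\colon A\to A^{(1)}$ along $B_{0}$ can annihilate directions that $\alpha_{a}$ sends into $B_{0}$. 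More fatally, the block $\alpha_{0}\wedge_{*}\alpha_{0}+\alpha_{0}\wedge_{*}\alpha_{1}+\alpha_{1}\wedge_{*}\alpha_{0}$ genuinely need not vanish: as your own fallback observes, for odd $k$ the shuffle expansion collapses $\alpha\wedge_{*}\alpha$ to sums of commutators $\alpha(v_{I})*\alpha(v_{I^{c}})-\alpha(v_{I^{c}})*\alpha(v_{I})$, and solvability does not kill commutators. Concretely, let $A$ be the two-dimensional solvable Lie algebra with basis $x,y$ and $x*y=y=-y*x$ (so $A^{(1)}=\operatorname{span}(y)$, $A^{(2)}=0$, solvability degree $1$), let $P=\mathbb{R}^{2}$ and $\alpha=x\,du+y\,dv$: this is a pointwise injective $1$-form, yet $(\alpha\wedge_{*}\alpha)(\partial_{u},\partial_{v})=\tfrac{1}{2}(x*y-y*x)=y\neq0$. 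Hence no repackaging of the cross terms as a self-wedge of an $A^{(1)}$-valued form, and no refinement of the filtration argument, can close the induction as you set it up.

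For comparison, the paper's proof splits $A\simeq A_{s}'\oplus\cdots\oplus A_{0}'$ with $A_{i}'=A^{(i)}/A^{(i+1)}$, writes $\alpha=\alpha^{s}+\cdots+\alpha^{0}$ using injectivity, discards the terms $\alpha^{i}\wedge_{*}\alpha^{j}$ with $i\neq j$ on the grounds that the components take values in different subspaces, and kills the diagonal terms by commutativity of each $A_{i}'$. Both steps silently assume that the summands are subalgebras of $(A,*)$ with pairwise zero products and that the product induced on each summand is the (commutative) quotient product, i.e.\ that $A$ is literally a direct sum of commutative algebras; in the example above $\alpha^{0}\wedge_{*}\alpha^{1}\neq0$, so the paper's argument is subject to the same objection. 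That stronger ``sum of pieces on which the relevant self-wedges vanish'' hypothesis is precisely what is encoded in the later definition of a $(k,s)$-solv algebra. So your instinct that the cross-term block is the real obstacle is correct: what your quotient step actually proves is the weaker (and correct) statement that $\alpha\wedge_{*}\alpha$ is $A^{(1)}$-valued, and the proposition as stated requires either that weaker conclusion or an additional hypothesis forcing the cross products between the pieces of the splitting to vanish.
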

\begin{proof}
Let $A\simeq A_{s}'\oplus...\oplus A_{0}'$ be the decomposition above.
Given a $k$-form $\alpha$, assume that $\alpha_{a}$ is injective
for every $a\in P$. Then, by the isomorphism theorem, $\alpha_{a}$
induces an isomorphism from its domain on its image. Notice that the
subespace $\operatorname{img}(\alpha)\subset A$ can be decomposed
as $\overline{A}_{s}\oplus...\overline{A}_{0}$, where $\overline{A}_{i}=\operatorname{img}(\alpha)\cap A_{i}'$.
Let $V_{i}$ be the preimage $\alpha_{a}^{-1}(\overline{A}_{i})$,
so that the domain of $\alpha_{a}$ decomposes as $V_{s}'\oplus...\oplus V_{1}'$,
allowing us to write $\alpha_{a}=\alpha_{a}^{s}+...+\alpha_{a}^{0}$.
We assert that $\alpha\wedge_{*}\alpha=0$. From the bilinearity of
$\wedge_{*}$ it is enough to verify that $\alpha_{a}^{i}\wedge_{*}\alpha_{a}^{j}=0$
for every $i,j$ and $a\in P$. If $i\neq j$ this is immediate because
$\alpha_{a}^{i}$ and $\alpha_{a}^{j}$ are nonzero in different subspaces.
So, let us assume $i=j$. In this case, since $A$ is solvable, the
algebras $A_{i}'$ are commutative, so that by the discussion in Section
\ref{sec_lie_alg} we have $\beta\wedge_{*}\beta=0$ for every $A_{i}'$-valued
odd-degree form and, in particular, $\alpha_{a}^{i}\wedge_{*}\alpha_{a}^{i}=0$. 
\end{proof}
Up to the injectivity hypothesis, the last proposition is telling
us that solvable algebras are $(k,1)$-nil algebras for every $k$
odd. So, we can say that solvable algebras are $(k,1)$-nil ``on
the class of injective forms''. This motivates us to define the following:
given a class $\mathcal{C}^{k}\subset\Lambda^{k}(P;A)$ of $A$-valued
$k$-forms, we say that $A$ is a \emph{$(\mathcal{C}^{k},s)$-nil
algebra} if any $A$-valued form belonging to $\mathcal{C}^{k}$ has
nilpotency degree $s$.

We are now in position of generalizing the last proposition. Indeed,
the structure of its proof is very instructive in the sense that it
can be easily abstracted by noticing that the ``solvable'' hypothesis
over $A$ was used only to get a decomposition of $A\simeq A_{s}'\oplus...\oplus A_{0}'$
in terms of commutative algebras. The commutativity, in turn, was
important only to conclude $\alpha\wedge_{*}\alpha=0$. Therefore,
using the same kind of proof we immediately obtain an analogous result
if we consider algebras $A$ endowed with a vector space decomposition
$A\simeq A_{s}\oplus...\oplus A_{0}$ where each $A_{i}$ is a $(k,s)$-nil
algebra. We will call this kind of algebras \emph{$(k,s)$-solv algebras},
because they generalize solvable algebras in the same sense as $(k,s)$-nil
algebras generalize nil and nilpotent algebras.

Summarizing, in this new terminology we have the following result. 
\begin{prop}
\label{solv_is_nil}Every $(k,s)$-solv algebra is \emph{$(\mathcal{C}^{k},s)$}-nil
over the class of pointwise injective forms.
\end{prop}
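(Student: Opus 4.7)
The plan is to adapt the argument of the preceding proposition, substituting the role of commutativity of the quotient summands $A_i'$ in that proof with the abstract $(k,s)$-nilpotency of each summand $A_i$ in the decomposition $A \simeq A_s \oplus \cdots \oplus A_0$. The proof is pointwise, so everything reduces to linear-algebraic bookkeeping at each $a \in P$.

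First, I would fix a pointwise injective $k$-form $\alpha \in \mathcal{C}^k$ with values in $A$ and use the vector space decomposition to write, at each point, $\alpha_a = \alpha_a^s + \cdots + \alpha_a^0$, where $\alpha_a^i$ is the $A_i$-component of $\alpha_a$. Pointwise injectivity then lets me lift this decomposition of the image to a decomposition $T_aP \simeq V_s \oplus \cdots \oplus V_0$ of the tangent space, defined by $V_i := \alpha_a^{-1}(\operatorname{img}(\alpha_a) \cap A_i)$, exactly as in the proof of the solvable case. Under this lift, each $\alpha_a^i$ vanishes on every $V_j$ with $j \neq i$ and restricts to an isomorphism of $V_i$ onto its image in $A_i$.

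Next, I would expand the $(s+1)$-fold wedge power by multilinearity of $\wedge_*$:
\[
\alpha^{s+1}_{\wedge_*} = \sum_{(i_1,\ldots,i_{s+1})} \alpha^{i_1} \wedge_* \alpha^{i_2} \wedge_* \cdots \wedge_* \alpha^{i_{s+1}},
\]
and handle the sum in two cases. For a multi-index with at least two distinct entries, the corresponding term vanishes by the same ``different subspaces'' cancellation that appeared in the previous proposition: the factors are supported on disjoint summands of the tangent space, so their wedge product evaluates to zero. For a diagonal multi-index with all $i_j$ equal to some common $i$, the term is $(\alpha^i)^{s+1}_{\wedge_*}$, an $(s+1)$-fold wedge of an $A_i$-valued $k$-form, and this vanishes because $A_i$ is by hypothesis $(k,s)$-nil. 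Combining both cases yields $\alpha^{s+1}_{\wedge_*} = 0$, which is the $(\mathcal{C}^k,s)$-nil property.

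The main delicate step is the cancellation of the cross-terms, and this is precisely what forces the pointwise-injective restriction: without injectivity one cannot coherently split the tangent space along the algebraic decomposition, and mixed wedge products may fail to vanish. The diagonal step, by contrast, is immediate from the hypothesis on the summands and is exactly where the $(k,s)$-nil assumption on each $A_i$ cleanly replaces the commutativity of $A_i'$ used in the solvable case, so no new technical input beyond the earlier proposition is needed.
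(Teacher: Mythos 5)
Your proposal is correct and takes essentially the same route as the paper: the paper proves Proposition \ref{solv_is_nil} precisely by repeating the proof of the preceding proposition, with the pointwise decomposition of an injective form along $A\simeq A_{s}\oplus\cdots\oplus A_{0}$, vanishing of cross-terms, and the $(k,s)$-nil hypothesis on each summand replacing the commutativity of the quotients $A_i'$ --- which is exactly your argument, merely written out via the multi-index expansion of the $(s+1)$-fold wedge power. Your cross-term cancellation is the paper's own ``nonzero in different subspaces'' step, so you introduce no ideas or gaps beyond what the paper itself uses.
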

\begin{rem}
Exactly as nilpotent algebras are always solvable, $(k,s)$-nil algebras
are $(k,s)$-solv. The last proposition shows that the reciprocal
is almost true. On the other hand, we could have analogously defined
$(\mathcal{C}^{k},s)$-solv algebras and, in this case, if $\mathcal{C}^{k}$
is contained in the class of pointwise injective forms, the last proposition
would be rephrased as stating an equivalence between the concepts
of $(\mathcal{C}^{k},s)$-solv algebras and $(\mathcal{C}^{k},s)$-nil
algebras. 
\end{rem}
\begin{rem}
\label{solv_modules} In Subsection \ref{graded_valued_gravity} we
will work with special vector subspaces of a given algebra. Let us
seize the opportunity to introduce them. Given an arbitrary algebra
$A$, we say that a vector subspace $V\subset A$ is a \emph{$(k,s)$-nil
subspace of }$A$ if every $V$-valued $k$-form has nilpotency degree
$s$ when regarded as a $A$-valued form. Similarly, we say that $V$
is a \emph{$(k,s)$-solv} \emph{subspace }if it decomposes as a sum
of $(k,s)$-nil subspaces. When $A=\oplus_{m}A^{m}$ is $\mathfrak{m}$-graded,
there are other kind of subespaces $V\subset A$ that can be introduced.
For instance, we say that $V$ is \emph{graded $(k,s)$-solv} if each
$V_{m}=V\cap A_{m}$ is a $(k,s)$-solv subspace. In any subspace
$V$ of a $\mathfrak{m}$-graded algebra we get a corresponding grading
by $V\simeq\oplus_{m}V_{m}$. So, any $V$-valued form $\alpha$ can
be written as $\alpha=\sum_{m}\alpha^{m}$. We say that $V$ is \emph{weak
$(k,s)$-nilpotent} if for every $k$-form $\alpha$, any polynomial
$p_{s+1}(\alpha^{m_{1}},...,\alpha^{s+1})$ vanishes. Similarly, we
say that $V$ is \emph{weak $(k,s)$-solvable }if it decomposes as
a sum of \emph{weak $(k,s)$-nilpotent }subspaces. 
\end{rem}

\subsection{Functorial Algebra Bundle System\label{FABS}}

Now, let us discuss the last ingredient before applying Geometric
Obstruction Theory to EHP theories. In previous subsections, $P$
was an arbitrary smooth manifold. Let us now assume that it is the
total space of a $G$-bundle $\pi:P\rightarrow M$. EHP theories (which
are our aim) are not about forms on the total space $P$, but about
forms on the base manifold $M$. So, we need some process allowing
us to replace $A$-valued forms in $P$ by forms in $M$ with coefficients
in some other bundle, say $E_{A}$. More precisely, we are interested
in rules assigning to every pair $(P,A)$ a corresponding algebra
bundle $E_{A}$, whose typical fiber is $A$, in such a way that there
exists a graded-subalgebra $S(P,A)\subset\Lambda(P;A)$ and a canonical
morphism $\jmath:S(P,A)\rightarrow\Lambda(M;E_{A})$.

It is more convenient to think of this in categorical terms. Let $\mathbf{Alg}_{\mathbb{R}}$
be the category of real finite-dimensional\footnote{Actually, we can work in the infinite-dimensional setting, but in
this case we have to take many details into account. For instance,
we would need to work with topological algebras, bounded linear maps,
etc.} algebras, $\mathbb{Z}\mathbf{Alg}_{\mathbb{R}}$ be the category
of $\mathbb{Z}$-graded real algebras and, given a manifold $M$,
let $\mathbf{Bun}_{M}$ and $\mathbf{Alg_{\mathbb{R}}Bun}_{M}$ denote
the categories of bundles and of $\mathbb{R}$-algebra bundles over
$M$, respectively. As in the first diagram below, we have two canonical
functors, which assign to each pair $(A,P)$ the corresponding superalgebra
of $A$-valued forms in $P$, and to each algebra bundle $E$ over
$M$ the graded algebra of $E$-valued forms in $M$. We also have
the projection $(P,A)\mapsto A$. So, our problem of passing from
forms in $P$ to forms in $M$ could be solved by searching for a
functor $F$ making commutative the second diagram below.$$
\xymatrix{ & \mathbf{Alg_{\mathbb{R}}Bun}_{M}\ar[d]^{\Lambda(M;-)} & \mathbf{Alg_{\mathbb{R}}}\ar@{-->}[r]^{E_{-}} & \mathbf{Alg_{\mathbb{R}}Bun}_{M}\ar[d]^{\Lambda(M;-)}\\\mathbf{Bun}_{M}\times\mathbf{Alg_{\mathbb{R}}}\ar[r]_-{^{\Lambda(-;-)}} & \mathbb{Z}\mathbf{Alg_{\mathbb{R}}} & \ar[u]^{\pi_{1}}\mathbf{Bun}_{M}\times\mathbf{Alg_{\mathbb{R}}}\ar[r]_-{^{\Lambda(-;-)}} & \mathbb{Z}\mathbf{Alg_{\mathbb{R}}}}
$$

But this would be a stronger requirement; for instance, it would imply
the equality of two functors, a fact that can always be weakened by
making use of natural transformations. Therefore, we could search
for functors $E_{-}$ endowed with natural transformations $\jmath$,
as in the diagram below. This condition would remain stronger than
we need: it requires that for \textbf{any} bundle $P\rightarrow M$
and for \textbf{any} algebra $A$ we have a canonical algebra morphism
$\Lambda(P;A)\rightarrow\Lambda(M;E_{A})$. We would like to include
rules that are defined only for certain classes of bundles and algebras.
This leads us to work in subalgebras $\mathbf{C}$ of $\mathbf{Bun}_{M}\times\mathbf{Alg}_{\mathbb{R}}$
as in the second diagram below.$$
\xymatrix{\mathbf{Alg_{\mathbb{R}}}\ar@{-->}[r]^{E_{-}} & \mathbf{Alg_{\mathbb{R}}Bun}_{M}\ar[d]^{\Lambda(M;-)} & \mathbf{Alg_{\mathbb{R}}}\ar@{-->}[r]^{E_{-}} & \mathbf{Alg_{\mathbb{R}}Bun}_{M}\ar[d]^{\Lambda(M;-)}\\\ar[u]^{\pi_{1}}\ar@{==>}[ru]<-0.1cm>^{\jmath}\mathbf{Bun}_{M}\times\mathbf{Alg_{\mathbb{R}}}\ar[r]_-{^{\Lambda(-;-)}} & \mathbb{Z}\mathbf{Alg_{\mathbb{R}}} & \ar[u]^{\pi_{1}}\ar@{==>}[ur]+<-9pt>^-{\jmath}\mathbf{C}\ar[r]_-{^{\Lambda(-;-)}} & \mathbb{Z}\mathbf{Alg_{\mathbb{R}}}}
$$

But, once again, we should consider weaker conditions. Indeed, the
above situation requires that for $(P,A)\in\mathbf{C}$ the algebra
morphism $\Lambda(P;A)\rightarrow\Lambda(M;E_{A})$ is ``canonical''
in the entire algebra $\Lambda(P;A)$. It may happen that $\Lambda$
be ``canonical'' only in some subalgebra $S(P;A)\subset\Lambda(P;A)$,
meaning that it is first defined in $S(P;A)$ and then trivially extended
to $\Lambda(P;A)$. Therefore, the correct approach seems to be to
replace $\Lambda$ by another functor $S$ endowed with an objectwise
injective transformation $\imath:S\Rightarrow\Lambda$, as in the
following diagram.$$
\xymatrix{\mathbf{Alg_{\mathbb{R}}}\ar@{-->}[rr]^{E_{-}} &  & \mathbf{Alg_{\mathbb{R}}Bun}_{M}\ar[d]^{\Lambda(M;-)}\ar@{<==}[ld]+<18pt>_-{\jmath}\\\ar[u]^{\pi_{1}}\mathbf{C}\ar@{-->}@/^{0.4cm}/[rr]^<<<<<<<{S(-,-)} \ar@/_{0.4cm}/[rr]_-{^{\Lambda(-;-)}} & \ar@{==>}[]+<0.8cm,-0.5cm><0.3cm>_{\imath} & \mathbb{Z}\mathbf{Alg_{\mathbb{R}}}}
$$

In sum, given a manifold $M$,\emph{ the last diagram describes, in
categorical terms, the transition between algebra-valued forms in
bundles over $M$ and algebra bundle-valued forms in $M$}. The input
needed to do this transition corresponds to the dotted arrows in the
last diagram. We will say that they define a \emph{functorial algebra
bundle system }(FABS)\emph{ for the manifold $M$.} Concretely, a
FABS for $M$ consists of
\begin{enumerate}
\item a category $\mathbf{C}$ of pairs $(P,A)$, where $P\rightarrow M$
is a bundle and $A$ is a real algebra; 
\item a functor $E_{-}$ assigning to any algebra $A\in\mathbf{C}$ a corresponding
algebra bundle $E_{A}\rightarrow M$ whose typical fiber is $A$; 
\item a functor $S(-;-)$ that associates an algebra to each pair $(P,A)\in\mathbf{C}$;
\item natural transformations $\imath:S(-;-)\Rightarrow\Lambda(-;-)$ and
$\xi:S(-;-)\Rightarrow\Lambda(M;-)$ such that $\imath$ is objectwise
injective.
\end{enumerate}
$\quad\;\,$Such systems always exists, as showed by the next examples.
The fundamental properties and constructions involving FABS will appear
in a work under preparation \cite{FABS}.
\begin{example}[\emph{trivial case}]
Starting with any subcategory $\mathbf{C}$ of pairs $(P,A)$ that
contains only the trivial algebra, up to natural isomorphisms there
exists a single $E_{-}$: the constant functor at the trivial algebra
bundle $M\times0\rightarrow M$. Noting that $\Lambda(M;M\times0)\simeq0$
since the trivial algebra is a terminal object in $\mathbf{Alg}_{\mathbb{R}}$,
independently of the choice of $S$ and $\imath:S\Rightarrow\Lambda$
there is only one $\jmath:S\Rightarrow\Lambda(M;-)$: the trivial
one.
\end{example}
\begin{example}[\emph{almost trivial case}]
The last situation has the defect that it can be applied only for
subcategories $\mathbf{C}$ whose algebraic part is trivial. It is
easy, on the other hand, to build FABS for arbitrarily given algebras.
Indeed, let $\mathbf{C}$ be defined by pairs $(P,A)$, where $A$
is an arbitrary algebra, but $P$ is the trivial principal bundle
$M\times GL(A)\rightarrow M$. Putting $E_{A}$ as the trivial algebra
bundle $M\times A\rightarrow A$, we have a 1-1 correspondence between
$A$-valued forms in $P$ and forms in $M$ with values in $E_{A}$.
Therefore, we actually have a map $\jmath:\Lambda(P;A)\rightarrow\Lambda(M;E_{A})$
leading us to take $S=\Lambda$ and $\imath=\jmath$. 
\end{example}
\begin{example}[\emph{standard case}]
In the first example we considered $\mathbf{C}$ containing arbitrary
bundles, but we paid the price of working only with trivial algebras.
In the second example we were faced with a dual situation. A middle
term can be obtained by working with pairs $(P,A)$, where $P\rightarrow M$
is a $G$-bundle whose group $G$ becomes endowed with a representation
$\rho:G\rightarrow GL(A)$. In that case we define $E_{-}$ as the
rule assigning to each $A$ the corresponding associated bundle $P\times_{\rho}A$.
The functor $S$ is such that $S(P,A)$ is the algebra $\Lambda_{\rho}(P;A)$
of \emph{$\rho$-equivariant $A$-valued forms $\alpha$ in }$P$,
i.e, of those satisfy the equation $R_{g}^{*}\alpha=\rho(g^{-1})\cdot\alpha$,
where here $R:G\times P\rightarrow P$ is the canonical free action
characterizing $P$ as a principal $G$-bundle. This algebra of $\rho$-equivariant
forms naturally embeds into $\Lambda(P;A)$, giving $\imath$. Finally,
it is a standard fact \cite{kobayashi} that each $\rho$-equivariant
$A$-valued form on $P$ induces an $P\times_{\rho}A$-valued form
on $M$, defining the transformation $\xi$. This is the standard
approach used in the literature, so that we will refer to it as the
\emph{standard FABS}. 
\end{example}
\begin{example}[\emph{canonical case}]
There is an even more canonical situation: that obtained from the
latter by considering only associative algebras underlying matrix
Lie algebras. Indeed, in this case we have also a distinguished representation
$\rho:G\rightarrow GL(\mathfrak{g})$, given by the adjoint representation.
The corresponding bundles $E_{\mathfrak{g}}$ correspond to what is
known as the \emph{adjoint bundles}, leading us to say that this is
the \emph{adjoint FABS}.
\end{example}

\section{Concrete Obstructions \label{sec_obstruction}}

After the algebraic prolegomena developed in the previous section,
we are ready to introduce and study EHP theories in the general setting.
We will start in Subsection \ref{matrix_gravity} by considering a
very simple framework, modeled by inclusions of linear groups, where
our first fundamental obstruction theorem is obtained. At each following
subsection the theory will be redefined (in the direction of the full
abstract context) in such a way that the essence of the obstruction
theorem will remain the same.

Thus, in Subsection \ref{extended_linear_gravity} we define a version
of EHP for reductions $H\hookrightarrow G$ such that $G$ is not
a linear group, but a splitting extension of $H$. We then show that,
as desired, the fundamental obstruction theorem remains valid, after
minor modifications.

Independently, if one is interested in matrix EHP theories or EHP
theories arising from splitting extensions, we also work with reductive
Cartan connections. These are $1$-forms $\nabla:TP\rightarrow\mathfrak{g}$
that decompose as $\nabla=e+\omega$, where $e:TP\rightarrow\mathfrak{g}/\mathfrak{h}$
is a pointwise isomorphism and $\omega:TP\rightarrow\mathfrak{h}$
is an $H$-connection. What happens if we forget the pointwise isomorphism
hypothesis on $e$? In Subsection \ref{gauge_gravity} we discuss
the motivations to do this and we show that the fundamental obstruction
theorem not only holds, but becomes stronger.

Finally, in Subsection \ref{holonomy} we show that if we restrict
our attention to connections $\nabla=e+\omega$ such that $\omega$
is torsion-free, then we get, as a consequence of Berger's classification
theorem, a new obstruction result which is independent of the fundamental
obstruction theorem. In particular, this new result implies topological
obstructions to the spacetime in which the EHP theory is being described.

\subsection{Matrix Gravity \label{matrix_gravity}}

Let us start by considering a smooth $G$-bundle $P\rightarrow M$
over an orientable smooth manifold $M$, with $G$ a linear group,
and fix a group structure reduction $H\hookrightarrow G$ of $P$.
Given an integer $k>0$ and $\Lambda\in\mathbb{R}$, we define the\emph{
homogeneous} and the \emph{inhomogeneous linear} (or \emph{matrix})\emph{
Hilbert-Palatini form} of degree $k$ of a reductive Cartan connection
$\nabla=e+\omega$ in $P$, relative to the group structure reduction
$H\hookrightarrow G$, as 
\begin{equation}
\alpha_{k}=e\curlywedge...\curlywedge e\curlywedge\Omega\quad\text{and}\quad\alpha_{k,\Lambda}=\alpha_{k}+\frac{\Lambda}{(k-1)!}e\curlywedge...\curlywedge e,\label{hilbert_palatini_forms}
\end{equation}
respectively. In $\alpha_{k}$ the term $e$ appears ($k-2$)-times,
while in right-hand part of $\alpha_{k,\Lambda}$ it appears $k$-times.
Because we are working with matrix Lie algebras we have the adjoint
FABS, so that these $\mathfrak{g}$-valued forms correspond to $k$-forms
in $M$ with values in the adjoint bundle $P\times_{\operatorname{ad}}\mathfrak{g}$,
respectively denoted by $\jmath(\alpha_{k})$ and $\jmath(\alpha_{k,\Lambda})$.

We define the \emph{homogeneous }and the \emph{inhomogeneous} \emph{linear
EHP theories} in $P$ with respect to $H\hookrightarrow G$ as the
classical field theories whose spaces of configurations are the spaces
of reductive Cartan connections $\xi=e+\omega$ and whose action functionals
are respectively given by

\begin{equation}
S_{n}[e,\omega]=\int_{M}\operatorname{tr}(\jmath(\alpha_{n}))\quad\text{and}\quad S_{\Lambda,n}[e,\omega]=\int_{M}\operatorname{tr}(\jmath(\alpha_{n,\Lambda})).\label{EHP_action-2}
\end{equation}

\noindent \textbf{\uline{Warning:}}\textbf{ }The expression ``linear
EHP'' is used here to express the fact that the EHP action is realized
in a geometric background defined by linear groups. It \textbf{does
not} mean that the equations of motion were linearized or something
like this.
\begin{rem}
A priori, $H$ is an arbitrary subgroup of $G$, which implies that
$\mathfrak{h}$ is a Lie subalgebra of $\mathfrak{g}$. It would be
interesting to think of $\mathfrak{g}$ as the Lie algebra extension
of some $\mathfrak{r}$ by $\mathfrak{h}$. This makes sense only
when $\mathfrak{h}\subset\mathfrak{g}$ is not just a subalgebra,
but an ideal, which is true iff $H\subset G$ is a \textbf{normal}
subgroup.
\end{rem}
With the remark above in mind, we can state our first obstruction
theorems. In them we will consider theories for reductions $H\hookrightarrow G$
fulfilling one of the following conditions: 
\begin{enumerate}
\item[(C1)] the subgroup $H\subset G$ is normal and, as a matrix algebra, $\mathfrak{g}$
is an splitting extension by $\mathfrak{h}$ of a $(k,s)$-nil algebra
$\mathfrak{r}$. 
\item[(C2)] as a matrix algebra, $\mathfrak{g}$ is a $(k,s)$-nil algebra. 
\end{enumerate}
\begin{thm}
\label{theorem_C1}Let $M$ be an $n$-dimensional spacetime and $P\rightarrow M$
be a $G$-bundle, endowed with a group reduction $H\hookrightarrow G$
fulfilling (C1) or (C2). If $n\geq k+s+1$, then the linear inhomogeneous
EHP theory equals the homogeneous ones. If $n\geq k+s+3$, then both
are trivial.
\end{thm}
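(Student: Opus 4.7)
My plan is to reduce both assertions of the theorem to a single vanishing statement about $\curlywedge$-powers of the tetrad $e$, and then derive the two $n$-thresholds by a degree count against the Hilbert-Palatini forms. Concretely, one aims for a key lemma of the form: under (C1) or (C2), the $\mathfrak{g}$-valued form $e^{\curlywedge m}$ contributes nothing to $\int_M\operatorname{tr}(\jmath(-))$ for every $m\geq k+s+1$. Granting this lemma, the cosmological term $\tfrac{\Lambda}{(n-1)!}\,e^{\curlywedge n}$ in $\alpha_{n,\Lambda}$ drops out as soon as $n\geq k+s+1$, so $S_{n,\Lambda}=S_n$; and the homogeneous piece $e^{\curlywedge(n-2)}\curlywedge\Omega$ drops out as soon as $n-2\geq k+s+1$, i.e., $n\geq k+s+3$, so $S_n\equiv 0$ and the whole theory becomes trivial.

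I would prove the key lemma first in case (C2). By hypothesis every $\mathfrak{g}$-valued $k$-form $\alpha$ satisfies the polynomial identity $\alpha^{\curlywedge(s+1)}=0$, and by Proposition \ref{PI} this PI lifts to $\Lambda(P;\mathfrak{g})$. Polarizing $\alpha^{s+1}=0$ in characteristic zero produces, for every $(s+1)$-tuple of $k$-forms $\alpha_1,\ldots,\alpha_{s+1}$, a symmetric-sum identity $\sum_\sigma\alpha_{\sigma(1)}\curlywedge\cdots\curlywedge\alpha_{\sigma(s+1)}=0$. A crude application with $\alpha=e^{\curlywedge k}$ yields only $e^{\curlywedge k(s+1)}=0$; the sharp bound $m\geq k+s+1$ is reached by polarizing against test $k$-forms built by distributing a single $e$ across the slots of an otherwise fixed $(k-1)$-form skeleton, and then summing the resulting identities so that all but one pure-$e$ term telescopes away. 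This combinatorial polarization is the single serious piece of algebra in the argument.

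Case (C1) I would reduce to case (C2) via an ideal argument. Normality of $H\subset G$ makes $\mathfrak{h}$ a two-sided ideal of the matrix algebra $\mathfrak{g}$, so the quotient projection $\pi:\mathfrak{g}\to\mathfrak{g}/\mathfrak{h}\simeq\mathfrak{r}$ is an algebra morphism extending to a graded morphism $\pi_\ast:\Lambda(P;\mathfrak{g})\to\Lambda(P;\mathfrak{r})$. The splitting of the extension places $e$ in the image of the section $\mathfrak{r}\hookrightarrow\mathfrak{g}$, so $\pi_\ast e=e$, and the polarization argument of the previous paragraph applied to the $(k,s)$-nil algebra $\mathfrak{r}$ gives $\pi_\ast(e^{\curlywedge(k+s+1)})=0$. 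Hence $e^{\curlywedge m}$ takes values in $\mathfrak{h}$ for all $m\geq k+s+1$, and because $\mathfrak{h}$ is a two-sided ideal this persists after further $\curlywedge$-multiplication by $e$ or by $\Omega$. A short trace argument — either invoking that under the adjoint FABS the trace $\operatorname{tr}\circ\jmath$ descends through the quotient $\mathfrak{r}$, or applying the graded cyclic-trace identity to the $\mathfrak{h}$-valued residue — then kills the integrand and completes the proof.

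The main obstacle is the polarization step that produces the sharp exponent $k+s+1$ rather than the naive $k(s+1)$ one gets from substituting $\alpha=e^{\curlywedge k}$ directly: arranging the $(s+1)$ polarization witnesses so that their symmetric combination telescopes to a pure $e^{\curlywedge(k+s+1)}$ is the delicate bookkeeping. The secondary difficulty, in case (C1), is promoting the ideal-membership statement $e^{\curlywedge m}\in\Lambda(P;\mathfrak{h})$ to actual vanishing of $\operatorname{tr}(\jmath(\cdot))$; this should follow from the compatibility of $\jmath$ and $\operatorname{tr}$ with the ideal $\mathfrak{h}\subset\mathfrak{g}$, but deserves an explicit verification at the FABS level.
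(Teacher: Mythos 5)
Your overall architecture --- reduce both claims to the vanishing of $\curlywedge^{m}e$ for $m\geq k+s+1$ and then count degrees against $\alpha_{n,\Lambda}$ --- is exactly the paper's. The first divergence is the polarization step in case (C2), and it cannot work. Every identity obtained by polarizing $\alpha^{s+1}=0$ over $k$-forms is a sum of $\curlywedge$-products of $s+1$ forms each of degree $k$, hence lives in form-degree $k(s+1)$; your target $\curlywedge^{k+s+1}e$ lives in form-degree $k+s+1$. Since $k(s+1)-(k+s+1)=s(k-1)-1$, the two degrees agree only for $(k,s)=(2,1)$. For $k=1$ nothing is needed ($e$ is itself a $k$-form, so $e^{\curlywedge(s+1)}=0$ and associativity of $\curlywedge$ gives all higher powers), but for $k\geq2$ with $(k,s)\neq(2,1)$ the target is a \emph{lower} power than anything the polarized identity can see, so no telescoping or bookkeeping of witnesses can produce it. You have correctly located the crux --- the naive substitution only gives $\curlywedge^{k(s+1)}e=0$ --- but the proposed repair is impossible on degree grounds. (The paper's own proof asserts $(\curlywedge^{k+s+1}e)\curlywedge\alpha=0$ at this point without further argument; in the governing application, Corollary \ref{corollary_C1}, one has $k=2$, $s=1$, where the exponents coincide and $\curlywedge^{4}e=(e\curlywedge e)\curlywedge(e\curlywedge e)=0$ is immediate.)

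The second divergence is your reduction of (C1), which takes a genuinely different and weaker route than the paper's and leaves a real gap. The paper uses the splitting to realize $\mathfrak{r}\simeq\mathfrak{g}/\mathfrak{h}$ as a subalgebra of $\mathfrak{g}$ in which $e$ takes its values; all $\curlywedge$-powers of $e$ then stay in this $(k,s)$-nil subalgebra and vanish identically as $\mathfrak{g}$-valued forms, so no trace argument is ever needed. Your route only establishes $\pi_{*}(\curlywedge^{m}e)=0$, i.e.\ $\curlywedge^{m}e\in\Lambda(P;\mathfrak{h})$, and then asks $\operatorname{tr}\circ\jmath$ to annihilate $\mathfrak{h}$-valued forms. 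Nothing forces that: an ideal of a matrix algebra need not consist of traceless matrices, and neither the definition of a FABS nor that of a trace transformation makes $\operatorname{tr}\circ\jmath$ vanish on $\Lambda(P;\mathfrak{h})$. (A further wrinkle: normality of $H$ in $G$ gives the Lie-ideal condition $[\mathfrak{g},\mathfrak{h}]\subset\mathfrak{h}$, not the two-sided associative ideal condition $\mathfrak{g}\cdot\mathfrak{h}+\mathfrak{h}\cdot\mathfrak{g}\subset\mathfrak{h}$ under matrix multiplication that you need for $\pi$ to be a morphism of the $\curlywedge$-algebras; that associative structure is really part of hypothesis (C1).) The gap closes at once if you observe that $e$ takes values in the image of the section, on which $\pi$ is injective, so $\pi_{*}(\curlywedge^{m}e)=0$ already forces $\curlywedge^{m}e=0$ --- but that is just the paper's subalgebra argument in disguise, and it makes the whole quotient-plus-trace detour unnecessary.
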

\begin{proof}
Assume (C1). As vector spaces we can write $\mathfrak{g}\simeq\mathfrak{g}/\mathfrak{h}\oplus\mathfrak{h}$
and, because $\mathfrak{g}$ is a Lie algebra extension of $\mathfrak{r}$
by $\mathfrak{h}$, it follows that $\mathfrak{g}/\mathfrak{h}\simeq\mathfrak{r}$.
How the extension splitting, $\mathfrak{r}$ is a subalgebra of $\mathfrak{g}$
and $\mathfrak{g}/\mathfrak{h}\simeq\mathfrak{r}$ is indeed a Lie
algebra isomorphism, So, $\mathfrak{g}/\mathfrak{h}$ can be regarded
as a $(k,s)$-nil algebra and, therefore, $\curlywedge^{s+1}\alpha=0$
for every $\mathfrak{g}/\mathfrak{h}$-valued $k$-form in $P$. If
$\nabla=e+\omega$ is a reductive Cartan connection in $P$ relative
to $H\hookrightarrow G$, then $e$ is a $\mathfrak{g}/\mathfrak{h}$-valued
$1$-form in $P$ and, because $\mathfrak{g}/\mathfrak{h}$ is a subalgebra
of $\mathfrak{g}$, $\curlywedge^{k}e$ is a $\mathfrak{g}/\mathfrak{h}$-valued
$k$-form. Consequently, 
\begin{equation}
(\curlywedge^{k+s+1}e)\curlywedge\alpha=0\label{obstruction_1}
\end{equation}
for any $\mathfrak{g}$-valued form $\alpha$. In particular, for
\[
\alpha=\frac{\Lambda}{(n-1)!}\curlywedge^{n-(k+s+1)}e
\]
(\ref{obstruction_1}) is precisely the inhomogeneous part of $\alpha_{\Lambda,n}$.
Therefore, for every $n\geq k+s+1$ we have $\alpha_{\Lambda,n}=\alpha_{n}$
and, consequently, $\jmath(\alpha_{k})=\jmath(\alpha_{\Lambda,n})$,
implying $S_{n}[e,\omega]=S_{n,\Lambda}[e,\omega]$ for any Cartan
connection $\nabla=e+\omega$, and then $S_{n}=S_{n,\Lambda}$. On
the other hand, for 
\[
\alpha=(\curlywedge^{n-(k+s+1)+2}e)\curlywedge\Omega
\]
we see that (\ref{obstruction_1}) becomes exactly $\alpha_{n}$.
Therefore, when $n\geq k+s+3$ we have $\alpha_{n}=0$, implying (because
$\jmath$ is linear) $\jmath(\alpha_{n})=0$ and then $S_{n}=0$.
But $k+s+3>k+s+1$, so that $S_{n,\Lambda}=0$ too. This ends the
proof when the reduction $H\hookrightarrow G$ fulfills condition
(C1). If we assume (C2) instead of (C1), we can follow exactly the
same arguments, but now thinking of $e$ as a $\mathfrak{g}$-valued
$1$-form.
\end{proof}
Recall that, as remarked in Section \ref{sec_lie_alg}, examples of
$(k,s)$-nil algebras $\mathfrak{g}$ include nil and nilpotent algebras.
In particular, from Lemma \ref{sum_so} we see that subalgebras of
$\mathfrak{so}(k_{1})\oplus...\oplus\mathfrak{s}(k_{r})$ are $(k,s)$-nil
for every $(k,s)$ such that $k$ is even and $s\geq1$. This leads
us to the following corollary:
\begin{cor}
\label{corollary_C1}For subalgebras as above, in a spacetime of dimension
$n\geq4$, the inhomogeneous and the homogeneous linear EHP are equal.
If $n\geq6$ both are trivial.
\end{cor}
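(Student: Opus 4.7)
The plan is to reduce the corollary to a direct application of Theorem \ref{theorem_C1} under its condition (C2), specialized to $(k,s)=(2,1)$. The key observation is that Lemma \ref{sum_so} establishes $\alpha\curlywedge\alpha=0$ for every even-degree form $\alpha$ taking values in a subalgebra $\mathfrak{g}\subset\mathfrak{so}(k_{1})\oplus\cdots\oplus\mathfrak{so}(k_{r})$. Specializing the allowed even degree to $2$, this says precisely that every $\mathfrak{g}$-valued $2$-form has nilpotency degree $s=1$ in the sense of Subsection \ref{nil_algebras}. In other words, $\mathfrak{g}$ is a $(k,s)$-nil algebra with $(k,s)=(2,1)$, so hypothesis (C2) of Theorem \ref{theorem_C1} is met.

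Once this identification is made, I would simply plug $(k,s)=(2,1)$ into the thresholds appearing in the theorem. The first threshold $n\geq k+s+1=4$ yields the equality $S_{n}=S_{n,\Lambda}$ of the homogeneous and inhomogeneous linear EHP theories, while the second threshold $n\geq k+s+3=6$ yields the simultaneous vanishing of both action functionals. These are exactly the numerical bounds stated in the corollary.

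I do not foresee any real obstacle here, since the computational content has been front-loaded into Lemma \ref{sum_so} (which extracts the $(2,1)$-nil property from skew-symmetry of the underlying matrices) and into Theorem \ref{theorem_C1} (which converts any $(k,s)$-nil condition into a dimensional obstruction on the action). The only step requiring a minimal amount of care is noting that, among the allowed pairs $(k,s)$ with $k$ even and $s\geq 1$ that one could extract from Lemma \ref{sum_so}, the pair $(2,1)$ minimizes both $k+s+1$ and $k+s+3$ and therefore yields the sharpest, and hence the desired, bounds. With that optimization in place, the corollary is a direct read-off of Theorem \ref{theorem_C1}.
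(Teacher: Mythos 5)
Your proposal is correct and is essentially the paper's own reasoning: the paper gives no separate proof, deriving the corollary exactly by noting (via Lemma \ref{sum_so}) that such subalgebras are $(k,s)$-nil with $k$ even and $s\geq 1$, and then reading off Theorem \ref{theorem_C1} at the optimal pair $(k,s)=(2,1)$, giving the thresholds $n\geq 4$ and $n\geq 6$.
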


\subsection{Gravity Arising From Extensions \label{extended_linear_gravity}}

Now, let us move on to a slightly more abstract situation: when $G$
is not a linear group, but a splitting extension of a given linear
group $H$, i.e, we will deal with group reductions $H\hookrightarrow\mathbb{R}^{k}\rtimes H$.
First of all, notice that if $P\rightarrow M$ is a bundle structured
over $H\subset GL(k;\mathbb{R})$, then we can always extend its group
structure to $\mathbb{R}^{k}\rtimes H$. Indeed, up to isomorphisms
this bundle is classified by a map $f:M\rightarrow BH$ and getting
a group extension is equivalent to lifting $f$ as shown below. It
happens that this lifting actually exists, since we are working with
splitting extensions (second diagram). It is in this context that
we will now internalize EHP theories.$$
\xymatrix{ & \mathbb{R}^{k}\rtimes H\ar[d]^{\jmath} &  &  & \mathbb{R}^{k}\rtimes H\ar[d]^{\jmath}\\M\ar@{-->}[ru]\ar[r]_{f} & H &  & \ar@{-->}[ru]^{s\circ f}M\ar[r]_{f} & H\ar@/_{0.5cm}/[u]_{s}}
$$

Thus, given a linear group $H$, consider bundles $P\rightarrow M$
endowed with group reductions $H\hookrightarrow\mathbb{R}^{k}\rtimes H$.
As previously, fixed $k>0$ and $\Lambda\in\mathbb{R}$ we define\emph{
homogeneous} and \emph{inhomogeneous extended-linear Hilbert-Palatini
form} of reductive Cartan connections $\nabla=e+\omega$ in $P$ as
\begin{equation}
\alpha_{k}=e\curlywedge_{\rtimes}...\curlywedge_{\rtimes}e\curlywedge_{\rtimes}\Omega\quad\text{and}\quad\alpha_{k,\Lambda}=\alpha_{k}+\frac{\Lambda}{(k-1)!}e\curlywedge_{\rtimes}...\curlywedge_{\rtimes}e.\label{hilbert_palatini_forms-1}
\end{equation}

Furthermore, fixed a FABS, the corresponding \emph{extended-linear
EHP theories} are defined by (\ref{EHP_action-2}). At first sight,
the only difference between the ``extended-linear'' and the ``linear''
theories is that we now consider the induced product $\curlywedge_{\rtimes}$
instead of $\curlywedge$. However, in the context of Geometric Obstruction
Theory, this replacement makes a fundamental difference. For instance,
the condition (C1) used to get Theorem \ref{theorem_C1} no longer
makes sense, because $H\hookrightarrow\mathbb{R}^{k}\rtimes H$ generally
is\textbf{ not }a normal subgroup. Immediate substitutes for (C1)
and (C2) are
\begin{enumerate}
\item[(C1')] as an associative algebra $(\mathfrak{h},\curlywedge_{\rtimes})$
is $(k,s)$-nil; 
\item[(C2')] as an associative algebra, $(\mathbb{R}^{k}\rtimes\mathfrak{h},\curlywedge_{\rtimes})$
is $(k,s)$-nil. 
\end{enumerate}
But, differently from (C1) and (C2), these new conditions are intrinsically
related. It is obvious that (C2)' implies (C1)', because $\mathfrak{h}$
is a subalgebra of $\mathbb{R}^{k}\rtimes\mathfrak{h}$. The reciprocal
is also valid, as shown in the next lemma.
\begin{lem}
\label{pullback_lemma}Let $(A,*)$ be an algebra, $B$ a vector space
and $f:A\rightarrow B$ a linear map that admits a section $s:B\rightarrow A$.
In this case, if $(A,*)$ is $(k,s)$-nil, then $(B,*')$ is too,
where $*'$ is the pulled-back multiplication (\ref{pullback_product}). 
\end{lem}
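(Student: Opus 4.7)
The plan is to transport the $(k,s)$-nil property from $(A,*)$ to $(B,*')$ by using the section $s$ to lift $B$-valued forms to $A$-valued forms, applying the nilpotency in $A$, and pushing back down through $f$.

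Given any $B$-valued $k$-form $\beta$ on a smooth manifold $P$, the section $s$ produces an $A$-valued $k$-form $\tilde{\beta}:=s\circ\beta$. Unpacking the defining formula $b_1 *' b_2 = f(s(b_1)*s(b_2))$ of the pulled-back product and using the linearity of $f$ and $s$, a direct computation on the alternating sum defining $\wedge_{*'}$ yields the basic compatibility
$$
\beta_1\wedge_{*'}\beta_2 \;=\; f\circ\bigl(\tilde{\beta}_1\wedge_{*}\tilde{\beta}_2\bigr),
$$
valid for any two $B$-valued forms. This identity is the workhorse of the argument: it exchanges the wedge products $\wedge_{*'}$ and $\wedge_{*}$ at the cost of applying $s$ before and $f$ after.

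The next step is an induction on the number of factors, propagating the compatibility to an iterated $(s+1)$-fold product
$\beta^{\wedge_{*'}(s+1)} \;=\; f\circ\bigl(\tilde{\beta}^{\wedge_{*}(s+1)}\bigr)$.
Because $\tilde{\beta}$ is an $A$-valued $k$-form and $(A,*)$ is $(k,s)$-nil by hypothesis, $\tilde{\beta}^{\wedge_{*}(s+1)}=0$, so the right-hand side vanishes and hence so does the left-hand side. Since $\beta$ was arbitrary, $(B,*')$ is $(k,s)$-nil.

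The main obstacle is that naive iteration introduces an insertion of the idempotent $p:=s\circ f$ (the projection of $A$ onto the summand $s(B)$) at each intermediate stage, because $\widetilde{\beta_1\wedge_{*'}\beta_2}=p\circ(\tilde{\beta}_1\wedge_{*}\tilde{\beta}_2)$ rather than $\tilde{\beta}_1\wedge_{*}\tilde{\beta}_2$ itself. The cleanest way to absorb this is to reformulate through Proposition~\ref{PI}: the $(k,s)$-nil condition on $(A,*)$ is a polynomial identity on $A$ (after lifting it to an identity in $\Lambda(P;A)$), and this identity transfers through the injective section $s$ to the analogous polynomial identity on $(B,*')$. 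This polynomial-identity viewpoint sidesteps the potential non-associativity of $*'$ and the bookkeeping of the intermediate idempotents, and makes the induction step purely formal.
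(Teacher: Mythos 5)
Your first two paragraphs are exactly the paper's argument: the compatibility $\beta_1\wedge_{*'}\beta_2=f(\tilde\beta_1\wedge_*\tilde\beta_2)$ with $\tilde\beta_i=s\circ\beta_i$ is the content of the commutative diagram the paper draws, and iterating it is precisely what the paper does. The trouble is the third paragraph. You correctly observe that naive iteration inserts the idempotent $p=s\circ f$ at each intermediate stage, but your proposed repair does not close that gap: Proposition~\ref{PI} only relates polynomial identities of an algebra $X$ to polynomial identities of $\Lambda(P;X)$; it says nothing about transporting an identity from $(A,*)$ to $(B,*')$ along $s$, and $s$ is not in general an algebra morphism $(B,*')\rightarrow(A,*)$, since $s(b_1*'b_2)=p\bigl(s(b_1)*s(b_2)\bigr)$. ``The identity transfers through the injective section'' is therefore an assertion, not an argument. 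Indeed, under the reading you (reasonably, following the statement's wording) adopt --- $f\circ s=\operatorname{id}_B$, so that $p$ is a proper projection of $A$ onto $s(B)$ --- the induction genuinely fails: the $(k,s)$-nil hypothesis controls only powers of a single $k$-form, not products of the $\ker f$-component of an intermediate power with $\tilde\beta$, and one can cook up a $(2,2)$-nil algebra $A$ and a retraction $f$ for which $(B,*')$ is not nil at all.

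The gap evaporates once the hypothesis is read the way the paper actually uses it. In the construction (\ref{pullback_product}) and in the application (Theorem~\ref{theorem_C}, via the splitting extensions of Subsection~\ref{algebra_extensions}) the relevant pair is a projection $B\rightarrow A$ together with a splitting $A\rightarrow B$, so the composite that is the identity is $s\circ f=\operatorname{id}_A$, not $f\circ s=\operatorname{id}_B$ (the statement's use of ``section'' is backwards). With $s\circ f=\operatorname{id}_A$ the idempotent you worry about is the identity: $\widetilde{\beta_1\wedge_{*'}\beta_2}=(s\circ f)\bigl(\tilde\beta_1\wedge_*\tilde\beta_2\bigr)=\tilde\beta_1\wedge_*\tilde\beta_2$ on the nose; equivalently, $s$ \emph{is} a morphism $(B,*')\rightarrow(A,*)$ and is injective on $\operatorname{img}(f)$, where every $*'$-product takes its values. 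Your induction then goes through verbatim and your first two paragraphs already constitute the paper's proof. So: state the hypothesis as $s\circ f=\operatorname{id}_A$ and delete the appeal to Proposition~\ref{PI}.
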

\begin{proof}
Recall that any algebra $X$ induces a corresponding graded-algebra
structure in $\Lambda(P;X)$ according to (\ref{wedge_algebra}).
Due to the functoriality of $\Lambda(P;-)$, we then get the commutative
diagram below, where the horizontal rows are just (\ref{wedge_algebra})
for the algebras $(A,*)$ and $(B,*')$, composed with the diagonal
map. The commutativity of this diagram says just that $\wedge_{*'}^{2}\alpha=\wedge_{*}^{2}f(\alpha)$
for every $\alpha\in\Lambda(P;B)$. From the same construction we
get, for each given $s\geq2$, a commutative diagram that implies
$\wedge_{*'}^{s}\alpha=\wedge_{*}^{s}f(\alpha)$. Therefore, if $(A,*)$
is $(k,s)$-nil it immediately follows that $(B,*')$ is $(k,s)$-nil
too.$$
\xymatrix{\ar[d]<0.2cm>^{\Lambda f}\Lambda(P;B)\ar[r]^-{\Delta} & \ar[d]<0.2cm>^{(\Lambda f)\otimes(\Lambda f)}\Lambda(P;B)\otimes\Lambda(P;B)\ar[r] & \ar[d]<0.2cm>^{\Lambda(f\otimes f)}\Lambda(P;B\otimes B)\ar[r] & \ar[d]<0.2cm>^{\Lambda f}\Lambda(P;B)\\\ar[u]<0.2cm>^{\Lambda s}\Lambda(P;A)\ar[r]^-{\Delta} & \Lambda(P;A)\otimes\Lambda(P;A)\ar[r]\ar[u]<0.2cm>^{(\Lambda s)\otimes(\Lambda s)} & \Lambda(P;A\otimes A)\ar[r]\ar[u]<0.2cm>^{\Lambda(s\otimes s)} & \ar[u]<0.2cm>^{\Lambda s}\Lambda(P;A)}
$$
\end{proof}
As a consequence of the lemma above, the version of Theorem (\ref{theorem_C1})
for extended-linear EHP is the following:
\begin{thm}
\label{theorem_C} Let $M$ be an $n$-dimensional spacetime and $P\rightarrow M$
be a $H$-bundle, where $H$ is a linear group such that $(\mathfrak{h},\curlywedge)$
is $(k,s)$-nil. If $n\geq k+s+1$, then the inhomogeneous extended-linear
EHP theory equals the homogeneous ones. If $n\geq k+s+3$, then both
are trivial.
\end{thm}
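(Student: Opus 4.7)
The plan is to reduce this to Theorem~\ref{theorem_C1} by transferring the $(k,s)$-nil hypothesis from $\mathfrak{h}$ to the semidirect extension $\mathbb{R}^{\ell}\rtimes\mathfrak{h}$ (where $\ell$ is the translational dimension of the splitting $H\hookrightarrow\mathbb{R}^{\ell}\rtimes H$ underlying the extended-linear EHP theory). The bridge is Lemma~\ref{pullback_lemma}: by the construction of Subsection~\ref{algebra_extensions}, the product $\curlywedge_{\rtimes}$ on $\mathbb{R}^{\ell}\rtimes\mathfrak{h}$ is precisely the multiplication pulled back from $\curlywedge$ on $\mathfrak{h}$ along the split pair $\mathfrak{h}\hookrightarrow\mathbb{R}^{\ell}\rtimes\mathfrak{h}\twoheadrightarrow\mathfrak{h}$, so the lemma applies and gives at once that $(\mathbb{R}^{\ell}\rtimes\mathfrak{h},\curlywedge_{\rtimes})$ is itself $(k,s)$-nil. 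This is exactly the condition (C2') of the present subsection.

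With (C2') in place, I would rerun the proof of Theorem~\ref{theorem_C1} essentially verbatim, with $\curlywedge$ replaced everywhere by $\curlywedge_{\rtimes}$ and $\mathfrak{g}$ replaced by $\mathbb{R}^{\ell}\rtimes\mathfrak{h}$. The tetrad $e$ of a reductive Cartan connection $\nabla=e+\omega$ is a $1$-form with values in $(\mathbb{R}^{\ell}\rtimes\mathfrak{h})/\mathfrak{h}\simeq\mathbb{R}^{\ell}$, which embeds into the extension via the section; hence $\curlywedge_{\rtimes}^{k}e$ is a $k$-form valued in $\mathbb{R}^{\ell}\rtimes\mathfrak{h}$, and $(k,s)$-nilpotency yields
\[
(\curlywedge_{\rtimes}^{k+s+1}e)\curlywedge_{\rtimes}\alpha=0
\]
for every $\mathbb{R}^{\ell}\rtimes\mathfrak{h}$-valued form $\alpha$. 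Specializing $\alpha$ first to a suitable power of $e$ scaled by $\tfrac{\Lambda}{(n-1)!}$ and then to a suitable power of $e$ wedged with $\Omega$ — exactly as in the proof of Theorem~\ref{theorem_C1} — kills the cosmological term as soon as $n\geq k+s+1$ and kills the full Hilbert-Palatini $n$-form $\alpha_{n}$ as soon as $n\geq k+s+3$. Applying $\jmath$, taking the trace, and integrating turn these pointwise vanishings into $S_{n}=S_{n,\Lambda}$ and $S_{n}=0=S_{n,\Lambda}$, respectively.

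The main obstacle lies essentially only in Step~1: one has to check carefully that Lemma~\ref{pullback_lemma} applies to the splitting in the form required, in particular that the direction of the section matches the lemma's hypothesis and that the pulled-back product remains associative (this is needed to rearrange the iterated $\curlywedge_{\rtimes}$-products in Step~2 into the combinatorial shape used in Theorem~\ref{theorem_C1}). Associativity comes for free from Proposition~\ref{PI}: matrix multiplication is associative and associativity is a polynomial identity, hence preserved under the pullback construction. Once Step~1 is secured, the combinatorial core of Theorem~\ref{theorem_C1} transfers word for word, which is why the dimension thresholds $k+s+1$ and $k+s+3$ are identical in the linear and extended-linear settings.
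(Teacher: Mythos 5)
Your proposal is correct and takes essentially the same route as the paper: Lemma \ref{pullback_lemma} upgrades the $(k,s)$-nil hypothesis on $(\mathfrak{h},\curlywedge)$ to condition (C2') on $(\mathbb{R}^{\ell}\rtimes\mathfrak{h},\curlywedge_{\rtimes})$, after which the (C2)-case argument of Theorem \ref{theorem_C1} applies verbatim. (Only a cosmetic slip: the tetrad $e$ lands in $\mathbb{R}^{\ell}\rtimes\mathfrak{h}$ via the kernel inclusion of $\mathbb{R}^{\ell}$, not via the section, but this is immaterial since the whole extension carries the $(k,s)$-nil product.)
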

\begin{proof}
From the last lemma we can assume that $(\mathfrak{g},\curlywedge_{\rtimes})$
with $\mathfrak{g}=\mathbb{R}^{l}\rtimes\mathfrak{h}$ is $(k,s)$-nil.
An argument similar to that of the (C2)-case in Theorem \ref{theorem_C1}
gives the result. 
\end{proof}
\begin{cor}
\label{corollary_C} In a spacetime of dimension $n\geq4$, the cosmological
constant plays no role in any extended-linear EHP theory with $\mathfrak{h}\subset\mathfrak{so}(k_{1})\oplus...\oplus\mathfrak{so}(k_{r})$.
If $n\geq6$ the full theory is trivial. 
\end{cor}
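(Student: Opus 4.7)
The plan is to deduce Corollary \ref{corollary_C} as an immediate specialization of Theorem \ref{theorem_C}, once the class of subalgebras $\mathfrak{h}\subset\mathfrak{so}(k_{1})\oplus\cdots\oplus\mathfrak{so}(k_{r})$ is recognized as $(k,s)$-nil for a suitable minimal pair $(k,s)$. The key bridge is Lemma \ref{sum_so}: for any such $\mathfrak{h}$, every even-degree $\mathfrak{h}$-valued form $\alpha$ satisfies $\alpha\curlywedge\alpha=0$. In the terminology of Subsection \ref{nil_algebras}, this says exactly that the associative matrix algebra $(\mathfrak{h},\curlywedge)$ is $(k,1)$-nil for every even $k$.

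First I would choose the parameters that minimize $k+s$ within the allowed range, namely $k=2$ and $s=1$, since a $\mathfrak{h}$-valued $2$-form is of even degree and Lemma \ref{sum_so} directly yields $\curlywedge^{s+1}\alpha=\alpha\curlywedge\alpha=0$. With this choice the two thresholds appearing in Theorem \ref{theorem_C} become $k+s+1=4$ and $k+s+3=6$, which are precisely the dimensions stated in the corollary.

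Then I would simply invoke Theorem \ref{theorem_C}: for $n\geq 4$ the inhomogeneous part of $\alpha_{n,\Lambda}$ vanishes identically (by the argument in the proof of that theorem, specialized to $k=2$, $s=1$), so the homogeneous and inhomogeneous extended-linear EHP actions coincide, while for $n\geq 6$ even the homogeneous form $\alpha_{n}$ vanishes, making both theories trivial.

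No serious obstacle is expected: the entire content of the corollary is a clean numerical specialization, with Lemma \ref{sum_so} supplying the $(2,1)$-nil certificate for the hypothesis of Theorem \ref{theorem_C}. The only point demanding attention is the verification that the lemma's hypothesis (subalgebra of a direct sum of $\mathfrak{so}(k_i)$'s) is indeed equivalent, via Proposition \ref{PI} applied to $\mathfrak{h}$, to the $(2,1)$-nil condition required by Theorem \ref{theorem_C}, but this is transparent since both statements are the same polynomial identity $\alpha\curlywedge\alpha=0$ on even-degree forms.
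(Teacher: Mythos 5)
Your proposal is correct and follows exactly the route the paper intends: Lemma \ref{sum_so} certifies that $(\mathfrak{h},\curlywedge)$ is $(2,1)$-nil (even-degree forms square to zero under $\curlywedge$), and Theorem \ref{theorem_C} with $(k,s)=(2,1)$ then yields the thresholds $k+s+1=4$ and $k+s+3=6$. This matches the paper's own (implicit) derivation, which mirrors the passage from Lemma \ref{sum_so} to Corollary \ref{corollary_C1}.
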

\begin{rem}
This simple result will be our primary source of examples of geometric
obstructions. This is due to the fact that essentially all ``classical''
geometry satisfies the hypothesis of the last corollary, implying
that EHP cannot be realized in them in higher dimensions.
\end{rem}
\begin{rem}
\label{canonical_theory} For future reference, notice that any \emph{linear
EHP theory induces, in a canonical way, a corresponding extended linear
EHP theory}. Indeed, if $G\subset GL(k;\mathbb{R})$ is a linear Lie
group, then any subgroup $H\subset G$ inherits a canonical action
on $\mathbb{R}^{k}$, allowing us to consider the semidirect product
$\mathbb{R}^{k}\rtimes H$ and then $H$ as a subgroup of it. So,
if we start with a linear theory for the inclusion $H\hookrightarrow G$
we can always move to a a extended-linear theory for the inclusion
$H\hookrightarrow\mathbb{R}^{k}\rtimes H$. We will refer to this
induced theory as the \emph{canonical extended-linear theory associated}
to a given linear theory.
\end{rem}

\subsection{Gauged Gravity \label{gauge_gravity}}

In the last subsection we defined EHP theories in linear geometries
$H\hookrightarrow G$ and in extended-linear geometries $H\hookrightarrow\mathbb{R}^{k}\rtimes H$
as direct analogues of the EHP action functional. This means that
we considered theories on (reductive) Cartan connections for the given
group reduction, which are pairs $\nabla=e+\omega$, where $\omega$
is an usual $H$-connection and $e$ is a pointwise isomorphism.

In order to regard EHP theories as genuine (reductive) gauge theories,
it is necessary to work with arbitrary (i.e, not necessarily Cartan)
reductive connections. In practice, this can be obtained by just forgetting
the hypothesis of pointwise isomorphism in $e$. This leads us to
define \emph{homogeneous }and \emph{inhomogeneous gauge linear/extended-linear
Hilbert-Palatini forms }of a reductive (not necessarily Cartan) connection
$\nabla=e+\omega$ as in (\ref{hilbert_palatini_forms}). Similarly,
we can then define \emph{homogeneous }and \emph{inhomogeneous gauge
linear/extended-linear EHP theories} as in (\ref{EHP_action-2}),
whose configuration space is this new space of arbitrary reductive
connections.

A linear map may not be an isomorphism when it is not injective or/and
when it is not surjective. We would like that gauge EHP theories have
a nice physical interpretation, leading us to ask: \emph{what is the
physical motivation for $e$ be injective or/and surjective?}

At least for geometries described by $G$-structures (which in our
context means extend-linear geometries), the quotient $\mathfrak{g}/\mathfrak{h}$
can be identified with some $\mathbb{R}^{l}$ endowed with a tensor
$t$ such that the pair $(\mathbb{R}^{l},t)$ is the ``canonical
model'' for the underlying geometry. In such cases, the injectivity
hypothesis on $e_{a}:TP_{a}\rightarrow\mathfrak{g}/\mathfrak{h}$
is important to ensure that we have a good way to pull-back this ``canonical
geometric model'' to each fiber of $TP$. For instance, when $G=\mathbb{R}^{l}\rtimes O(n-1,1)$
and $H=O(n-1,1)$, the quotient $\mathfrak{g}/\mathfrak{h}$ is just
Minkowski space endowed with its standard metric $\eta$, and the
injectivity of $e$ implies that $g=e^{*}\eta$ is also a Lorentzian
metric $TP$. But, even if we forget injectivity, the tensor $g=e^{*}\eta$
\textbf{remains} well-defined, but now it is no longer non-degenerate.
Furthermore, for some models of Quantum Gravity, this non-degeneracy
is welcome (say to ensure the possibility of topology change \cite{quantum_gravity_1,quantum_gravity_2}),
which means that \emph{physically it is really interesting to consider
reductive connections $\xi=\omega+e$ such that $e$ is pointwise
non-injective}. With this in mind, we will do gauge EHP theories precisely
for this kind of reductive connections.

On the other hand, in the context of Geometric Obstruction Theory,
it is natural to expect that working with gauge EHP theories such
that $e$ in non-injective will impact the theory more than usual.
But, how deep will be this impact? Notice that the proofs of Theorem
\ref{theorem_C1} and Theorem \ref{theorem_C} were totally based
on the fact that in each point $e$ take values in some $(k,s)$-nil
algebra. In general, as discussed in Subsection \ref{solv_algebras},
being $(k,s)$-solv is weaker than being $(k,s)$-nil. But, if we
are in the class of injective forms, Proposition \ref{solv_is_nil}
shows that both concepts agree. Therefore, when working with Cartan
connections, no new results can be obtained if we replace ``$(k,s)$-nil''
with ``$(k,s)$-solv''. However, forgetting injectivity we may build
stronger obstructions.

In order to get these stronger obstructions, let us start by recalling
that until this moment we worked with the class of \textbf{reductive}
connections. These are $\mathfrak{g}$-valued $1$-forms $\xi$ which,
respectively to the vector space decomposition $\mathfrak{g}\simeq\mathfrak{g}/\mathfrak{h}\oplus\mathfrak{h}$,
can be globally written as $\xi=e+\omega$. This notion of ``reducibility''
extends naturally to $k$-forms with values in $A$, endowed with
some vector space decomposition $A\simeq A_{s}\oplus...\oplus A_{0}$.
Indeed, we say that an $A$-valued $k$-form $\alpha$ in a smooth
manifold $P$ is \emph{reductive }(or \emph{decomposable}) respectively
to the given vector space decomposition if it can be globally written
as $\alpha=\alpha_{s}+...+\alpha_{0}$. We can immediately see that
when $P$ is parallelizable every $A$-valued $k$-form is reductive
respectively to an arbitrarily given decomposition of $A$. It then
follows that \emph{in an arbitrary manifold algebra-valued $k$-forms
are locally reductive}.

After this digression, we state and proof the obstruction result for
gauge linear/extended-linear EHP theories. Here, once fixed a group
reduction $H\hookrightarrow G$, instead of (C1) and (C2) (which are
nilpotency conditions) we will consider the following analogous solvability
conditions:
\begin{enumerate}
\item[(S0)] the group $G$ is isomorphic to $\mathbb{R}^{l}\rtimes H$ and $(\mathfrak{h},\curlywedge)$
is $(k,s)$-solv; 
\item[(S1)] the subgroup $H\subset G$ is normal and, as a matrix algebra, $(\mathfrak{g},\curlywedge)$
is a splitting extension by $\mathfrak{h}$ of a $(k,s)$-solv algebra
$\mathfrak{r}$;
\item[(S2)] the full algebra $(\mathfrak{g},\curlywedge)$ is a $(k,s)$-solv
algebra.
\end{enumerate}
\begin{thm}
\label{theorem_S1}Let $M$ be an $n$-dimensional spacetime and $P\rightarrow M$
be a $G$-bundle, endowed with a group reduction $H\hookrightarrow G$
fulfilling \emph{(S0)}, \emph{(S1)} or \emph{(S2)}. If $n\geq k+s+1$,
then the inhomogeneous gauge EHP theory equals the homogeneous ones.
If $n\geq k+s+3$, then both are trivial.
\end{thm}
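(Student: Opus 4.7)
The plan is to follow exactly the same skeleton as the proofs of Theorems \ref{theorem_C1} and \ref{theorem_C}, reducing the entire statement to one pointwise algebraic identity, and then to upgrade the latter from a ``nil'' to a ``solv'' input via a local reductivity argument. Concretely, once one knows that $\curlywedge_{\rtimes}^{\,k+s+1}e\equiv 0$ as a $\mathfrak{g}$-valued form, all the subsequent manipulations (multiplying by $e$'s or by $\Omega$, applying $\jmath$, tracing and integrating) are identical to those in the proof of Theorem \ref{theorem_C1} and give $S_n = S_{n,\Lambda}$ for $n\geq k+s+1$ and triviality for $n\geq k+s+3$. Thus the whole problem reduces to establishing this vanishing under (S0), (S1), (S2).

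First I would identify the $(k,s)$-solv vector subspace $V\subset\mathfrak{g}$ where $e$ takes values. Under (S2) this is $V=\mathfrak{g}/\mathfrak{h}$ included in $\mathfrak{g}$ via the reductive splitting. Under (S1) the splitting of the extension identifies $\mathfrak{g}/\mathfrak{h}\simeq\mathfrak{r}$, and $\mathfrak{r}$ is $(k,s)$-solv by hypothesis. Under (S0) the target of $e$ is the abelian complement $\mathbb{R}^l\subset\mathbb{R}^l\rtimes\mathfrak{h}$, which is trivially $(k,s)$-solv; the nontrivial solvability hypothesis on $\mathfrak{h}$ is transferred to the ambient product $\curlywedge_{\rtimes}$ on $\mathfrak{g}$ via Lemma \ref{pullback_lemma} applied to the projection $\mathfrak{g}\twoheadrightarrow\mathfrak{h}$.

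Next I would exploit the $(k,s)$-solv decomposition $V\simeq V_s\oplus\cdots\oplus V_0$ with each $V_i$ being $(k,s)$-nil as a subspace of $\mathfrak{g}$ (in the sense of Remark \ref{solv_modules}). Since $e$ need not be pointwise injective, Proposition \ref{solv_is_nil} does not apply directly; instead, invoking the local reductivity observation made just above the theorem, on any chart $U\subset M$ over which $\pi^{-1}(U)$ is parallelizable I can decompose $e|_{\pi^{-1}(U)} = e_s+\cdots+e_0$ with $e_i$ a $V_i$-valued $1$-form. Expanding $\curlywedge_{\rtimes}^{\,k+s+1}e|_{\pi^{-1}(U)}$ by bilinearity yields a sum of monomials $e_{i_1}\curlywedge_{\rtimes}\cdots\curlywedge_{\rtimes}e_{i_{k+s+1}}$; a pigeonhole count forces some index $i$ to appear at least $s+1$ times in each monomial, and the $(k,s)$-nil property of $V_i$ kills the corresponding cluster. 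Since the integrand of (\ref{EHP_action-2}) is a global top-form on $M$, these local vanishings patch to the global identity needed.

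The main obstacle is the regrouping step. Unlike $[\wedge]$, the product $\curlywedge_{\rtimes}$ is \emph{not} graded skew-commutative in general (this is exactly the point of (\ref{relation_wedges_matrices})), so transporting the $s+1$ repeated factors of index $i$ adjacent to one another is not sign-free, and one cannot naively reduce a mixed monomial to a pure power of $e_i$. The right way to handle this, I think, is to bypass commuting factors altogether and work inside the \emph{weak $(k,s)$-nilpotent} framework of Remark \ref{solv_modules}: that notion is formulated precisely in terms of arbitrary polynomials $p_{s+1}(\alpha^{m_1},\ldots,\alpha^{m_{s+1}})$ in the graded components, so each mixed monomial is annihilated in its original ordering without any reshuffling. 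This reduces the whole argument to verifying that the decompositions produced in each of (S0), (S1), (S2) are not only $(k,s)$-solv but weak $(k,s)$-solvable — a verification that follows the same pattern as the solvable-to-nil reduction used in Proposition \ref{solv_is_nil}.
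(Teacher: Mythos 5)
Your reduction of the theorem to the local vanishing $\curlywedge_{\rtimes}^{\,k+s+1}e=0$, your treatment of (S0) via Lemma \ref{pullback_lemma}, and your use of the local-reductivity remark to write $e=e_{s}+\cdots+e_{0}$ all coincide with the paper's proof. The divergence, and the gap, lies in how the expanded power is killed. The paper does not keep the mixed monomials at all: it asserts that, since for $i\neq j$ the components $e_{i}$ and $e_{j}$ take values in complementary summands of the solv decomposition, only pure powers survive, i.e.\ $\curlywedge^{r}e=\curlywedge^{r}e_{s}+\cdots+\curlywedge^{r}e_{0}$, and then each pure power vanishes because $e_{i}$ is valued in a $(k,s)$-nil algebra -- the same mechanism as in the proposition on solvable algebras, with injectivity replaced by the local decomposition, after which the argument literally continues as in Theorem \ref{theorem_C1}. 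You instead retain all mixed monomials, and your two devices for killing them both fail. The pigeonhole count is wrong in general: with the $s+1$ summands $V_{0},\dots,V_{s}$ and $k+s+1$ factors, some index is forced to appear $s+1$ times only when $k+s+1>s(s+1)$, i.e.\ roughly $k\geq s^{2}$ (already for $s=2$, $k=1$ the monomial $e_{0}\curlywedge_{\rtimes}e_{1}\curlywedge_{\rtimes}e_{2}\curlywedge_{\rtimes}e_{0}$ escapes); and even where the count works, the repeated factors are scattered $1$-forms, whereas the $(k,s)$-nil identity concerns powers of a single $k$-form, so adjacency is not the only missing ingredient.

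Your proposed repair -- passing to the weak $(k,s)$-nilpotent/solvable framework of Remark \ref{solv_modules} -- is not available under (S0)--(S2). Weak solvability is precisely the nilpotency-type strengthening (vanishing of arbitrary mixed monomials) of the nil-type condition that $(k,s)$-solv provides, and the paper stresses that nilpotency-type conditions are strictly stronger than nil ones; this is exactly why weak solvability enters as an extra hypothesis, (G1)/(G2), in the graded Theorem \ref{theorem_G1} instead of being deduced. Nor can Proposition \ref{solv_is_nil} supply the verification you invoke: its mechanism is pointwise injectivity, which is the very hypothesis dropped in the gauge setting. So as written your argument does not close; to match the paper you must either justify (or assume, as the paper explicitly asserts) that the cross terms $e_{i}\curlywedge_{\rtimes}e_{j}$ with $i\neq j$ drop out of the expansion, or genuinely strengthen the algebraic hypothesis to a weak-solvability one.
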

\begin{proof}
By definition, a $(k,s)$-solv algebra $A$ comes endowed with a decomposition
$A\simeq A_{s}\oplus...\oplus A_{0}$ by $(k,s)$-nil algebras $A_{i}$.
Therefore, under condition (S1) if, $e$ is an $\mathfrak{g}/\mathfrak{h}$-valued
$1$-form, where $\mathfrak{g}/\mathfrak{h}$ is $(k,s)$-solv, the
digression above tells us that it can be locally written as $e=e_{s}+...+e_{0}$.
Furthermore, we have 
\[
\curlywedge^{r}e=\curlywedge^{r}e_{s}+...+\curlywedge^{r}e_{0},
\]
because for $i\neq j$ the forms $e_{i}$ and $e_{j}$ are non-zero
in different spaces. Since each $e_{i}$ take values in a $(k,s)$-nil
algebra, it then follows that $\curlywedge^{k+s+1}e=0$ locally, which
implies $\curlywedge^{k+s+1}e\curlywedge\alpha=0$ for every $\mathfrak{g}$-valued
form $\alpha$. The remaining part of the proof is identical to that
of Theorem \ref{theorem_C1}, starting after equation (\ref{obstruction_1}).
Case (S2) is analogous to (S1) and (after using Lemma \ref{pullback_lemma})
case (S0) is analogous to (S2).
\end{proof}

\subsection{Dual Gravity \label{dual_gravity}}

$\quad\;\,$Here we will see that there are two ``dual theories''
associated to any EHP theory and we will see how the geometric obstructions
of the actual EHP theory relate to the obstructions affecting these
new theories.

We start by recalling that the fields in a gauge EHP theory (being
it linear or extended-linear) are reductive connections for a given
reduction $H\hookrightarrow G$. As a vector space we always have
$\mathfrak{g}\simeq\mathfrak{h}\oplus\mathfrak{g}/\mathfrak{h}$ and
these connections are given by $1$-forms $e:TP\rightarrow\mathfrak{g}/\mathfrak{h}$
and $\omega:TP\rightarrow\mathfrak{h}$ such that $\omega$ is a connection.
We can then think of a EHP theory as being determined by four variables:
the fields $e$ and $\omega$ together with the spaces $\mathfrak{g}/\mathfrak{h}$
and $\mathfrak{h}$. This allows us to define two types of \emph{``dual
EHP theories'' }by interchanging such variables.

More precisely, we define the \emph{geometric dual} of a given EHP
theory as that theory having the same space of configurations, but
whose action functional $^{*}S_{n,\Lambda}$ is $^{*}S_{n,\Lambda}[e,\omega]=S_{n,\Lambda}[\omega,e].$
Explicitly, for any fixed FABS we have 
\begin{equation}
^{*}S_{n,\Lambda}[e,\omega]=\int_{M}\operatorname{tr}(\jmath(^{*}\alpha_{n,\Lambda})),\quad\text{with}\quad{}^{*}\alpha_{n,\Lambda}=\curlywedge_{*}^{n-2}\omega\curlywedge_{*}E+\frac{\Lambda}{(n-1)!}\curlywedge_{*}^{n}\omega,\label{geometric_dual}
\end{equation}
where $E=de+e\curlywedge_{*}e$ and $\curlywedge_{*}$ must be interpreted
as $\curlywedge$ or $\curlywedge_{\rtimes}$ depending if the starting
theory is linear or extended-linear.

On the other hand, we define the \emph{algebraic dual} of a EHP theory
as that theory having an action functional with the same shape, but
now defined in a ``dual configuration space''. This is the theory
whose action function $S_{n,\Lambda}^{*}$ is given by 
\begin{equation}
S_{n,\Lambda}^{*}[e,\omega]=\int_{M}\operatorname{tr}(\jmath(\alpha_{n,\Lambda}^{*})),\quad\text{with}\quad\alpha_{n,\Lambda}^{*}=\curlywedge_{*}^{n-2}e\curlywedge_{*}\Omega+\frac{\Lambda}{(n-1)!}\curlywedge_{*}^{n}e,\label{algebraic_dual}
\end{equation}
where $e$ and $\omega$ take value in $\mathfrak{h}$ and $\mathfrak{g}/\mathfrak{h}$,
respectively. Furthermore, as above, $\curlywedge_{*}$ must be interpreted
as $\curlywedge$ or $\curlywedge_{\rtimes}$, depending of the case.

In very few words we can say that the \emph{geometric dual} of a given
theory makes a change at the \emph{dynamical level}, in the sense
that only the \emph{action funcional} is changed. Dually, the \emph{algebraic
dual} of a given theory produce changes only at the \emph{kinematic
level}, meaning that the dualization refers to the \emph{configuration
space}. This can be summarized in the table below. 
\begin{table}[H]
\begin{centering}
\begin{tabular}{|c|c|c|c|c|}
\hline 
$\operatorname{theories}/\text{variables}$  & $e$  & $\omega$  & $\mathfrak{g}/\mathfrak{h}$  & $\mathfrak{h}$\tabularnewline
\hline 
$\operatorname{EHP}$  & $e$  & $\omega$  & $\mathfrak{g}/\mathfrak{h}$  & $\mathfrak{h}$\tabularnewline
\hline 
$^{*}\operatorname{EHP}\;\,$  & $\omega$  & $e$  & $\mathfrak{g}/\mathfrak{h}$  & $\mathfrak{h}$\tabularnewline
\hline 
$\,\,\operatorname{EHP}^{*}$  & $e$  & $\omega$  & $\mathfrak{h}$  & $\mathfrak{g}/\mathfrak{h}$\tabularnewline
\hline 
\end{tabular}
\par\end{centering}
\caption{\label{table_dual}Relation between EHP theories and its geometric
and algebraic dualizations.}
\end{table}

Once such dual theories are introduced, let us now see how the obstruction
results for EHP theories apply to them. Let us start by focusing on
the geometric dual. Our main result until now is Theorem \ref{theorem_S1}.
It is essentially determined by two facts. First, $e$ take values
in an algebra fulfilling some ``solvability condition'' (i.e, any
one of hypotheses (S0), (S1) or (S2)); and second, the EHP action
contains powers of $e$.

When we look at the action (\ref{geometric_dual}) of the geometric
dual theory we see that it contains powers of $\omega$ (instead of
$e$). Therefore, assuming that $\omega$ take values in some ``solvable''
algebra we will get a result analogous to Theorem \ref{theorem_S1}.
Looking at Table \ref{table_dual} we identify that in the geometric
dual theory $\omega$ takes values in $\mathfrak{h}$ which is a subalgebra
of $\mathfrak{g}$, where the product $\curlywedge$ makes sense,
so that the ``solvability condition'' should be on $\mathfrak{h}$.
Dually, the action (\ref{algebraic_dual}) also contains powers, now
of $e$. Table \ref{table_dual} shows that here $e$ take values
in $\mathfrak{h}$, so that\emph{ if $(\mathfrak{h},\curlywedge)$
is ``solvable'', then not only the geometric dual theory is trivial,
but also the algebraic dual one!} Formally, we have the following
result, whose proof follows that of Theorem \ref{theorem_S1}
\begin{thm}
\label{theorem_duals}Let $M$ be an $n$-dimensional spacetime and
$P\rightarrow M$ be a $G$-bundle, endowed with a group reduction
$H\hookrightarrow G$. Assume that $(\mathfrak{h},\curlywedge)$ is
a $(k,s)$-solv algebra. If $n\geq k+s+3$, then both the geometric
dual and the algebraic dual of gauge EHP theory are trivial.
\end{thm}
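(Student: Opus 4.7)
The plan is to reduce this theorem to an essentially verbatim application of the argument used in Theorem \ref{theorem_S1}. The crucial observation, drawn from Table \ref{table_dual}, is that in both dual theories the variable that appears with high powers in the action takes values in $\mathfrak{h}$: it is $\omega$ in the geometric dual, where $^{*}\alpha_{n,\Lambda}$ contains $\curlywedge_{*}^{n-2}\omega\curlywedge_{*}E$ and $\curlywedge_{*}^{n}\omega$; and it is $e$ in the algebraic dual, where $\alpha_{n,\Lambda}^{*}$ contains $\curlywedge_{*}^{n-2}e\curlywedge_{*}\Omega$ and $\curlywedge_{*}^{n}e$. The hypothesis that $(\mathfrak{h},\curlywedge)$ is $(k,s)$-solv thus plays exactly the role that condition (S2) played on $\mathfrak{g}/\mathfrak{h}$ in Theorem \ref{theorem_S1}.

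Concretely for the geometric dual, I would use the decomposition $\mathfrak{h}\simeq\mathfrak{h}_{s}\oplus\cdots\oplus\mathfrak{h}_{0}$ into $(k,s)$-nil summands supplied by the $(k,s)$-solv structure. Invoking the local-reductivity digression in Subsection \ref{gauge_gravity}, the $\mathfrak{h}$-valued $1$-form $\omega$ decomposes locally as $\omega=\omega_{s}+\cdots+\omega_{0}$ with $\omega_{i}$ taking values in $\mathfrak{h}_{i}$. Since the summands sit in linearly independent subspaces, $\curlywedge^{r}\omega=\sum_{i}\curlywedge^{r}\omega_{i}$, and the $(k,s)$-nil property of each $\mathfrak{h}_{i}$ then yields $\curlywedge^{k+s+1}\omega=0$ locally. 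When $n\geq k+s+3$ we have $n-2\geq k+s+1$, so both $\curlywedge^{n-2}\omega$ and $\curlywedge^{n}\omega$ vanish locally: the former kills the first summand of $^{*}\alpha_{n,\Lambda}$ via left-factoring against $E$, the latter kills the $\Lambda$-term directly. Hence $^{*}\alpha_{n,\Lambda}\equiv 0$ locally, and therefore globally after the integral, giving $^{*}S_{n,\Lambda}\equiv 0$.

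For the algebraic dual the argument is identical with $e$ in place of $\omega$: per the definition of the dual configuration space $e$ is now $\mathfrak{h}$-valued, so the same decomposition argument produces $\curlywedge^{k+s+1}e=0$ locally, and the two terms of $\alpha_{n,\Lambda}^{*}$ vanish under $n\geq k+s+3$. In both cases one must also check that $\curlywedge$ (or $\curlywedge_{\rtimes}$) restricts sensibly to $\mathfrak{h}$-valued forms, which is immediate since $\mathfrak{h}$ is a subalgebra of the ambient matrix algebra, and that the factors $E$ or $\Omega$ appear as right-multiplicands so the nilpotency of the $\omega$- or $e$-string propagates through them unchanged.

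The only genuine subtlety, and thus the main obstacle to guard against, is a book-keeping one: one must carefully track, via Table \ref{table_dual}, which variable plays the role of the high-power field and which algebra houses its values, because in the algebraic dual these assignments are swapped relative to the original EHP convention. Once this swap is respected, no new ideas beyond those of Theorem \ref{theorem_S1} are required, and the dimensional threshold $n\geq k+s+3$ emerges from exactly the same counting: $n-2$ factors of the $\mathfrak{h}$-valued field already exceed the local nilpotency index $k+s$.
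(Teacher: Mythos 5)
Your proposal is correct and matches the paper's own treatment: the paper simply notes that the proof "follows that of Theorem \ref{theorem_S1}", and your argument is exactly that — transferring the $(k,s)$-solv hypothesis to the $\mathfrak{h}$-valued field that appears in high powers ($\omega$ for the geometric dual, $e$ for the algebraic dual), decomposing it locally into $(k,s)$-nil pieces, and concluding $\curlywedge^{k+s+1}$ of that field vanishes, which kills both terms of the dual Hilbert--Palatini form once $n-2\geq k+s+1$.
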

\begin{cor}
\label{corollary_S0} If condition \emph{(S0)} is satisfied and $n\geq k+s+3$,
then gauge EHP theory and all its duals are trivial. 
\end{cor}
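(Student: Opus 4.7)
The plan is to observe that Corollary \ref{corollary_S0} is essentially a bookkeeping consequence of the two theorems immediately preceding it, namely Theorem \ref{theorem_S1} and Theorem \ref{theorem_duals}. Condition (S0) packages together two independent pieces of data: first, a splitting of the group as $G \simeq \mathbb{R}^l \rtimes H$; and second, the $(k,s)$-solvability of $(\mathfrak{h},\curlywedge)$. The key point I would emphasize is that these two pieces of data feed, respectively, the hypothesis of each theorem: the splitting yields the nilpotency needed for the EHP action itself (via Lemma \ref{pullback_lemma}), while the solvability of $\mathfrak{h}$ is exactly what triggers triviality of the dual actions.

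Concretely, I would proceed in two short steps. First, (S0) is literally one of the listed hypotheses of Theorem \ref{theorem_S1}, so under $n \geq k+s+3$ that theorem delivers triviality of both the homogeneous and inhomogeneous gauge EHP actions. Second, I would invoke Theorem \ref{theorem_duals} with its hypothesis: it requires only that $(\mathfrak{h},\curlywedge)$ be $(k,s)$-solv, which is the second clause of (S0). Hence in the same dimensional range both $^{*}S_{n,\Lambda}$ and $S_{n,\Lambda}^{*}$ vanish identically. Putting the three vanishings together gives the claim.

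The only substantive remark worth inserting is a sanity check on why (S0) is the right hypothesis to simultaneously trivialize EHP and its two duals: inspection of Table \ref{table_dual} shows that the three actions contain high powers of $e$, of $\omega$, and of $e$ respectively, with $e$ valued in $\mathfrak{g}/\mathfrak{h} \simeq \mathbb{R}^l$ and $\omega$ valued in $\mathfrak{h}$ in the original theory, while in the algebraic dual the roles of the target spaces are swapped. Under (S0), the quotient $\mathfrak{g}/\mathfrak{h} \simeq \mathbb{R}^l$ is abelian and hence $(k,s)$-solv for trivial reasons, while $\mathfrak{h}$ is $(k,s)$-solv by hypothesis, so every one of the three actions has all of its relevant wedge-powers forced to zero once $n$ is large enough.

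There is no real obstacle here: once Theorem \ref{theorem_S1} and Theorem \ref{theorem_duals} are in hand, the corollary is immediate. The only mild thing to verify is that the constants $(k,s)$ and the bound $n \geq k+s+3$ match between the two invocations, which they do because in both cases $e$ (respectively $\omega$) lives in an algebra with the same solvability parameters dictated by (S0).
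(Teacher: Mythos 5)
Your two-step argument---(S0) is literally one of the listed hypotheses of Theorem \ref{theorem_S1}, and its solvability clause on $(\mathfrak{h},\curlywedge)$ is precisely the hypothesis of Theorem \ref{theorem_duals}---is exactly how the corollary follows in the paper, which states it without further proof. One caution about your sanity-check paragraph: the claim that $\mathfrak{g}/\mathfrak{h}\simeq\mathbb{R}^{l}$ is ``abelian and hence $(k,s)$-solv'' is not what makes the EHP action itself vanish (the product $\curlywedge_{\rtimes}$ does not see the abelian structure of $\mathbb{R}^{l}$; taking it seriously would wrongly trivialize classical EHP, cf.\ Example \ref{example_mistake}); the vanishing instead comes from pulling back the $(k,s)$-solvability of $(\mathfrak{h},\curlywedge)$ along the splitting via Lemma \ref{pullback_lemma}, as in the (S0) case of Theorem \ref{theorem_S1}.
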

\begin{rem}
\label{no_S0_linear} In the \textbf{linear} context, where $G$ is
a linear Lie group and $H\subset G$ is a subgroup, Theorem \ref{theorem_duals}
above implies that if $(\mathfrak{h},\curlywedge)$ is $(k,s)$-solv,
then the duals of the actual EHP theory are trivial. This \textbf{does
not }mean that the actual EHP is trivial. In fact, Corollary \ref{corollary_S0}
needs condition (S0), which subsumes that we are working in the \textbf{extended-linear}
context.
\end{rem}

\subsection{The Role of Torsion \label{holonomy}}

$\quad\;\,$Until this moment we have given conditions under which
gauge EHP theories and their duals are trivial. One of this conditions
is (S0), which makes sense in the extended-linear context and states
that the subalgebra $(\mathfrak{h},\curlywedge)$ is $(k,s)$-solv.
Indeed, from Corollary \ref{corollary_S0}, if this condition holds
then EHP and their duals are all trivial for $n\geq k+s+3$. Due to
Lemma \ref{sum_so}, the standard examples of $(1,2)$-nil algebras
are subalgebras of $\mathfrak{so}(k_{1})\oplus...\oplus\mathfrak{so}(k_{r})$,
so that \emph{for such types of geometry EHP and their duals are trivial
if $n\geq6$}. Here we will see that if we work with a special class
of connections, then there are even more obstructions, due essentially
to Berger's classification theorem on Riemannian holonomy \cite{berger_original,berger_1}.

We start by recalling that if $\omega$ is a connection on a $G$-principal
bundle $\pi:P\rightarrow M$, each point $a\in P$ determines a Lie
subgroup $\operatorname{Hol}(\omega,a)\subset G$, called the \emph{holonomy
group} \emph{of $\omega$ in }$a$, which measures the failure of
a loop in $M$, based in $\pi(a)$, remaining a loop after horizontal
lifting \cite{kobayashi}. In typical situations, $P$ is the frame
bundle $FM$ of a $n$-manifold $M$, regarded as a $G$-structure
under some reduction $G\hookrightarrow GL(n;\mathbb{R})$. Such a
$G$-structure, in turn, generally is determined as the space of frame
transformations which preserve some additional tensor field $t$ in
$M$. In this context, the condition $\operatorname{Hol}(\omega,a)\subset G$
is equivalent to saying that there exists one such tensor field which
is parallel relative to $\omega$, i.e, such that $\nabla_{\omega}t=0$,
where $\nabla_{\omega}$ is the covariant derivative on tensors induced
by $\omega$. Particularly, $SO(n)$-reductions of $FM$ correspond
to Riemannian metrics on an oriented manifold, so that if $\omega$
is a connection on $FM$ such that $\operatorname{Hol}(\omega,a)\subset SO(n)$,
then $\nabla_{\omega}g=0$ for some Riemannian metric $g$. Consequently,
\emph{a torsion-free connection $\omega$ in $FM$ whose holonomy
is contained in $SO(n)$ is the Levi-Civita connection of some metric}.

Recall that, as pointed in Remark \ref{remark_sum_so}, for every
$k_{1}+...+k_{r}=n$ we have a canonical inclusion 
\begin{equation}
SO(k_{1})\times...\times SO(k_{r})\hookrightarrow SO(n)\label{de_Rham_reduction}
\end{equation}
as diagonal block matrices. Given a connection $\omega$ with special
Riemannian holonomy (meaning that it is contained in $SO(n)$) we
can ask: \emph{when is it indeed contained in the product subgroup
above?} From de Rham decomposition theorem \cite{de_Rham_original,de_Rham_thesis}
we see that if $\omega$ is torsion-free (and, therefore, the Levi-Civita
connection of a metric $g$) and $M$ is simply connected, this happens
iff $(M,g)$ is locally isometric\footnote{If $(M,g)$ is geodesically complete, the decomposition is global.}
to a product of Riemannian $k_{i}$-manifolds $(M_{i},g_{i})$, with
$i=1,...,r$, and $\operatorname{Hol}(\omega)$ is actually the product
of $\operatorname{Hol}(\omega_{i})\subset SO(k_{i})$, where $\omega_{i}$
is the Levi-Civita connection of $g_{i}$.

Assuming $(M,g)$ simply connected and locally irreducible in the
above sense, the holonomy reduction (\ref{de_Rham_reduction}) does
not exist. In this case, it is natural to ask for which proper subgroups
$G\hookrightarrow SO(n)$ the holonomy of the Levi-Civita connection
of $g$ can be reduced. When $(M,g)$ is not locally isometric to
a symmetric space, we have a complete classification of such proper
subgroups, given by \emph{Berger's classification theorem}, as in
Table \ref{berger}. 
\begin{table}[H]
\centering{}%
\begin{tabular}{|c|c|c|}
\hline 
$G\subset SO(n)$  & $\operatorname{dim}(M)$  & $\operatorname{nomenclature}$\tabularnewline
\hline 
\hline 
$U(n)$  & $2n$  & $\text{Kähler}$\tabularnewline
\hline 
$SU(n)$  & $2n$  & $\text{Calabi-Yau}$\tabularnewline
\hline 
$Sp(n)\cdot Sp(1)$  & $4n$  & $\text{Quaternionic-\text{Kähler}}$\tabularnewline
\hline 
$Sp(n)$  & $4n$  & $\text{Hyperkähler}$\tabularnewline
\hline 
$G_{2}$  & $7$  & $G_{2}$\tabularnewline
\hline 
$\operatorname{Spin}(7)$  & $8$  & $\operatorname{Spin}(7)$\tabularnewline
\hline 
\end{tabular}\caption{\label{berger}Berger's classification theorem for Riemannian signature.}
\end{table}

Let us see how we can use Berger's classification theorem to get geometric
obstructions similar to those given in Theorem \ref{theorem_duals}
and in Corollary \ref{corollary_S0}. We will need some definitions.
We say that a linear group $H$ is a \emph{$k$-group} if there are
nonnegative integers $k_{1},...,k_{r}$, with $k_{1}+...+k_{r}=k$,
such that $(\mathfrak{h},\curlywedge)$ is a subalgebra of $\mathfrak{so}(k_{1})\oplus...\oplus\mathfrak{so}(k_{r})$.
In other words, $k$-groups are fundamental examples of $(1,2)$-nil
algebras. Furthermore, we say that a manifold $N$ is a \emph{Berger
$k$-manifold }if it has dimension $k$ and it is simply connected,
locally irreducible and locally non-symmetric. A bundle $P\rightarrow M$
is called \emph{$k$-proper} if there exists an immersed Berger $k$-manifold
$N\hookrightarrow M$ whose frame bundle is a subbundle of $P$, i.e,
such that $FN\subset\imath^{*}P$. The motivating (trivial) examples
are the following:
\begin{example}
The frame bundle of a Berger $n$-manifold is, of course, $n$-proper. 
\begin{example}
If $f:N\hookrightarrow M$ an immersion with $N$ a Berger $k$-manifold,
then $FM$ is automatically $k$-proper, because $FN\subset f^{*}FM$.
\end{example}
\end{example}
Despite the notions introduced above, in the next theorem we will
work with \emph{torsionless }(\emph{extended-linear})\emph{ EHP theories},
i.e, extended-linear EHP theories restricted to reductive connections
$\nabla=e+\omega$ such that $\Theta_{\omega}=d_{\omega}e=0$. For
the case of linear EHP theories, recall that as discussed in Remark
\ref{canonical_theory}, to any of them we have an associated canonical
extended-linear theory.
\begin{thm}
\label{holonomy_EHP}Let $P\rightarrow M$ be a $k$-proper $H$-bundle
over a $n$-manifold $M$. If $H$ is a $k$-group, then $k_{1}=k$
and $k_{i>1}=0$. Furthermore, for any FABS, a torsionless extended-linear
EHP theory based on $H$ is nontrivial only if one of the following
conditions is satisfied
\begin{enumerate}
\item[\textit{\emph{(B1)}}] $k=2,4$ and $M$ contains a Kähler Berger $k$-manifold;
\item[\textit{\emph{(B2)}}] $k=4$ and $M$ contains a quaternionic-Kähler Berger $k$-manifold. 
\end{enumerate}
\end{thm}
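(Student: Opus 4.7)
My first step is to prove $k_{1} = k$ and $k_{i>1} = 0$ via the de Rham decomposition theorem. The $k$-properness hypothesis gives an immersed Berger $k$-manifold $\imath \colon N \hookrightarrow M$ with $FN \subset \imath^{*}P$; pulling back the torsion-free connection $\omega$ through this inclusion yields a torsion-free Riemannian connection on $N$ whose holonomy lies inside $SO(k_{1}) \times \cdots \times SO(k_{r}) \subset SO(k)$. If more than one block $k_{i}$ were non-zero, de Rham's theorem would force $N$ to be locally isometric to a non-trivial product of Riemannian manifolds, contradicting the local irreducibility built into the definition of a Berger manifold. Hence exactly one block survives, and it must have size $k$.

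\textbf{Paragraph 2:} With $\mathfrak{h} \subset \mathfrak{so}(k)$ established, I put the restriction $\omega|_{N}$ in the scope of Berger's classification theorem: since $N$ is simply connected, locally irreducible, and locally non-symmetric, the holonomy of $\omega|_{N}$ is either the full $SO(k)$ or one of the special groups listed in Table \ref{berger}. I now combine this with Corollary \ref{corollary_C}: because $\mathfrak{h}$ sits inside $\mathfrak{so}(k) = \mathfrak{so}(k_{1})$, the extended-linear EHP theory is trivial whenever $n \geq 6$. Since $M$ contains the $k$-dimensional embedded submanifold $N$ we have $n \geq k$, so $k \geq 6$ already forces triviality. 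This immediately rules out the $G_{2}$ and $\mathrm{Spin}(7)$ entries of Berger's table together with every Kähler, Calabi--Yau, quaternionic-Kähler and hyperkähler case of dimension at least six. Only $k \in \{2, 3, 4, 5\}$ remains potentially nontrivial.

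\textbf{Paragraph 3:} The remaining small-dimensional cases are handled by inspecting Berger's table directly. For $k = 3$ and $k = 5$ there are no proper Berger subgroups, so $H$ must be the full $SO(k)$; on such a purely Riemannian background one checks that the absence of any further parallel invariant forces the EHP integrand, after tracing, to vanish identically. For $k = 2$ the only entry is $U(1) = SO(2)$, giving precisely the Kähler case of (B1). For $k = 4$, Berger's list consists of $U(2)$ (Kähler, (B1)), $SU(2)$ (Calabi--Yau), $Sp(1) \cdot Sp(1) = SO(4)$ (quaternionic-Kähler, (B2)), and $Sp(1)$ (hyperkähler); the Calabi--Yau and hyperkähler subcases are excluded by exploiting the extra parallel tensors they carry (a holomorphic volume form, or the hyperkähler triple of complex structures), which together with the torsion-free condition make the EHP integrand cohomologically trivial and hence vanish after integration. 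The main obstacle lies precisely in this last step: Corollary \ref{corollary_C} by itself cannot distinguish $U(2)$ from $SU(2)$ inside $\mathfrak{so}(4)$, so the argument must bring in the specific extra parallel structures singled out by Berger's classification together with a careful analysis of the trace pairing on $\mathfrak{g}$.
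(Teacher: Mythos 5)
Your first two paragraphs reproduce the paper's argument: the de Rham decomposition forcing $k_{1}=k$ and $k_{i>1}=0$, and the fundamental obstruction (the paper invokes Theorem \ref{theorem_S1} via condition (S0) rather than Corollary \ref{corollary_C}, but it is the same mechanism) to reduce to $n<6$, hence $k\leq n<6$, before bringing in Berger's classification. Up to that point you are on the paper's track.

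The genuine gap is in paragraph 3, and it comes from misreading what has to be proved. Conditions (B1) and (B2) are \emph{necessary} conditions, and ``K\"ahler'' in (B1) only asserts that $N$ carries a K\"ahler structure; since $SU(2)\subset U(2)$ and $Sp(1)\subset U(2)$, the Calabi--Yau and hyperk\"ahler entries of Table \ref{berger} in dimension $4$ already yield a K\"ahler structure and therefore fall \emph{under} (B1) --- there is nothing to exclude. Your attempt to rule them out by claiming that the extra parallel tensors ``make the EHP integrand cohomologically trivial and hence vanish after integration'' is both unnecessary and unsubstantiated: no such vanishing is derived from the ingredients at hand, and the paper makes no such claim. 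The same defect affects your treatment of $k=3,5$: the assertion that ``the absence of any further parallel invariant forces the EHP integrand, after tracing, to vanish identically'' is not an argument and is supported by nothing in the paper; the paper disposes of these cases simply because Table \ref{berger} has no entries in dimensions $3$ and $5$, so no case outside (B1)/(B2) arises there. The correct ending is the observation that every entry of Berger's table in dimension below $6$ is either K\"ahler or quaternionic-K\"ahler, which is exactly what the theorem asserts.
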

\begin{proof}
From Lemma \ref{sum_so}, the hypothesis on $\mathfrak{h}$ implies
that it is $(1,2)$-nil and, therefore, because we are working with
extended-linear theories, condition (S0) is satisfied. Consequently,
by Theorem \ref{theorem_S1} the actual (and, in particular, the torsionless)
EHP theory is trivial if $n\geq6$, so that we may assume $n<6$.
Since $P\rightarrow M$ is $k$-proper, $M$ contains at least one
immersed Berger $k$-manifold $\imath:N\hookrightarrow M$ such that
$FN\subset\imath^{*}P$. Let $\kappa$ denote the inclusion of $FN$
into $\imath^{*}P$. On the other hand, we also have an immersion
$\imath^{*}P\hookrightarrow P$, which we denote by $\imath$ too.
Lie algebra-valued forms can be pulled-back and the pullback preserves
horizontability and equivariance. So, for any $\nabla=e+\omega$ the
corresponding $1$-form $(\imath\circ\kappa)^{*}\omega\equiv\omega\vert_{N}$
is an $H$-connection in $FN$ and its holonomy is contained in $H$.
By hypothesis $H$ is an $k$-group so that, via Lie integration,
\[
H\subset SO(k_{1})\times...\times SO(k_{r})\subset SO(k).
\]
In particular, the holonomy of $\omega\vert_{N}$ is contained in
$SO(k)$, implying that $\omega\vert_{N}$ is compatible with some
Riemannian metric $g$ in $N$. But, we are working with torsion-free
connections, so that $\omega\vert_{N}$ is actually the Levi-Civita
connection of $g$ and, because $N$ is irreducible, de Rham decomposition
theorem implies that there exists $i\in1,...,r$ such that $k_{i}=k$
and $k_{j}=0$ for $j\neq i$. Without loss of generality we can take
$i=1$. Because $N$ is a Berger manifold, Berger's theorem applies,
implying that the holonomy of $\omega\vert_{N}$ is classified by
Table \ref{berger}, giving conditions (B1) and (B2).
\end{proof}
\begin{cor}
\label{corollary_holonomy}Let $M$ be a Berger $n$-manifold with
an $H$-structure, where $H$ is a $n$-group. In this case, for any
FABS, a torsionless extended-linear EHP based on $H$ is nontrivial
only if $M$ has dimension $n=2,4$ and admits a Kähler structure.
\end{cor}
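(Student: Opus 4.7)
The plan is to derive this directly from Theorem \ref{holonomy_EHP} by applying it in the canonical way: take $P = FM$, $N = M$, and $\imath = \operatorname{id}_{M}$. First I would verify that the $n$-properness condition of the theorem is satisfied in this setup. Since $\imath^{*}FM = FM$ and $FN = FM$, the required inclusion $FN \subset \imath^{*}P$ holds trivially; no auxiliary immersed submanifold is needed because $M$ itself is already a Berger $n$-manifold by hypothesis. Together with the assumption that $H$ is an $n$-group (so $H$ is a $k$-group with $k = n$), every hypothesis of Theorem \ref{holonomy_EHP} is in force.

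Next I would invoke the theorem: nontriviality of the torsionless extended-linear EHP on the $H$-bundle $FM$ forces one of the two alternatives (B1) or (B2), both with $k = n$. Specialized to our setting, where the Berger $k$-manifold ``contained in $M$'' is $M$ itself, case (B1) reads exactly as $n \in \{2,4\}$ and $M$ admitting a K\"ahler structure, which is precisely the conclusion of the corollary.

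The one step requiring a short argument is eliminating (B2). In the only relevant case $k = n = 4$, the quaternionic-K\"ahler entry of Berger's Table \ref{berger} is the holonomy group $Sp(n) \cdot Sp(1) \subset SO(4n)$ specialized at $n = 1$. Because of the exceptional low-dimensional isomorphism $Sp(1) \cdot Sp(1) \cong SO(4)$, this is not a proper subgroup of $SO(4)$ and hence does not represent a genuine holonomy reduction; equivalently, the quaternionic-K\"ahler row of Berger's list produces a proper special holonomy only for $4n \geq 8$, which is incompatible with $k = 4$. Hence (B2) cannot yield a nontrivial possibility at $\dim M = 4$, and only (B1) survives.

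I do not expect a real obstacle here: the corollary is essentially a transcription of Theorem \ref{holonomy_EHP} to the special case where the ambient manifold already is Berger, and the only mild subtlety is the low-dimensional bookkeeping needed to discard the degenerate quaternionic-K\"ahler case in dimension $4$.
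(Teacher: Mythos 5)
Your overall route is the same as the paper's: specialize Theorem \ref{holonomy_EHP} to $P=FM$ with $N=M$ and the identity immersion (so $n$-properness is automatic), and then dispose of alternative (B2). The only place where you diverge is in how (B2) is discarded, and the two treatments are genuinely different readings of the same low-dimensional degeneracy $Sp(1)\cdot Sp(1)\simeq SO(4)$. You argue that, since this group is not a \emph{proper} subgroup of $SO(4)$, the quaternionic-K\"ahler row of Table \ref{berger} simply never occurs for $k=4$, so (B2) is vacuous as a case and only (B1) survives; the paper instead keeps (B2) but neutralizes it by citing the fact that \emph{every} orientable four-dimensional manifold admits a quaternionic-K\"ahler structure (Besse, Salamon), i.e. it reads (B2) as automatically satisfied and hence as carrying no restrictive content. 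Your reading delivers the stated conclusion (``only the K\"ahler case remains'') more directly, but be aware that it quietly strengthens Theorem \ref{holonomy_EHP}: the generous, group-theoretically vacuous reading of ``quaternionic-K\"ahler in dimension $4$'' is precisely what lets the theorem's dichotomy absorb the case in which the holonomy does not properly reduce at all (e.g. $H=SO(4)$ with generic holonomy, which Berger's theorem permits and which lies in no proper entry of Table \ref{berger}). If you insist that Berger's table only lists proper reductions, you should say explicitly how that residual generic case is handled before concluding that nontriviality forces (B1); the paper's phrasing sidesteps this by construction, at the cost of making the dimension-$4$ part of the statement rest on the cited quaternionic-K\"ahler fact rather than on a clean elimination.
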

\begin{proof}
The result follows from the last theorem by considering the bundle
$P\rightarrow M$ as the frame bundle $FM\rightarrow M$ and from
the fact that every orientable four-dimensional smooth manifold admits
a quaternionic-Kähler structure \cite{einstein_manifolds_BESSE,quaternionic_Kahler_SALAMON}. 
\end{proof}
\begin{rem}
This corollary shows how topologically restrictive it is to internalize
\emph{torsionless} extended-linear EHP in geometries other than Lorentzian.
Indeed, if the spacetime $M$ is compact and 2-dimensional, then it
must be $\mathbb{S}^{2}$. On the other hand, in dimension $n=4$
compact Kähler structures exist iff the Betti numbers $b_{1}(M)$
and $b_{3}(M)$ are zero, so that $\chi(M)=b_{2}(M)+2$. As a consequence,
if we add the (mild) condition $H^{2}(M;\mathbb{R})\simeq0$ on the
hypothesis of Corollary \ref{corollary_holonomy} we conclude that
$M$ must be a K3-surface!
\end{rem}
The last theorem was obtained as a consequence of Theorem \ref{theorem_S1}
and of Berger's classification theorem. So, this is a geometric obstruction
result for \emph{extended-linear} EHP theories. However, due to Theorem
\ref{theorem_duals} and Corollary \ref{corollary_S0} we can get
exactly the same result for the geometric and algebraic \emph{duals}
of \emph{linear}\textbf{ }EHP. The same result does not hold for the
\emph{actual linear} EHP theories, because there is no analogue of
Theorem \ref{theorem_S1} or Corollary \ref{corollary_S0} for them,
as emphasized in Remark \ref{no_S0_linear}.

On the other hand, Berger's theorem remains valid, allowing us to
get an obstruction result for linear EHP theories independently of
the previous ones. Indeed, recall that (as pointed in Remark \ref{canonical_theory})
we can always associate a canonical (extended-linear) theory to a
given linear one. Let us define a \emph{torsionless liner EHP theory}
as that obtained by restricting a linear theory to the space of connections
such that the corresponding canonical theory is a torsionless extended-linear
EHP theory in the sense introduced above. We then have the following:
\begin{thm}
Let $P\rightarrow M$ be a $k$-proper $G$-bundle over a $n$-manifold
$M$ endowed with a group structure reduction $H\hookrightarrow G$.
If $H$ is a $k$-group, then $k_{1}=k$ and $k_{i>1}=0$. Furthermore,
a torsionless linear EHP theory for this reduction is nontrivial only
if one of the following conditions is satisfied
\begin{enumerate}
\item[\textit{\emph{(B1')}}] $k$ is even and $M$ contains a Kähler Berger $k$-manifold;
\item[\textit{\emph{(B2')}}] $k$ is divisible by $4$ and $M$ contains a quaternionic-Kähler
Berger $k$-manifold; 
\item[\textit{\emph{(B3')}}] $k=7$ and $M$ contains a Berger $k$-manifold with a $G_{2}$-structure.
\end{enumerate}
\end{thm}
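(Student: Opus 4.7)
The plan is to replicate the proof strategy of Theorem \ref{holonomy_EHP} in the linear setting. The crucial difference, flagged in Remark \ref{no_S0_linear}, is that condition (S0) is unavailable in the linear context: thus Theorem \ref{theorem_S1} does \emph{not} truncate us to $n<6$, and the full Berger list (rather than only its low-dimensional entries) enters the conclusion.

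First, I unfold the definition of a torsionless linear EHP theory: it restricts the configuration space to those reductive connections $\nabla=e+\omega$ whose canonically associated extended-linear theory is torsionless, meaning $d_{\omega}e=0$. In particular $\omega$ is torsion-free relative to the soldering-like part $e$, and this is the only property of $\omega$ I will need beyond the fact that it takes values in $\mathfrak{h}$. Next, I exploit the $k$-properness of $P\to M$ to choose an immersed Berger $k$-manifold $\imath:N\hookrightarrow M$ together with an inclusion $\kappa:FN\hookrightarrow \imath^{*}P$. The pullback $\omega|_{N}:=(\imath\circ\kappa)^{*}\omega$ is an $H$-connection on $FN$; its holonomy lies in $H$. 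Since $H$ is a $k$-group, Lie-integrating $\mathfrak{h}\subset\mathfrak{so}(k_{1})\oplus\cdots\oplus\mathfrak{so}(k_{r})$ gives $H\subset SO(k_{1})\times\cdots\times SO(k_{r})\subset SO(k)$. Hence $\operatorname{Hol}(\omega|_{N})\subset SO(k)$, and because the pullback of a torsion-free connection remains torsion-free, $\omega|_{N}$ is the Levi-Civita connection of some Riemannian metric $g$ on $N$.

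Now I invoke the two ingredients encoded in the notion of Berger $k$-manifold. Simple connectedness together with local irreducibility let me apply de Rham's decomposition theorem to the holonomy reduction $\operatorname{Hol}(\omega|_{N})\subset SO(k_{1})\times\cdots\times SO(k_{r})$: irreducibility of $N$ collapses the product so that exactly one factor carries the whole representation, i.e.\ there is an index $i_{0}$ with $k_{i_{0}}=k$ and $k_{j}=0$ for $j\neq i_{0}$; relabeling yields $k_{1}=k$ and $k_{i>1}=0$, which is the first assertion. Since $N$ is also locally non-symmetric, Berger's theorem (Table \ref{berger}) forces $\operatorname{Hol}(\omega|_{N})\subset SO(k)$ to be one of the listed special-holonomy groups. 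Reading off which entries are compatible with the constraints on $k$ yields the three alternatives: the cases $U(n)$ and $SU(n)$ both equip $N$ with a K\"ahler structure for $k=2n$ even, giving (B1'); the cases $Sp(n)\cdot Sp(1)$ and $Sp(n)$ equip $N$ with a quaternionic-K\"ahler structure for $k=4n$ divisible by $4$, giving (B2'); and $G_{2}\subset SO(7)$ gives (B3'). In each case nontriviality of the theory demands that at least one such $N$ actually sit inside $M$.

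The step that requires the most care is the de Rham reduction: one must be sure that $\omega|_{N}$ genuinely respects the \emph{product} decomposition $SO(k_{1})\times\cdots\times SO(k_{r})$, not merely its ambient $SO(k)$, so that local irreducibility of $N$ can collapse the product down to a single nontrivial factor. Everything else is a direct parallel of the argument used in Theorem \ref{holonomy_EHP}; the only substantive change is that, in the absence of the bound $n<6$ from (S0), the Calabi-Yau, hyperk\"ahler and $G_{2}$ rows of Berger's table are no longer excluded, which is precisely what broadens (B1)--(B2) into (B1')--(B3').
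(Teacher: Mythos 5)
Your proposal is correct and takes essentially the same route as the paper, whose proof of this theorem is literally stated as following "the same lines" of Theorem \ref{holonomy_EHP} with the first, (S0)-based dimensional truncation removed; your unfolding of the pullback-holonomy step, the de Rham collapse to a single factor, and the application of Berger's classification matches that argument, and you correctly identify Remark \ref{no_S0_linear} as the reason the full Berger list now appears. (One small observation, which concerns the statement rather than your reasoning: like the paper, you silently drop the $\operatorname{Spin}(7)$ row of Table \ref{berger}, which the alternatives (B1')--(B3') do not obviously absorb since $\operatorname{Spin}(7)$-holonomy $8$-manifolds need not be K\"ahler.)
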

\begin{proof}
The proof follows the same lines of the last theorem, except by the
first argument, which makes explicit use of Theorem \ref{theorem_S1}.
\end{proof}
The above results are about different versions of torsionless \emph{gauge}
EHP theories. By this we mean that no requirement was made on $e$.
We close this section remarking that if we work not on the \emph{gauge}
context, but on the \emph{Cartan} context (in the sense that $e$
is a pointwise isomorphism), then there is a physical appeal for working
with torsionless connections.

Indeed, recall that by the very abstract definition, a \emph{classical
theory }is given by an \emph{action functional} $S:\operatorname{Conf}\rightarrow\mathbb{R}$
defined on some \emph{space of configurations}. The interesting classical
theories are those in that $\operatorname{Conf}$ has some kind of
``smooth structure'' relative to which $S$ can be regarded as a
``smooth function'' and, as such, has a ``derivative''. In such
cases, there exists a distinguished subspace $\operatorname{Cut}(S)\subset\operatorname{Conf}$
constituted by the ``critical points'' of $S$. This is the \emph{phase
space} of the underlying classical theory, which contains all configurations
which a priori can be observed in nature.

If $S$ is now the action of EHP theory, we find that the phase space
is determined by the pairs $(e,\omega)$ which satisfy the equations
\[
\curlywedge_{*}^{n-2}e\curlywedge_{*}\Omega+\frac{\Lambda}{(n-1)!}\curlywedge_{*}^{n}e=0\quad\text{and}\quad\curlywedge_{*}^{n-2}e\curlywedge\Theta_{\omega}=0
\]
simultaneously, where (once again) $\curlywedge_{*}$ must be interpreted
as $\curlywedge$ or $\curlywedge_{\rtimes}$ depending if we are
in the linear or in the extended-linear context. The first of these
equations is just an analogue of Einstein's equations. The second,
in turn, if we are working with \emph{Cartan connections}, reduces
to $\Theta_{\omega}=0$, i.e, to the ``torsionless'' condition previously
used.

\section{Abstract Obstructions \label{sec_abstract_obstructions}}

Until this moment we considered EHP theory for group structure reductions
$H\hookrightarrow G$ other than $O(n-1,1)\hookrightarrow\operatorname{Iso}(n-1,1)$,
where $G$ could be an arbitrary linear group or some semidirect product
$\mathbb{R}^{k}\rtimes H$. In other words, we realized gravity, as
modeled by EHP theories, in other geometries than Lorentzian geometry.
We then showed that for certain classes of geometries the corresponding
theory is actually trivial, meaning that we have geometric obstructions.

In Subsection \ref{algebra_valued_gravity} we will generalize EHP
theories even more by replacing the algebras $\mathfrak{h}$ and $\mathfrak{g}$,
which are induced by the Lie groups $G$ and $H$, by general (not
necessarily satisfying PI's) algebras. We call these theories \emph{algebra-valued
EHP theories}\footnote{We could call such theories \emph{algebraic EHP theories}, but this
nomenclature would suggest that exactly as EHP theories are about
connections, ``algebraic EHP'' should be about ``algebraic connections''.
The problem is that the notion of ``algebraic connection'' actually
exists in the literature \cite{algebraic_connections_1,algebraic_connections_2,algebraic_connections_3},
being applied in a more general context. }. One motivation to consider this new generalization is the following.
In the previous extended-linear context, a group reduction $H\hookrightarrow\mathbb{R}^{n}\rtimes H$
on a manifold $M$ is (as discussed in the beginning of Subsection
\ref{holonomy}) a geometry modeled by some kind of tensor $t$ in
$M$. A \emph{global symmetry} of $(M,t)$ is an automorphism in the
category of $H$-structures, i.e, a difeo $ $ such that $f^{*}t=t$.
If we apply this to the canonical geometric model $(\mathbb{R}^{n},t)$
we see that its group of global symmetries $\operatorname{Aut}(\mathbb{R}^{n},t)$
is precisely $\mathbb{R}^{n}\rtimes H$. Therefore, the algebra $(\mathbb{R}^{n}\rtimes\mathfrak{h},\curlywedge_{\rtimes})$
takes the role of the associative algebra of \emph{infinitesimal global
symmetries of the canonical geometric model}, which is where the reductive
connections for the reduction $H\hookrightarrow\mathbb{R}^{n}\rtimes H$
take values.

It happens that in many physical situations we have more hidden/internal/worldsheet
symmetries, so that the \emph{full }algebra of infinitesimal symmetries
has a more abstract structure than just an associative algebra, leading
us to consider ``reductive connections with values in arbitrary algebras''.
Once the notion of algebra-valued EHP theoriesis introduced, we show
that the previous obstruction results (Theorem \ref{theorem_S1}),
hold \emph{ipsis litteris} in this new abstract context.

A typical situation in that the full algebra of infinitesimal symmetries
is not just an associative algebra is when the underlying ``canonical
geometric model'' is not a cartesian space $\mathbb{R}^{n}$ endowed
with a tensor, but actually a supercartesian space $\mathbb{R}^{n\vert m}$
endowed with a supertensor. Indeed, in this case the algebra of infinitesimal
symmetries inherits a $\mathbb{Z}_{2}$-graded algebra structure.
This motivates us to analyze the effects of gradings in the obstruction
results, which is done in Subsection \ref{graded_valued_gravity}.

Notice that if the pair $(\mathbb{R}^{n},t_{n})$ describes geometry,
the pair $(\mathbb{R}^{n\vert m},t_{n\vert m})$ describes \emph{supergeometry}.
Therefore, it would be natural to consider not only ``EHP with values
in superalgebras'', but actually ``super EHP theories''. Closing
the section, in Subsection \ref{graded_gravity} we present an approach
to the notion of ``reductive graded-connection'', allowing us to
define ``graded EHP theories'', and we show how to extend the obstruction
results to this context.

\subsection{Algebra-Valued Gravity \label{algebra_valued_gravity}}

We start by recalling that a reductive connection in $P$ for $H\hookrightarrow G$
is a $\mathfrak{g}$-valued 1-form $\nabla$ which decomposes as $\omega+e$,
where $\omega$ is a $\mathfrak{h}$-valued $1$-form and $e$ is
a $\mathfrak{g}/\mathfrak{h}$-valued $1$-form (not necessarily an
isomorphism, due to previous discussion). Notice that it is completely
determined by the pair $(\omega,e)$ and by the vector space decomposition
$\mathfrak{g}\simeq\mathfrak{h}\oplus\mathfrak{g}/\mathfrak{h}$,
a fact that was already used in Subsection \ref{dual_gravity}. In
typical cases $\mathfrak{g}$ is actually a splitting extension (the
previous extended-linear context) or $H\subset G$ is a normal subgroup
(previous linear context), so that both $\mathfrak{h}$ and $\mathfrak{g}/\mathfrak{h}$
acquire algebra structures.

This leads us to the following generalization: given an $\mathbb{R}$-algebra
$(A,*)$ endowed with a vector space decomposition $A\simeq A_{0}\oplus A_{1}$,
where $A_{0}$ and $A_{1}$ have algebra structures $*_{0}$ and $*_{1}$
(not necessarily subalgebras of $A$), we define an\emph{ $A$-connection}
in a manifold $P$ as an $A$-valued $1$-form $\nabla$ in $P$ which
is reductive respective to $A\simeq A_{1}\oplus A_{0}$. In \}

\{kmore concrete terms, it is an $A$-valued $1$-form $\nabla$ which
is written as $\nabla=\omega+e$, where $\omega$ and $e$ take values
in $A_{1}$ and $A_{0}$, respectively. The \emph{curvature} of $\omega$
is the $2$-form $\Omega=d\omega+\omega\wedge_{1}\omega$ in $A_{1}$,
where $\wedge_{1}$ is the product induced by $*_{1}$. Similarly,
the \emph{torsion }of $\omega$ is the $A$-valued $2$-form $\Theta_{\omega}=de+\omega\wedge_{*}e$.
Some related concepts are considered in \cite{related_article_1,related_article_2}.

With these structures in hand we can generalize gauge EHP theories
(about $\mathfrak{g}$-valued connections) to \emph{algebra-valued
EHP theories} (about $A$-connections in the above sense). Indeed,
given a spacetime manifold $M$, this generalization is obtained following
the following steps:
\begin{enumerate}
\item consider some analogue of the Hilbert-Palatini forms (\ref{hilbert_palatini_forms}); 
\item show that this Hilbert-Palatini form induces a corresponding form
in $M$ with values in some bundle; 
\item turn this bundle-valued form into a real-valued form;
\item define the action functional as the integral over $M$ of this real-valued
form.
\end{enumerate}
In order to do the second step, the immediate idea is to select a
FABS, say defined on a subcategory $\mathbf{C}\subset\mathbf{Bun}_{M}\times\mathbf{Alg}_{\mathbb{R}}$,
as in Section \ref{FABS}. However, in order to realize the first
step we need to work with FABS fulfilling certain ``invariance property'':
we say that a FABS on $\mathbf{C}$ is \emph{invariant }by a functor
$I:\mathbf{C}\rightarrow\mathbb{Z}\mathbf{Alg}_{\mathbb{R}}$ if
\begin{enumerate}
\item[(a)] for all $(P,A)\in\mathbf{C}$ the corresponding $I(P,A)$ is an ideal
of $S(P;A)$, so that we can take the quotient functor $S/I$ and
we have a natural transformation $\pi:S\Rightarrow S/I$;
\item[(b)] there exists another functor $J:\mathbf{C}\rightarrow\mathbb{Z}\mathbf{Alg}_{\mathbb{R}}$
such that $J(P,A)$ is an ideal of $\Lambda(M;E_{A})$ and whose projection
we denote by $\pi'$; 
\item[(c)] there exists a natural transformation $\jmath':S/I\Rightarrow\Lambda(M;-)/J$
such that $\jmath'\circ\pi=\pi'\circ\jmath$, i.e, the diagram below
commutes for every $(P,A)$.\begin{equation}{\label{invariant_FABS}\xymatrix{\ar[d]_{\pi_{(P,A)}}S(P,A)\ar[r]^{\jmath_{(P,A)}} & \Lambda(M;E_{A})\ar[d]^{\pi'_{(P,A)}}\\S(P,A)/I(P,A)\ar[r]_{\jmath'_{(P,A)}} & \Lambda(M;E_{A})/J(P,A)}}
\end{equation}
\end{enumerate}
Returning to the first step, notice that a priori the algebras $A$
and $A_{1}$ may not be associative, so that due to ambiguities we
cannot simply replace ``$\curlywedge$'' by ``$\wedge_{*}$''
and ``$\wedge_{0}$'' in (\ref{hilbert_palatini_forms}) to get
an $A$-valued version of $\alpha_{HP}$; instead, we should take
into account all possibilities simultaneously. This can be done as
follows. Recall the ``associativity-like'' polynomials 
\[
(x_{1}\cdot...\cdot(x_{s-2}\cdot(x_{s-1}\cdot x_{s}))),\quad(((x_{1}\cdot x_{2})\cdot(x_{3}\cdot x_{4}))\cdot...\cdot x_{s}),\quad\text{etc}.,
\]
which were introduced in Subsection \ref{nil_algebras} in order to
describe nilpotency and nil properties as PI's. Due to the structure
of the Hilbert-Palatini forms, we are interested in the more specific
polynomials 
\begin{equation}
(x\cdot...\cdot(x\cdot(x\cdot y))),\quad\text{etc.}\quad\text{and}\quad(x\cdot...\cdot(x\cdot(x\cdot x))),\quad\text{etc}.\label{p_s}
\end{equation}
When evaluated at $\Lambda(P;A)$, the set of these polynomials generates
an ideal $\mathfrak{I}(P,A)$ and the idea is to consider FABS which
are invariant by it.

More precisely, given a bundle $P\rightarrow M$ over a manifold $M$
and an algebra $A$ endowed with vector space decomposition $A\simeq A_{0}\oplus A_{1}$,
we say that a FABS over $M$, defined in $\mathbf{C}$, is \emph{compatibl}e
with the given data if 
\begin{enumerate}
\item[(a)] the pair $(P,A)$ belongs to $\mathbf{C}$;
\item[(b)] each possibility of defining $\alpha_{n,\Lambda}$ (i.e, the evaluation
of each polynomial (\ref{p_s}) in each $A$-connection) belongs to
$S(P,A)$;
\item[(c)] the FABS in question is invariant by $\mathfrak{I}$, as defined
above.
\end{enumerate}
Considering FABS satisfying (a)-(c) we realize the first two steps
needed to define ``algebra-valued EHP theory''. In order to realize
the third step, we need something like a ``trace''. A \emph{trace
transformation} for a FABS on $\mathbf{C}$ invariant by $I$ is a
natural transformation $\operatorname{tr}$ between $\Lambda(M;E_{-})/J$
and the constant functor in $\Lambda(M;\mathbb{R})$. In other words,
it is a rule that assigns (in a natural way) a map of graded-algebras
$\operatorname{tr}_{(P,A)}:\Lambda(M;E_{A})\rightarrow\Lambda(M;\mathbb{R})$
to every pair $(P,A)\in\mathbf{C}$.

Now, we can finally define what is an EHP theory in the general algebraic
setting. Given a principal bundle $P\rightarrow M$, an algebra $A$
and a vector space decomposition $A\simeq A_{0}\oplus A_{1}$, choose
a compatible FABS endowed with a trace transformation. The corresponding
\emph{$A$-valued} \emph{inhomogeneous}\footnote{The corresponding $A$-valued \emph{homogeneous }EHP theory is defined
analogously.}\emph{ EHP theory} in $P$ is the classical theory whose configuration
space is the space of all reductive $A$-connections in $P$ and whose
action functional is given by (we omit the subscripts in the maps
$\pi$, $j'$ and $\operatorname{tr}$ in order to simplify the notation)
\[
S_{n,\Lambda}[e,\omega]=\int_{M}\operatorname{tr}(\jmath'(\pi(\alpha_{HP}))),
\]
where $\alpha_{HP}\in S(P,A)$ is any ``Hilbert-Palatini''-type
form, say 
\[
\alpha_{HP}=(e\wedge_{*}...(\wedge_{*}(e\wedge_{*}(e\wedge_{*}\Omega)))+\frac{\Lambda}{(n-1)!}(e\wedge_{*}...(\wedge_{*}(e\wedge_{*}(e\wedge_{*}\Omega))).
\]

Notice that we recover the gauge linear EHP theories discussed previously
by taking $A$ as a matrix Lie algebra $\mathfrak{g}$ and considering
the adjoint FABS endowed with the classical trace map. Therefore,
at least in this restricted domain we have obstruction Theorems \ref{theorem_C1},
\ref{theorem_S1} and \ref{theorem_duals}. These results remain valid
if: 
\begin{enumerate}
\item \emph{we replace the adjoint FABS by any other compatible FABS} (indeed,
due to the linearity of the FABS, once one shows that $\alpha_{HP}=0$
it immediately follows that the bundle-valued form $\jmath'(\pi(\alpha_{HP}))$
is also trivial independently of the FABS chosen); 
\item \emph{we replace the classical trace by any other trace transformation}
(also due to the linearity of trace transformations); 
\item \emph{we replace the algebra $\mathfrak{g}$ by any other algebra
$A$ endowed with a vector space decomposition $A\simeq A_{0}\oplus A_{1}$}
(this follows from the structure of the proofs of Theorems\ref{theorem_C1},
\ref{theorem_S1} and \ref{theorem_duals}, which in essence depends
only on general algebraic hypotheses on the decomposition $\mathfrak{g}\simeq\mathfrak{g}/\mathfrak{h}\oplus\mathfrak{h}$).
\end{enumerate}
Summarizing: the obstruction theorems hold not only in the domain
of matrix algebras, adjoint FABS and classical trace, but also for
arbitrary algebras, arbitrary compatible FABS and arbitrary trace
transformations. Concretely, we have the following general obstruction
theorem whose proof is in essence Remarks 1-3 above.
\begin{thm}
\label{theorem_A1}Let $M$ be an $n$-dimensional spacetime and $P\rightarrow M$
be a bundle and $(A,*)$ an algebra endowed with a vector space decomposition
$A_{0}\oplus A_{1}$ fulfilling one of the following conditions
\begin{enumerate}
\item[\textit{\emph{(}}A1\textit{\emph{)}}] the algebra $(A_{0},*_{0})$ is a $(k,s)$-solv subalgebra\footnote{We can forget the subalgebra hypothesis by modifying a little bit
the ideal $\mathfrak{I}(P,A)$ used to define the notion of ``compatible
FABS''.} $(A,*)$;
\item[\textit{\emph{(}}A2\textit{\emph{)}}] $(A,*)$ is itself $(k,s)$-solv.
\end{enumerate}
If $n\geq k+s+1$, then for any compatible FABS and any trace transformation,
the corresponding inhomogeneous $A$-valued EHP theory equals the
homogeneous ones. If $n\geq k+s+3$, then both theories are trivial.
\end{thm}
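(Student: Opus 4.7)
The plan is to adapt the proof of Theorem \ref{theorem_S1} essentially verbatim, relying on the three observations spelled out immediately before the theorem: any compatible FABS handles all admissible parenthesisations uniformly through the ideal $\mathfrak{I}(P,A)$, and both the natural transformation $\jmath'\circ\pi$ and any trace transformation $\operatorname{tr}$ are linear.

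Under condition (A1), the $(k,s)$-solvability of $A_0$ provides a vector space decomposition $A_0\simeq B_s\oplus\cdots\oplus B_0$ into $(k,s)$-nil subspaces. A reductive $A$-connection $\nabla=\omega+e$ has $e$ valued in $A_0$, and by the local reductivity discussion of Subsection \ref{gauge_gravity} one may write $e=e_s+\cdots+e_0$ locally. Since these summands live in pairwise distinct subspaces, all cross products vanish and
$$
\wedge_*^{r}e = \wedge_*^{r}e_s+\cdots+\wedge_*^{r}e_0,
$$
so the $(k,s)$-nil property of each $B_i$, together with the assumption that $A_0$ is a subalgebra of $A$, yields $\wedge_*^{k+s+1}e=0$ locally, hence $(\wedge_*^{k+s+1}e)\wedge_*\beta=0$ locally for every $A$-valued form $\beta$. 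Compatibility of the FABS ensures that all admissible parenthesisations of such monomials agree modulo $\mathfrak{I}(P,A)$.

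Choosing $\beta=(\Lambda/(n-1)!)\wedge_*^{n-(k+s+1)}e$ shows that the cosmological term of $\alpha_{n,\Lambda}$ lies in $\mathfrak{I}(P,A)$ as soon as $n\geq k+s+1$, so $\pi(\alpha_{n,\Lambda})=\pi(\alpha_n)$; choosing $\beta=\wedge_*^{n-(k+s+3)}e\wedge_*\Omega$ shows that $\alpha_n$ itself lies in $\mathfrak{I}(P,A)$ as soon as $n\geq k+s+3$, so $\pi(\alpha_n)=0$. Linearity of $\jmath'$ and of $\operatorname{tr}$ then transports the vanishing to $\operatorname{tr}(\jmath'(\pi(\alpha_{HP})))$, and integration over $M$ yields the stated equality and triviality of the action functionals. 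Case (A2) proceeds identically, with $e$ now viewed as an $A$-valued $1$-form and the solvability decomposition applied directly to $A$.

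The main obstacle I anticipate is not analytic but combinatorial: without associativity of $*$, the monomial $\wedge_*^{k+s+1}e$ admits many inequivalent parenthesisations, and one must verify that every admissible Hilbert--Palatini expression built from these parenthesisations in fact lies in $\mathfrak{I}(P,A)$. This is precisely the reason the notion of compatible FABS was formulated with invariance under the full ideal generated by the polynomials in (\ref{p_s}); once this invariance is in hand, the rest of the argument is a direct transcription of Theorem \ref{theorem_S1}, and the conclusion for arbitrary compatible FABS and arbitrary trace transformations follows from the linearity remarks already mentioned.
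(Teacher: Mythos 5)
Your proposal is correct and follows essentially the same route as the paper: the paper's proof of Theorem \ref{theorem_A1} is exactly the three linearity remarks preceding it together with the observation that the argument of Theorem \ref{theorem_S1} (local decomposition $e=e_{s}+\cdots+e_{0}$ coming from the $(k,s)$-solv hypothesis, hence $\wedge_{*}^{k+s+1}e=0$ locally, then the two choices of the remaining factor for $n\geq k+s+1$ and $n\geq k+s+3$, transported by linearity of $\pi$, $\jmath'$ and $\operatorname{tr}$) depends only on the algebraic hypotheses on the decomposition $A\simeq A_{0}\oplus A_{1}$. Your explicit attention to the parenthesisation ambiguity via the ideal $\mathfrak{I}(P,A)$ is precisely the role the compatible-FABS condition plays in the paper, so no new idea is missing.
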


\subsection{Graded-Valued Gravity \label{graded_valued_gravity}}

Here we shall indicate how the previous discussion can be extended
to the case when the background algebra $A$ is itself graded (for
more details see \cite{FABS}). Given a monoid $\mathfrak{m}$, let
$P\rightarrow M$ be a principal bundle and $A$ be a $\mathfrak{m}$-graded
real algebra. For any vector space decomposition $A\simeq A_{0}\oplus A_{1}$
where $A_{i}$ are algebras, we define a \emph{connection} \emph{in
$P$ with values in the graded algebra $A$ }exactly as previously:
as a $A$-valued 1-form $\nabla$ in $P$ which writes as $\nabla=e+\omega$
for $e:TP\rightarrow A_{0}$ and $\omega:TP\rightarrow A_{1}$. The
only difference is that now the $\mathfrak{m}$-grading of $A$ induces
a corresponding grading in each $A_{i}$ by $\oplus_{m}A_{i}^{m}$,
where $A_{i}^{m}=A_{i}\cap A^{m}$, which means that locally we can
write $e$ and $\omega$ as 
\[
e=\sum{}_{m}e^{m}\quad\text{and}\quad\omega=\sum_{m}\omega^{m},
\]
where $e^{m}:TP\rightarrow A_{0}^{m}$ and $\omega:TP\rightarrow A_{1}^{m}$
are just the projections of $e$ and $\omega$ onto the corresponding
$A_{i}^{m}$.

In order to extend FABS to this graded context we notice that if $A$
is $\mathfrak{m}$-graded, then the algebra of $A$-valued exterior
forms $\Lambda(P;A)$ is ($\mathbb{Z}\times\mathfrak{m}$)-graded.
On the other hand, an algebra bundle $E_{A}$ whose typical fiber
is $A$ is not necessarily $\mathfrak{m}$-graded, because the pointwise
decomposition may not vary continuously to allow us to globally decompose
$E_{A}$ as $\oplus_{m}E_{A_{m}}$. Therefore, in order to incorporate
graded-algebras it is enough to work with ``graded FABS'' characterized
by the following diagram: 
\begin{center}
\includegraphics[scale=0.35]{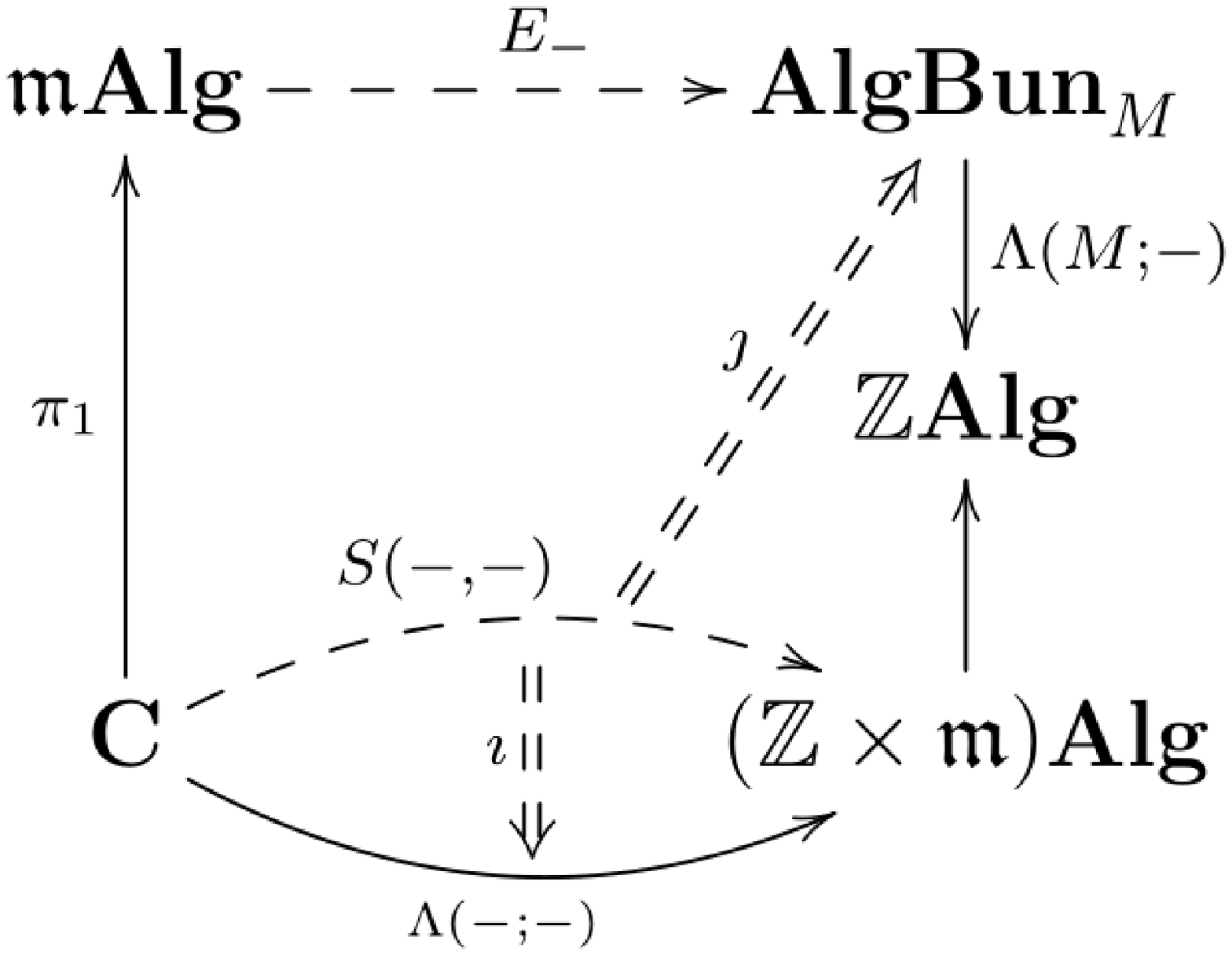} 
\par\end{center}

For a chosen bundle $P\rightarrow M$ and a $\mathfrak{m}$-graded
algebra $A\simeq A_{0}\oplus A_{1}$, we define a \emph{compatible}
\emph{graded FABS} exactly as in the last subsection. Notice that
now the natural transformation $\imath:S\Rightarrow\Lambda$ is $(\mathbb{Z}\times\mathfrak{m}$)-graded,
while $\jmath:S\Rightarrow\Lambda(M;E_{-})$ remains $\mathbb{Z}$-graded
(due precisely to the fact that $E_{A}$ may not be $\mathfrak{m}$-graded).
In particular, the diagram (\ref{invariant_FABS}) remains in $\mathbb{Z}\mathbf{AlgBun}_{M}$.
This means that the trace transformation that we need to consider
is also only $\mathbb{Z}$-graded, as previously. For every such compatible
graded FABS endowed with a trace transformation $\operatorname{tr}$
we define the corresponding\emph{ $A$-valued inhomogeneous EHP} \emph{theory
}as the classical theory whose configuration space is the collection
of all $A$-valued connections in $P$, and whose action functional
is given by 
\begin{equation}
S_{n,\Lambda}[e,\omega]=\int_{M}\operatorname{tr}(\jmath'(\pi(\alpha_{HP}))).\label{graded_EHP}
\end{equation}
Thus, up to minor modifications everything works as in the last subsection.
The crux of these ``minor modifications'' is that we can now locally
decompose $e$ and $\omega$. This allows us to get a stronger version
of Theorem \ref{theorem_A1} following the same strategy used to get
Theorem \ref{theorem_S1} from Theorem \ref{theorem_C1}. Indeed,
consider the following conditions about the vector space decomposition
$A\simeq A_{0}\oplus A_{1}$:
\begin{enumerate}
\item[(G1)] $(A_{0},*_{0})$ is a subalgebra and each $A_{0}^{j}=A^{i}\cap A_{0}$
is a $(k_{j},s_{j})$-weak solvable subspace of $A_{0}$ (recall definition
in Remark \ref{solv_modules}).
\item[(G2)] restricted to each $A^{j}$ the algebra $A$ is $(k_{j},s_{j})$-weak
solvable. 
\end{enumerate}
$\quad\;\,$We can now prove
\begin{thm}
\label{theorem_G1}Let $M$ be an $n$-dimensional spacetime, $P\rightarrow M$
be a bundle and $(A,*)$ be a $\mathfrak{m}$-graded algebra endowed
with a vector space decomposition $A_{0}\oplus A_{1}$ fulfilling
conditions \emph{(G1)} or \emph{(G2)}above. Furthermore, let $(k,s)$
be the minimum of $(k_{j},s_{j})$. If $n\geq k+s+1$, then for any
compatible FABS and any trace transformation, the corresponding inhomogeneous
$A$-valued EHP theory equals the homogeneous ones. If $n\geq k+s+3$,
then both theories are trivial. 
\end{thm}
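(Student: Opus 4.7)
The plan is to mimic the proof of Theorem \ref{theorem_S1}, with the decomposition of $e$ now coming not from an explicit solvable splitting of $A_{0}$ but from the $\mathfrak{m}$-grading of $A$ combined with the weak solvability of the individual graded pieces. The strategy is identical in spirit: once $e$ decomposes locally into a finite sum of forms valued in suitably ``nilpotent'' subspaces, the product $\curlywedge_{*}^{k+s+1}e$ lies in the ideal $\mathfrak{I}(P,A)$ generated by the polynomials (\ref{p_s}), and the compatibility square \eqref{invariant_FABS} of the FABS propagates this vanishing until the integrand of (\ref{graded_EHP}) becomes zero.

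In detail, I would first reduce to the local picture exactly as at the end of Subsection \ref{gauge_gravity}: algebra-valued forms are locally reductive, so on a trivialising patch the graded decomposition $e=\sum_{m}e^{m}$ holds globally, with each $e^{m}$ valued in $A_{0}^{m}$ under (G1) or in $A^{m}$ under (G2). Next I would apply the weak $(k_{j},s_{j})$-solvability of $A_{0}^{j}$ (respectively of the restriction of $A$ to $A^{j}$) to split each $e^{m}$ further as $e^{m}=\sum_{r}e_{r}^{m}$, with $e_{r}^{m}$ valued in a weak $(k_{j},s_{j})$-nilpotent subspace $V_{j,r}$. Then I would expand $\curlywedge_{*}^{k+s+1}e$ by multilinearity into a sum of monomials in the $e_{r}^{m}$; since $(k,s)$ is the minimum of the $(k_{j},s_{j})$, each such monomial is, in a suitable bracketing, the evaluation of a $p_{s+1}$-type polynomial on graded components of a $k$-form built from $e$, and therefore lies in $\mathfrak{I}(P,A)$ by weak nilpotency. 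From here the argument of Theorem \ref{theorem_S1} closes the proof: for $n\geq k+s+1$ the inhomogeneous piece is killed by writing it as $(\curlywedge_{*}^{k+s+1}e)\curlywedge_{*}\alpha$ for an appropriate $\alpha$, and for $n\geq k+s+3$ the remaining curvature piece is killed the same way. The commutativity of \eqref{invariant_FABS} together with the linearity of the trace transformation then yields $S_{n,\Lambda}[e,\omega]=0$. Case (G2) is handled identically, replacing $A_{0}$ by $A$ throughout.

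The main obstacle I anticipate is combinatorial bookkeeping. Weak nilpotency is phrased in terms of graded components of a \emph{single} form, whereas expanding $\curlywedge_{*}^{k+s+1}e$ produces many monomials mixing different graded pieces of $e$ in many bracketings, since neither $A$ nor $A_{0}$ is assumed associative. One must verify that every such monomial is genuinely an evaluation of a generator of $\mathfrak{I}(P,A)$ from Section \ref{algebra_valued_gravity}, and that taking the minimum of the pairs $(k_{j},s_{j})$ is the correct invariant controlling the total number of factors required. The key observation making this work is that once the weak nilpotency of the piece with the smallest $(k_{j},s_{j})$ forces a single factor in a monomial to vanish, multilinearity propagates the zero through the rest of the product, so the minimum is indeed the uniform threshold.
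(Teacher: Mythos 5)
Your proposal follows essentially the same route as the paper's proof: reduce to showing a representative Hilbert--Palatini form vanishes, decompose $e$ locally as $e=\sum_{m}\sum_{i}e_{i}^{m}$ using the $\mathfrak{m}$-grading together with the weak solvability of each $A_{0}^{m}$ (resp.\ $A^{m}$), kill $\wedge_{*}^{k+s+1}e$ via weak nilpotency with $(k,s)$ the minimum of the $(k_{j},s_{j})$, and close by the linearity arguments of Theorem \ref{theorem_S1}. The combinatorial bookkeeping you flag is also left implicit in the paper's own proof, so there is nothing to correct.
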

\begin{proof}
The proofs for (G1) and (G2) are very similar, so that we will only
explain the (G1) case. Once again, since everything is linear and
grading-preserving, it is enough to prove that $\alpha_{HP}=0$ for
some representative Einstein-Hilbert form. Particularly, we can prove
for 
\[
\alpha_{HP}=\wedge_{*}^{n-2}e\wedge_{*}\Omega+\frac{\Lambda}{(n-1)!}\wedge_{*}^{n}e.
\]
Under the hypothesis, we can locally write $e=\sum_{m}e^{m}$ with
$e^{m}:TP\rightarrow A_{0}^{m}$. From condition (G1) each $A_{0}^{m}$
is a weak $(k_{m},s_{m})$-solvable subspace, so that it writes as
a sum $A_{0}^{m}=\oplus_{i}V_{i}^{m}$ of weak $(k_{m},s_{m})$-nilpotent
subspaces, which means, in particular, that we can write $e^{m}=\sum_{i}e_{i}^{m}$
locally and, therefore, $e=\sum_{m}\sum_{i}e_{i}^{m}$. Consequently,
for every $l$ we have $\wedge_{*}^{l}e=p_{l}(e_{i_{1}}^{m_{1}},...,e_{i_{l}}^{m_{l}})$
for some polynomial of degree $l$. If we now consider the minimum
$(k,s)$ (over $m$) of $(k_{m},s_{m})$, the fact that each $A_{0}^{m}$
is weakly $(k_{m},s_{m})$-nilpotent then implies $\wedge_{*}^{k+s+1}e=0$.
The remaining steps in the proof are identical to every other given
in the previous theorems. 
\end{proof}

\subsection{Graded Gravity \label{graded_gravity}}

The last section was about ``gauge theories with values in graded
algebras''. We would like to work not only with graded algebras but
in the ``full graded context'', i.e, with genuine graded gauge theories
(particularly, with graded EHP theories). A reductive $A$-valued
gauge theory is about $A$-valued connections, i.e, $1$-forms $\nabla:TP\rightarrow A$
which are reductive respective to some decomposition $A\simeq A_{0}\oplus A_{1}$.
So, ``graded gauge theories'' should be about ``$A$-valued graded
connections''. There are many approaches to formalize the notion
of ``graded connection''. For instance, we have:
\begin{enumerate}
\item Quillen superconnections \cite{superconnections_QUILLEN}, which are
defined as operators on graded vector bundles over a non-graded manifold; 
\item connections on graded manifolds in the spirit of Kostant-Berezin-Leites,
which are defined for graded principal bundles $P\rightarrow M$ over
graded manifolds \cite{connections_graded_principal_bundles};
\item connections on principal $\infty$-bundles, which can be applied in
the domain of any cohesive $\infty$-topos (particularly in the $\infty$-topos
of formal-super-smooth manifolds), where the notion of differential
cohomology can be axiomatized and a $\infty$-bundle with connection
is defined as a cocycle of such cohomology \cite{infinity_bundles_1,infinity_bundles_2,differential_cohomology}.
\end{enumerate}
Here we will not work with any of the models above. Instead, given
a$\mathfrak{m}$-graded algebra $A$, we assume that the bundle $TP$
is also $\mathfrak{m}$-graded (in the sense that it decomposes as
a sum of vector bundles $TP\simeq\oplus_{m}E^{m}$) and we define
an \emph{$A$-valued} \emph{graded connection} \emph{in $P$}\textbf{\emph{
}}\emph{of degree $l$} as a smooth $1$-form $\nabla:TP\rightarrow A$
which has degree $l$, meaning that it decomposes as a sequence of
usual vector-valued $1$-forms $\nabla^{m}:E^{m}\rightarrow A^{m+l}$.
Particularly, for us the\emph{ reductive graded connections} are those
that can be written as $\nabla=e+\omega$, where each $e$ and $\omega$
can themselves be decomposed as maps of degree $l$ 
\[
e^{m},\omega^{m}:E^{m}\rightarrow A^{m+l},\quad\text{with}\quad e^{m}+\omega^{m}=\nabla^{m}.
\]

With the notion of ``reductive graded connections of degree $l$'',
we can define ``graded EHP theories of degree $l$'' in a natural
way. Indeed, chosen a $\mathfrak{m}$-graded FABS, the corresponding
\emph{$\mathfrak{m}$-graded $A$-valued EHP theory} \emph{of degree
$l$ }is the classical theory given by the action functional (\ref{graded_EHP})
restricted to the class of graded connections of degree $l$. 
\begin{rem}
If we think of $A$ as a graded-algebra describing the infinitesimal
symmetries of the theory and if we interpret the graded structure
of the bundle $TP$ as induced by the different flavors of fundamental
objects of the theory, then it is more natural to consider the connections
of degree zero, because they will map each piece $E^{m}$ into each
corresponding subspace of infinitesimal symmetries $A^{m}$. On the
other hand, in some situations (say in the BV-BRST formalism, where
we have ghosts, anti-ghosts and anti-fields) we need to work with
``shifted symmetries'', meaning that degree $l>0$ graded connections
should also have some physical meaning. 
\end{rem}
Now, let us focus on the geometric obstructions of the $A$-valued
EHP theory that appear in the full graded context. First of all, notice
that to give a morphism $f:A'\rightarrow A$ of degree $l$ between
two graded algebras is the same as giving a zero degree morphism $f:A'\rightarrow A[-l]$,
where $A[-l]$ is the graded algebra obtained shifting $A$. Consequently,
the obstructions of a degree $l$ graded $A$-valued EHP theory are
just the obstructions of degree zero graded $A[-l]$-valued EHP theory,
so that it is enough to analyze theories of degree zero.

The core idea of the proof of Theorem \ref{theorem_G1} was to use
that $A$ is graded and the hypothesis (G1) or (G2) in order to conclude
that $\wedge_{*}^{k+s+1}e=0$. More precisely, the grading of $A$
was needed in order to write $e$ as $e\simeq\sum_{m}e^{m}$, with
$e^{m}:TP\rightarrow A_{0}^{m}$, while the ``solvability hypothesis''
(G1) or (G2) allowed us to decompose each $e^{m}$ as $e^{m}=\oplus_{i}e_{i}^{m}$,
so that $e=\oplus_{i,m}e_{i}^{m}$. The hypothesis (G1) or (G2) was
then used once again to wield $\wedge_{*}^{k+s+1}e=0$.

If we assume that $TP$ is a graded bundle, so that $TP\simeq\oplus_{m}E^{m}$,
and if $e$ has degree zero, then the only change in comparison to
the previous ``partially graded'' context is that instead of decomposing
$e$ as a sum $\sum_{m}e^{m}$, we can now write it as a genuine direct
sum $e=\oplus_{m}e^{m}$, with $e^{m}:E^{m}\rightarrow A_{0}\cap A^{m}$.
Therefore, under the same hypothesis (G1) or (G2), for theories of
\textbf{degree zero} we get exactly the same obstruction results.
Due to the argument of the last paragraph we then have the following
general obstruction result, which applies for graded theories of arbitrary
degree.
\begin{thm}
\label{theorem_G1_2}Let $M$ be an $n$-dimensional spacetime, $P\rightarrow M$
be a bundle such that $TP$ is $\mathfrak{m}$-graded and $(A,*)$
be a $\mathfrak{m}$-graded algebra endowed with a vector space decomposition
$A_{0}\oplus A_{1}$. Given $l\geq0$, assume that $A[-l]$ satisfies
condition\emph{ (G1)} or \emph{(G2)}. Furthermore, let $(k,s)$ be
the minimum of $(k_{j},s_{j})$. If $n\geq k+s+1$, then for any compatible
graded FABS and any trace transformation, the corresponding inhomogeneous
$A$-valued graded EHP theory of degree $l$ equals the homogeneous
ones. If $n\geq k+s+3$, then both theories are trivial.
\end{thm}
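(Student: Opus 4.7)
The plan is to follow the two-paragraph sketch given immediately before the theorem and turn it into a rigorous argument. First I would perform the degree-shift reduction: a $1$-form $\nabla:TP\to A$ of degree $l$, together with the induced decomposition $\nabla^{m}:E^{m}\to A^{m+l}$, is exactly the same datum as a degree-$0$ morphism $\nabla:TP\to A[-l]$ whose components $\nabla^{m}:E^{m}\to (A[-l])^{m}$ land in the corresponding homogeneous pieces. The Hilbert--Palatini polynomial is built from $e$, $\omega$, and $d$, and since the product $\wedge_{*}$ on $\Lambda(P;A)$ is completely determined by $*$, rewriting the target as $A[-l]$ does not alter any of the polynomial identities that appear in the proof of Theorem \ref{theorem_G1}; it only changes the labelling of the grading. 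Thus the inhomogeneous (resp.\ homogeneous) $A$-valued graded EHP theory of degree $l$ has the same action functional as the inhomogeneous (resp.\ homogeneous) $A[-l]$-valued graded EHP theory of degree $0$, and it suffices to prove the statement under the assumption $l=0$ with $A$ itself satisfying (G1) or (G2).

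Second, I would run the argument of Theorem \ref{theorem_G1} verbatim, noting only one improvement afforded by the hypothesis that $TP$ itself is $\mathfrak{m}$-graded. Since $e$ has degree zero, the decomposition $e=\sum_{m}e^{m}$ that held merely locally in Subsection \ref{graded_valued_gravity} now holds globally and as a genuine direct sum $e=\bigoplus_{m}e^{m}$ with $e^{m}:E^{m}\to A_{0}\cap A^{m}$. Invoking (G1) (or, in the (G2) case, applied to all of $A^{m}$), each $A_{0}^{m}$ is a weak $(k_{m},s_{m})$-solvable subspace, hence further decomposes as $\bigoplus_{i}V_{i}^{m}$ into weak $(k_{m},s_{m})$-nilpotent pieces, so that $e^{m}=\bigoplus_{i}e_{i}^{m}$ and therefore $e=\bigoplus_{m,i}e_{i}^{m}$. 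Expanding $\wedge_{*}^{\,r}e$ by multilinearity gives a sum of polynomials $p_{r}(e_{i_{1}}^{m_{1}},\dots,e_{i_{r}}^{m_{r}})$; setting $(k,s)=\min_{j}(k_{j},s_{j})$, the weak-nilpotency defining property of each $V_{i}^{m}$ forces every term with $r=k+s+1$ to vanish, whence $\wedge_{*}^{\,k+s+1}e=0$.

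Third, from $\wedge_{*}^{\,k+s+1}e=0$ the conclusion follows by the standard two-step bookkeeping used in Theorems \ref{theorem_C1}, \ref{theorem_S1}, \ref{theorem_A1}, \ref{theorem_G1}: when $n\geq k+s+1$, choosing $\alpha=\tfrac{\Lambda}{(n-1)!}\wedge_{*}^{\,n-(k+s+1)}e$ shows that the cosmological term $\wedge_{*}^{\,n}e$ already vanishes, so $\alpha_{HP}$ reduces to its homogeneous part; when $n\geq k+s+3$, choosing instead $\alpha=(\wedge_{*}^{\,n-(k+s+1)}e)\wedge_{*}\Omega$ or the analogous bracketing kills the kinetic term $\wedge_{*}^{\,n-2}e\wedge_{*}\Omega$ as well, so $\alpha_{HP}=0$. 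By linearity of $\pi$, $\jmath'$, and $\operatorname{tr}$, the action functional (\ref{graded_EHP}) then vanishes identically, independently of the choice of compatible graded FABS or of trace transformation.

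The main technical obstacle, really the only one deserving attention, is verifying that the shift $A\mapsto A[-l]$ is compatible with the ``weak $(k_{j},s_{j})$-solvability'' data appearing in (G1)/(G2): one must check that a weak $(k,s)$-nilpotent subspace of $A$ remains weak $(k,s)$-nilpotent when re-indexed by $A[-l]$, and similarly for solvability. This is immediate because the product $*$ is unchanged and the polynomials $p_{s+1}(\alpha^{m_{1}},\dots,\alpha^{m_{s+1}})$ of Remark \ref{solv_modules} depend only on the decomposition $V\simeq\bigoplus_{m}V_{m}$ as vector spaces, which is preserved by the shift. Once this is in hand, the rest is the routine combination of the two earlier proof templates.
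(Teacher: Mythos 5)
Your proposal is correct and follows essentially the same route as the paper: the paper's own argument is precisely the degree-shift reduction to a degree-zero $A[-l]$-valued theory, followed by the proof of Theorem \ref{theorem_G1} with the local decomposition of $e$ upgraded to a genuine direct sum $e=\oplus_{m}e^{m}$ coming from the grading of $TP$, and then the standard bookkeeping yielding $\wedge_{*}^{k+s+1}e=0$. Your closing verification that the shift preserves the weak-solvability data is not spelled out in the paper but is a harmless (and correct) addition.
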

Assume that the $\mathfrak{m}$-graded algebra $A$ is bounded from
below, meaning that $\mathfrak{m}$ has a partial order $\leq$ and
that there is $m_{o}\in\mathfrak{m}$ such that $A^{m}\simeq0$ if
$m<m_{o}$. Consider a $\mathfrak{m}$-graded theory of degree $l$
taking values in that kind of algebra. In order to apply the last
theorem we need to verify one of conditions (G1) or (G2). Such conditions
are about $A[-l]$. This algebra has grading $A[-l]^{m}=A^{m+l}$.
Because $A$ is bounded, the first solvability condition falls on
$A^{m_{o}+l}$, i.e, \emph{no condition is needed between $m_{o}$
and $m_{o}+l$}. This is, in essence, the new phenomenon obtained
when we work with ``full graded theories'' in the sense introduced
above. As a corollary:
\begin{cor}
In the same notation of the last theorem, assume that $A$ is bounded
from below and from above, respectively in degrees $m_{o}$ and $m_{1}$.
If $l$ does not divide $m_{1}-m_{0}$, then \emph{(G1)} and \emph{(G2)}
are automatically satisfied and, therefore, for any graded bundle
$P$ the corresponding graded EHP theory of degree $l$ is trivial.
Otherwise, i.e, if $m_{1}-m_{0}=k.l$ for some $k$, then conditions
\emph{(G1)} and \emph{(G2)} need to be fulfilled exactly for $k$
terms.
\end{cor}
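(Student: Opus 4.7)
The plan is to apply Theorem~\ref{theorem_G1_2} after a careful accounting of where the solvability conditions on $A[-l]$ can genuinely be non-trivial.

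First I would identify a degree-$l$ graded $A$-valued EHP theory with a degree-$0$ graded $A[-l]$-valued EHP theory via the shift, as noted in the discussion preceding the corollary: a morphism into $A$ of degree $l$ is the same datum as a morphism of degree $0$ into $A[-l]$, so the degree-$l$ theory with coefficients in $A$ coincides with the degree-$0$ theory with coefficients in $A[-l]$, whose grading is $A[-l]^{m}=A^{m+l}$. Theorem~\ref{theorem_G1_2} then reduces the question to the verification of (G1) or (G2) for the shifted algebra.

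Next I would exploit the boundedness $A^{m}=0$ for $m\notin [m_{0},m_{1}]$ to see that $A[-l]^{j}=A^{j+l}=0$ outside $[m_{0}-l,m_{1}-l]$, so (G1) and (G2) are automatically vacuous on every graded piece lying outside this interval. The remark preceding the corollary sharpens this further: the first index at which a genuine solvability condition can appear is $A^{m_{0}+l}$, so no condition is required on the pieces corresponding to $A^{m_{0}},\dots,A^{m_{0}+l-1}$. Iterating this same shift, one checks that the admissible indices at which a genuine condition may survive form the arithmetic progression $m_{0}+l,\,m_{0}+2l,\,m_{0}+3l,\dots$ truncated by the upper bound $m_{1}$.

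The dichotomy now falls out. If $m_{1}-m_{0}=kl$, the progression meets $m_{1}$ after exactly $k$ steps, so precisely $k$ graded pieces require the solvability hypothesis; Theorem~\ref{theorem_G1_2} then yields the second part of the corollary. If instead $l\nmid (m_{1}-m_{0})$, the progression can never reach $m_{1}$: it either overshoots into a degree where $A=0$, or lands in a degree which, by the same bounded-support argument applied to the local decomposition $e=\sum_{j}e^{j}$ from the proof of Theorem~\ref{theorem_G1_2}, is forced outside $[m_{0},m_{1}]$ after evaluating the iterated polynomial $p_{s+1}(e^{j_{1}},\dots,e^{j_{s+1}})$, whose output lives in $A^{(j_{1}+\dots+j_{s+1})+(s+1)l}$. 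Consequently every relevant polynomial vanishes for degree reasons, (G1) and (G2) hold for free, and Theorem~\ref{theorem_G1_2} delivers triviality of the entire theory.

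The hard part is the second step: one must make precise that the shift $A\mapsto A[-l]$, combined with the local graded decomposition of $e$ used inside the proof of Theorem~\ref{theorem_G1_2}, really reduces the set of genuine solvability conditions to the arithmetic progression described above, rather than merely to the two boundary components. Once this bookkeeping is in place, the divisibility count of ``exactly $k$ terms'' is a short combinatorial observation and the rest of the argument is identical to the proof of Theorem~\ref{theorem_G1_2}.
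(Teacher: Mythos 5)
The paper offers no formal proof of this corollary: it is asserted as a consequence of the preceding paragraph, which only observes that the conditions (G1)/(G2) for $A[-l]$ first become non-vacuous at $A^{m_{0}+l}$. Your proposal follows exactly this intended route --- identify the degree-$l$ theory with a degree-zero $A[-l]$-valued theory, use boundedness to locate the first genuine condition at $m_{0}+l$, and iterate --- so in spirit it matches the paper. But you have correctly isolated, without closing, the gap that the paper itself leaves open, and it is a genuine one.

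The problem is the claim that the degrees carrying a genuine solvability condition form the arithmetic progression $m_{0}+l,\,m_{0}+2l,\dots$ and, in particular, that non-divisibility of $m_{1}-m_{0}$ by $l$ makes (G1)/(G2) vacuous. Neither follows from the degree bookkeeping available. Conditions (G1)/(G2) are imposed on every graded piece $A_{0}^{j}$ with $A^{j+l}\neq0$, i.e.\ on all $j$ with $j+l\in[m_{0},m_{1}]$, not only on those congruent to $m_{0}$ modulo $l$; the only piece that boundedness kills for free is the one below $m_{0}+l$. If $l<m_{1}-m_{0}$ and $l\nmid(m_{1}-m_{0})$, the progression still has terms strictly inside the support of $A$ (take $m_{0}=0$, $m_{1}=5$, $l=2$: degrees $2$ and $4$ carry a priori non-vacuous conditions), so nothing is ``automatically satisfied''. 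Your attempted rescue --- that $p_{s+1}(e^{j_{1}},\dots,e^{j_{s+1}})$ lands in $A^{(j_{1}+\cdots+j_{s+1})+(s+1)l}$ and is forced out of $[m_{0},m_{1}]$ --- does not work either: the source degrees $j_{i}$ come from the grading of $TP$, which is not assumed bounded or otherwise constrained, and even when it is, $\sum_{i}j_{i}+(s+1)l$ can perfectly well remain in $[m_{0},m_{1}]$. To make the statement true one needs an extra hypothesis (for instance that $e$ is concentrated in a single source degree, or that all iterated products are confined to the progression because $A$ is generated over its bottom piece); as stated, both the dichotomy and the count of ``exactly $k$ terms'' remain unjustified, in your proposal and in the paper alike.
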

A particular consequence is the following:
\begin{cor}
Let $A\simeq A^{0}\oplus A^{1}$ be a $\mathbb{Z}_{2}$-graded algebra,
endowed with a vector space decomposition $A\simeq A_{0}\oplus A_{1}$,
where $A_{i}$ is not necessarily $A^{i}$ and $A_{0}$ is a subalgebra
of $A$. If $A_{0}\cap A^{1}$ is a weak $(k,s)$-solvable subspace
of $A_{0}$, then any $A$-valued $\mathbb{Z}_{2}$-graded EHP theory
of degree one over a $n$-dimensional spacetime $M$ is trivial if
$n\geq k+s+3$.
\end{cor}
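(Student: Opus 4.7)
The plan is to obtain this as a direct specialization of the preceding corollary. Take $\mathfrak{m}=\mathbb{Z}_2$, which is automatically bounded from below by $m_0=0$ and from above by $m_1=1$, and fix degree $l=1$. Then $m_1-m_0=1=1\cdot l$, so the preceding corollary asserts that conditions (G1)/(G2) of Theorem \ref{theorem_G1_2} have to be verified for \emph{exactly one} term.

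Next, I would identify precisely which term this is. The shifted algebra $A[-1]$ has grading $A[-1]^0=A^1$ and $A[-1]^1=A^0$. The argument leading to the previous corollary (``no condition is needed between $m_0$ and $m_0+l$'') shows that the single required condition falls on degree $m_0+l=1$, i.e.\ on $A_0\cap A^{m_0+l}=A_0\cap A^1$, regarded as a subspace of $A_0$. Note that the vector space decomposition $A\simeq A_0\oplus A_1$ is independent of the $\mathbb{Z}_2$-grading, so $A[-1]_0=A_0$ and the subspace in question is unaffected by the shift; only its \emph{label} (the grading degree where the condition is imposed) changes.

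The hypotheses of the corollary now match (G1) for $A[-1]$ exactly: $A_0$ is a subalgebra of $A$ by assumption, and $A_0\cap A^1$ is weak $(k,s)$-solvable in $A_0$ by assumption, which is the sole condition required. Therefore Theorem \ref{theorem_G1_2} applies and yields triviality of the $A$-valued $\mathbb{Z}_2$-graded EHP theory of degree $1$ whenever $n\geq k+s+3$.

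There is essentially no obstacle beyond bookkeeping. The only subtle point is to check that the shift $A[-1]$, combined with the non-graded decomposition $A\simeq A_0\oplus A_1$, produces precisely the hypothesis stated in the corollary. Once one unwinds the definitions this is immediate, and the substantive content has already been established in Theorem \ref{theorem_G1_2} and in the preceding corollary; the present statement is just the simplest nontrivial instance (super/$\mathbb{Z}_2$-graded, degree one) of that general machinery.
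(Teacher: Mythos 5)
Your derivation is correct and matches the paper's intent exactly: the paper states this corollary without proof as "a particular consequence" of the preceding corollary, and your specialization $\mathfrak{m}=\mathbb{Z}_{2}$, $m_{0}=0$, $m_{1}=1$, $l=1$ (so exactly one solvability condition is required, falling on $A_{0}\cap A^{m_{0}+l}=A_{0}\cap A^{1}$) is precisely the intended argument. The bookkeeping observation that the non-graded decomposition $A\simeq A_{0}\oplus A_{1}$ is unaffected by the shift $A[-1]$ is also the right point to flag.
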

Examples will be explored in next section.

\subsection{Some Speculation}

When we look at the previous obstruction theorems, all of them (except
Theorem \ref{holonomy}) were conceived as abstractions of a single
``fundamental obstruction theorem'', namely Theorem \ref{theorem_C1},
introduced in the most concrete situation: the \emph{linear/matrix
context}. Despite the fact that ``derived obstruction theorems''
hold in more abstract contexts, they are very closely related to the
first one, in that they require the same kind of hypothesis on the
underlying algebra: a\emph{ }``solvability condition''. Furthermore,
the more abstract the context is, the weaker the required ``solvability
condition'' is. Indeed, Theorem \ref{theorem_C1} (for linear EHP
theories) requires ``$(k,s)$-nil condition'', while Theorem \ref{theorem_S1}
(for gauge EHP theories) requires ``$(k,s)$-solv condition''. Furthermore,
Theorem \ref{theorem_A1} (for arbitrary $A$-valued EHP theories)
requires ``partially $(k,s)$-solv condition'', in the sense that
$A\simeq A_{0}\oplus A_{1}$, with $A_{0}$ $(k,s)$-solv, and Theorem
\ref{theorem_G1} (for arbitrary graded-valued EHP theories) requires
``locally $(k,s)$-solv condition'', meaning that each $A_{0}^{m}$
is weak $(k,s)$-solvable.

On the other hand, the same strategy used here can a piori be applied
to get geometric obstructions for any classical theory $(\operatorname{Conf},S)$
whose space of configurations $\operatorname{Conf}$ is some class
of algebra-valued smooth forms. Let us call such theories \emph{smooth
forms theories}. However, recall that our strategy here was based
in working first in the ``linear context'', where we identify a
``fundamental algebra condition''. Then, abstracting the context
we could consider weaker conditions than the ``fundamental'' one.
Therefore, in order to use this strategy in other results we need
to find a corresponding ``fundamental algebra condition''. This
leads us to speculate:
\begin{conjecture*}[roughly]
Any smooth forms theory admits a fundamental algebraic condition. 
\end{conjecture*}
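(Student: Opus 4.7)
The plan is first to make the conjecture precise, then to mimic the strategy that worked for EHP. A \emph{smooth forms theory} should be formalized as a tuple $(\mathbf{C},E_{-},S,\imath,\xi,\operatorname{tr},\mathfrak{F})$, where the first five items constitute a compatible FABS in the sense of Subsection \ref{FABS}, the trace $\operatorname{tr}$ turns bundle-valued forms into real-valued ones, and $\mathfrak{F}$ is a ``form-polynomial'' in finitely many variables (generalising the Hilbert--Palatini form $\alpha_{HP}$) whose evaluation on the tuple of fields produces the integrand. A \emph{fundamental algebraic condition} for the theory would then be a pair $(\mathcal{P},\delta)$, with $\mathcal{P}$ a polynomial identity on the coefficient algebra $A$ and $\delta$ a dimension bound depending on $\mathfrak{F}$ and $\mathcal{P}$, such that whenever $A$ satisfies $\mathcal{P}$ and $\dim M\geq\delta$ the evaluated form $\mathfrak{F}$ vanishes identically, and hence so does the action.

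The strategy to prove existence is the one that governed every obstruction theorem of Sections \ref{sec_obstruction} and \ref{sec_abstract_obstructions}: start in the rigid linear/matrix context, identify a minimal PI that forces $\mathfrak{F}$ to vanish, and then weaken. Concretely, I would first write $\mathfrak{F}$ as a word $w(\alpha_{1},\ldots,\alpha_{r})$ in the $\wedge_{*}$-algebra of $A$-valued forms. By Proposition \ref{PI}, any PI $p(x_{1},\ldots,x_{m})=0$ on $A$ lifts to the same identity, in the graded sense, on $\Lambda(P;A)$. Hence one seeks a finite family of identities $\{p_{i}\}$ whose graded lifts can be combined, by substitution into $w$, to rewrite $w$ as $0$ whenever the fields $\alpha_{i}$ have sufficiently high total degree; the ``sufficiently high'' clause is precisely the source of the bound $\delta$. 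The abstractions observed in the EHP case (matrix $\rightsquigarrow$ extended $\rightsquigarrow$ algebra-valued $\rightsquigarrow$ graded-valued) then correspond to systematically weakening $\mathcal{P}$ while keeping the argument structurally unchanged, so once the fundamental condition is found in the rigid case, the derived hierarchy comes for free, exactly as $(k,s)$-nil relaxed to $(k,s)$-solv and then to weak $(k,s)$-solvable.

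The main obstacle is \textbf{universality}: one must prove that for every nontrivial word $w$ in the free graded algebra on the relevant generators there actually exists a PI of the required form, i.e., a non-zero PI (in Specht's sense) whose lift annihilates $w$ and which is compatible with any prescribed splitting $A\simeq A_{0}\oplus A_{1}$. In PI-theoretic language, this amounts to showing that the $T$-ideal cut out by the trivialising relations for $w$ is proper in the relatively free algebra associated to the variety of interest, a nontrivial question in general. A reasonable first attempt would be to multilinearize $w$ and look for a Capelli-type or standard polynomial that annihilates it on account of the degree count in $\Lambda(P;A)$, then invoke Kemer's classification of $T$-ideals to produce an explicit $\mathcal{P}$. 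A secondary obstacle, clearly visible already in the present paper, is making the assignment $(\mathcal{P},w)\mapsto\delta$ effective rather than merely existential, so that the fundamental condition provides computable dimension bounds in each concrete example and not only an abstract trivialisation mechanism.
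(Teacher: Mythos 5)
There is nothing to compare your proposal against: the statement you were asked to prove is a \emph{conjecture}, stated only ``roughly,'' and the paper offers no proof of it. The only evidence the paper gives is the observation, in the same subsection, that for ``polynomial smooth forms theories'' --- those with $\operatorname{Conf}=\Lambda(P;A)$ and action $S=\int_{M}\jmath(p_{s}(\alpha_{1},\ldots,\alpha_{k}))$ --- the conjecture holds tautologically: the fundamental algebraic condition is simply that $p_{s}$ itself be a PI of $A$ (equivalently, via Proposition \ref{PI}, of $\Lambda(P;A)$). The paper then immediately points out that this condition is far from optimal (it is a nilpotency-type condition, much stronger than the $(k,s)$-solv conditions actually used in the obstruction theorems), which is what motivates the \emph{second} conjecture about optimal conditions.

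Your proposal is a sensible formalization and research program, and your identification of the ``universality'' obstacle --- proving that for every nontrivial word $w$ there exists a genuine, independent PI (standard, Capelli, or otherwise) whose graded lift annihilates $w$, compatibly with a prescribed splitting $A\simeq A_{0}\oplus A_{1}$ --- is exactly where the real mathematical content lies. But that is precisely the step you do not carry out, so the proposal does not establish the statement; it restates it in PI-theoretic language and defers the hard part to Kemer-type machinery without showing it applies. Two further remarks. First, you are implicitly proving something stronger than what is asked: the conjecture as ``roughly'' stated is already satisfied by the tautological condition above, and what you are reaching for is the optimal condition of the paper's subsequent conjecture. Second, your proposed bound $\delta$ depending on $\mathfrak{F}$ and $\mathcal{P}$ is a reasonable refinement of the paper's $n\geq k+s+1$ and $n\geq k+s+3$ thresholds, but making it effective is, as you note, an additional open problem, not a consequence of anything in the paper.
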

For instance, in the way that it is stated, it is easy to verify that
this conjecture is true for ``polynomial smooth forms theories'',
i.e, for theories whose configuration space is $\operatorname{Conf}=\Lambda(P;A)$,
for some algebra $A$ endowed with a vector space decomposition $A\simeq A_{1}\oplus...\oplus A_{k}$,
and whose action functional is 
\[
S[\alpha_{1},...,\alpha_{k}]=\int_{M}\jmath(p_{s}(\alpha_{1},...,\alpha_{k})),
\]
where $\jmath$ is some FABS and $p_{s}$ is a polynomial of degree
$s$ in $\Lambda(P;A)$. Indeed, the desired condition is that $p_{s}$
be a PI of $A$. Now, notice that EHP theories are ``polynomial smooth
forms theories of degree'' in the above sense. Therefore, they have
a ``fundamental algebra condition'' given by the vanishing of $p_{s}$.
This is a \emph{nilpotent condition}, which is much stronger than
the \emph{$(k,s)$-solv condition} previously established. This teaches
us that the same smooth forms theory may admit two fundamental algebraic
conditions, leading us to search for the optimal one:
\begin{conjecture*}[roughly]
Any smooth forms theory admits an optimal fundamental algebraic condition. 
\end{conjecture*}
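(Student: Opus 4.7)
The plan is first to make the conjecture precise before attempting a proof. I would define a smooth forms theory as a tuple $(P,A,\mathrm{Conf},S,\jmath)$ where $S$ is an action functional on a configuration space $\mathrm{Conf}\subset\Lambda(P;A)$ built from finitely many polynomial expressions in the fields, and $\jmath$ is a compatible FABS endowed with a trace. An \emph{algebraic condition} on $A$ is to be identified with a (possibly infinite) subset $C$ of the free nonassociative algebra on the formal variables entering $S$; the condition is \emph{fundamental} if requiring every element of $C$ to be a polynomial identity of $A$ forces $S\equiv 0$ on $\mathrm{Conf}$, and \emph{optimal} if it is the weakest fundamental condition in the implication preorder.

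Next I would establish existence of an optimal condition lattice-theoretically. The collection $\mathcal{F}$ of fundamental conditions, ordered by reverse inclusion, is closed under intersections: if each $C_\lambda$ forces triviality then so does $\bigcap_\lambda C_\lambda$, since a smaller set of required identities is a stronger hypothesis on $A$. Hence $\mathcal{F}$ is a complete meet-semilattice, and in particular has a unique largest (weakest) element $C^{\star}$, namely the intersection of all fundamental conditions. The natural candidate for $C^{\star}$ is the set of all formal polynomials $p$ such that $\jmath(\mathrm{tr}(p(\alpha_1,\dots,\alpha_k)))$ integrates to zero on every $(\alpha_1,\dots,\alpha_k)\in\mathrm{Conf}$. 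By Proposition \ref{PI} a genuine PI of $A$ lifts to a PI in $\Lambda(P;A)$, so the set of PI's of $A$ is always a fundamental condition; the real question becomes whether $C^{\star}$ can be realized as an \emph{algebraic} condition intrinsic to $A$ rather than merely as a \emph{configurational} one depending on the specific $P$.

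The hard part will be exactly this last distinction: separating algebraic triviality from accidental vanishing of $S$ on a particular bundle. For very restrictive $P$ (for instance with topology forcing $\Lambda^k(P;A)\simeq 0$) the action can vanish without any identity holding in $A$, so a naive definition of $C^{\star}$ will not be intrinsic. I would attempt to circumvent this by a universality argument: pass to a sufficiently free test bundle $P_{\mathrm{univ}}\in\mathbf{C}$ (for instance the frame bundle of a high-dimensional spacetime or a classifying-type object in the FABS category) for which the evaluation map from formal polynomials to $A$-valued forms is injective in each multidegree. Triviality of $S$ on $P_{\mathrm{univ}}$ would then, by the converse half of Proposition \ref{PI}, reduce to a set of honest PI's in $A$, giving $C^{\star}$ the required algebraic form. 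Making this universality precise within the FABS framework and verifying its compatibility with the further abstractions (graded coefficients, degree $l$ graded connections as in Theorem \ref{theorem_G1_2}) is where I expect the bulk of the technical work, and where the conjecture might actually fail if no such universal test bundle is admissible in the chosen category $\mathbf{C}$.
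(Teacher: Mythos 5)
The statement you are addressing is explicitly labelled a conjecture in the paper and is offered purely as speculation; the authors give no proof, only the observation that a single smooth forms theory can admit two incomparable fundamental conditions (a nilpotency PI versus a $(k,s)$-solv condition), which is precisely what motivates asking for an optimal one. So any proof attempt is necessarily going beyond the paper, and the right question is whether your argument actually closes the gap. It does not, because of a sign error in the logic of ``strength.'' You order conditions by reverse inclusion and claim that $\mathcal{F}$ is closed under arbitrary intersections ``since a smaller set of required identities is a stronger hypothesis on $A$.'' This is backwards: requiring \emph{fewer} identities of $A$ is a \emph{weaker} hypothesis, satisfied by more algebras. Concretely, if $C_1=\{p\}$ and $C_2=\{q\}$ are both fundamental but $p\neq q$, then $C_1\cap C_2=\emptyset$ is the vacuous condition, satisfied by every algebra, and it is fundamental only if $S$ vanishes identically for every coefficient algebra --- which is false for any nontrivial theory. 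Hence the intersection of fundamental conditions is generally \emph{not} fundamental, $\mathcal{F}$ is not a complete meet-semilattice under your order, and the ``unique weakest element $C^{\star}$'' does not exist by this argument. What survives is only the trivial observation that $\mathcal{F}$ is upward closed (supersets of fundamental conditions are fundamental), which gives maximal, i.e.\ strongest, conditions --- the opposite of what optimality demands. Even a Zorn-type argument on descending chains fails for the same reason: the intersection of a chain of fundamental conditions need not be fundamental, so minimal elements are not guaranteed, and even if they were, a minimum (a single optimal condition) would require additionally that any two fundamental conditions have a fundamental lower bound, which is exactly the open content of the conjecture.

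Your second and third paragraphs raise a legitimate and separate difficulty --- that vanishing of $S$ on a particular bundle $P$ may be accidental rather than forced by identities in $A$, and that one would want a ``universal test bundle'' to convert configurational triviality into honest PI's via the converse half of Proposition \ref{PI}. This is a sensible refinement of the problem and is consistent with the paper's framework, but it is downstream of the existence question and cannot repair the lattice argument. If you want to salvage the approach, the object to study is not the intersection of fundamental conditions but rather the \emph{variety-theoretic} formulation: each fundamental condition $C$ determines the class $V(C)$ of algebras satisfying it, fundamentality says $V(C)$ is contained in the class $\mathcal{T}$ of algebras on which the theory trivializes, and optimality asks whether $\mathcal{T}$ itself (or its largest sub-variety closed under the relevant operations) is cut out by polynomial identities. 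That is where the genuine difficulty of the conjecture lives, and your proposal does not yet engage with it.
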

We can go further and ask if there is some kind of ``universal algebraic
condition''. More precisely, suppose a collection $\mathcal{C}$
of smooth forms theories satisfying last conjecture is given. Given
two of those theories, if the optimal algebraic condition of one is
contained in the optimal algebraic condition of the other, then both
can be simultaneously trivialized. If this is not the case, the union
of those optimal conditions will clearly trivialize them simultaneously.
But, the union of optimal conditions is not necessarily the optimal
one. This leads us to define the \emph{optimal fundamental algebraic
condition} of $\mathcal{C}$ as the weaker algebraic condition under
which each classical theory in $\mathcal{C}$ becomes trivial, and
to speculate its existence:
\begin{conjecture*}[roughly]
Any collection $\mathcal{C}$ of smooth forms theories admits an
optimal fundamental algebraic condition. 
\end{conjecture*}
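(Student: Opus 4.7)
The plan is to leverage the second conjecture (optimality for a single theory) together with a lattice-theoretic construction of infima. First I would formalize an \emph{algebraic condition} as a predicate $\mathfrak{P}$ on the data parametrizing a smooth forms theory (an algebra $A$ together with a vector space decomposition $A \simeq A_0 \oplus A_1$, and whatever additional structure, such as a grading, the class in question carries). Declare $\mathfrak{P}_1 \leq \mathfrak{P}_2$ whenever $\mathfrak{P}_2 \Rightarrow \mathfrak{P}_1$ at the level of trivialization (i.e.\ every theory that $\mathfrak{P}_2$ trivializes is also trivialized by $\mathfrak{P}_1$). For a fixed smooth forms theory $\mathcal{T}$, let $\mathbf{Triv}(\mathcal{T})$ denote the collection of algebraic conditions trivializing $\mathcal{T}$; by the second conjecture, $\mathbf{Triv}(\mathcal{T})$ has a maximum element $\mathfrak{P}^{\mathrm{opt}}_{\mathcal{T}}$.

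Next, for a collection $\mathcal{C}$, observe that an algebraic condition simultaneously trivializes every $\mathcal{T} \in \mathcal{C}$ exactly when it lies in $\bigcap_{\mathcal{T} \in \mathcal{C}} \mathbf{Triv}(\mathcal{T})$, which is equivalent to implying each $\mathfrak{P}^{\mathrm{opt}}_{\mathcal{T}}$. The candidate for the optimal condition of $\mathcal{C}$ is therefore the meet (infimum in the implication order) $\bigwedge_{\mathcal{T} \in \mathcal{C}} \mathfrak{P}^{\mathrm{opt}}_{\mathcal{T}}$. The main task is then to show that this meet exists \emph{as an algebraic condition}, not merely as an abstract lower bound in the order. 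I would do this by presenting algebraic conditions Lawvere-style: each $\mathfrak{P}^{\mathrm{opt}}_{\mathcal{T}}$ is cut out by a set $\Sigma_{\mathcal{T}}$ of generalized PI-type axioms (in the spirit of Proposition \ref{PI} and of the nilpotency/solvability axioms introduced in Subsections \ref{nil_algebras}--\ref{solv_algebras}), and the union $\bigcup_{\mathcal{T} \in \mathcal{C}} \Sigma_{\mathcal{T}}$ presents the desired meet. If that union remains a valid axiomatization (e.g.\ lives in an ambient logic closed under arbitrary conjunctions), one obtains the optimal condition directly; otherwise one applies Zorn's lemma to the (nonempty) poset of conditions trivializing every $\mathcal{T} \in \mathcal{C}$, provided chains have upper bounds via termwise disjunction of axioms.

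The hard part will be making the notion of ``algebraic condition'' precise enough that the indicated meet is itself algebraic and not merely a formal set-theoretic conjunction. Two genuine obstacles appear. First, for an uncountable $\mathcal{C}$, $\bigwedge \mathfrak{P}^{\mathrm{opt}}_{\mathcal{T}}$ is encoded by an infinitary family of PI-type identities, which may fail to be finitely axiomatizable; one needs to either enlarge the ambient language to allow infinitary axioms or restrict to collections $\mathcal{C}$ whose optimal conditions lie in a fixed complete sublattice (for instance, conditions expressible inside the $(k,s)$-solv hierarchy, which is visibly closed under meets by taking maxima of $(k,s)$-pairs). Second, the second conjecture is itself not proved, so the construction is conditional; one would need to show that \emph{uniqueness} of $\mathfrak{P}^{\mathrm{opt}}_{\mathcal{T}}$ holds up to logical equivalence, since otherwise $\bigcap \mathbf{Triv}(\mathcal{T})$ is no longer generated by a canonical family.

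A safer fallback, which I would present as an unconditional ``formal'' version of the conjecture, is to apply the Dedekind--MacNeille completion to the poset of algebraic conditions (or just to $\bigcap_{\mathcal{T}\in\mathcal{C}} \mathbf{Triv}(\mathcal{T})$); this produces a canonical universal lower bound, and the content of the conjecture then becomes the assertion that this formal infimum is representable by an honest algebraic condition. Establishing such representability in any reasonable ambient context (matrix algebras, graded algebras, or the full $A$-valued framework of Section \ref{sec_abstract_obstructions}) would constitute the substantive theorem lurking behind the speculation.
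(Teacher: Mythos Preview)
The paper does not prove this statement: it is explicitly a conjecture (labelled ``roughly'') in the speculative subsection, and the surrounding text only motivates it heuristically by observing that for two theories the union of their optimal conditions trivializes both but need not itself be optimal. There is therefore no proof in the paper against which to compare your proposal.

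What you have written is a reasonable research outline that goes well beyond anything the paper attempts. Your lattice-theoretic framing (ordering conditions by implication, taking the meet of the single-theory optima, and asking whether that meet is representable by an honest algebraic condition) is a sensible formalization of the informal discussion preceding the conjecture, and you correctly flag the two real difficulties: the approach is conditional on the previous (also unproved) conjecture, and for infinite $\mathcal{C}$ the candidate meet may only exist as an infinitary conjunction rather than as a finitely-axiomatizable PI-type condition. Just be clear that none of this appears in the paper; you are proposing new work, not reconstructing an existing argument.
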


\section{Examples \label{sec_examples}}

In the present section we will give realizations of the obstruction
theorems studied previously. Due to the closeness with the genuine
Lorentzian EHP theory, our focus is on the ``concrete context'',
meaning that we will give many examples of ``concrete geometries''
which realize the obstruction theorems of Section \ref{sec_obstruction}.
Even so, some examples for the ``abstract context'' of Section \ref{sec_abstract_obstructions}
will also be given.

\subsection{Linear Examples \label{linear_examples}}

We start by considering the ``fully linear'' context of Subsection
\ref{matrix_gravity}, i.e, EHP theories defined on a $G$-principal
bundle $P\rightarrow M$ respective to a group reduction $H\hookrightarrow G$,
where $G$ is a linear group and $M$ is a smooth $n$-manifold. The
objects of interest are the classical reductive Cartan connections
on $P$, i.e, pairs $(e,\omega)$ of $1$-forms such that $e$ takes
values in $\mathfrak{g}/\mathfrak{h}$ and $\omega$ takes values
in $\mathfrak{h}$. It follows from Corollary \ref{corollary_C1}
and Theorem \ref{theorem_duals} that if $n\geq6$ and 
\begin{enumerate}
\item[(E1)] $H\subset G$ is normal, with $\mathfrak{g}/\mathfrak{h}\subset\mathfrak{so}(k_{1})\oplus...\oplus\mathfrak{so}(k_{r})$,
then the corresponding gauge EHP theory is trivial; 
\item[(E2)] $\mathfrak{h}\subset\mathfrak{so}(k_{1})\oplus...\oplus\mathfrak{so}(k_{r})$,
then the geometric/algebraic dual EHP theories are trivial;
\item[(E3)] $\mathfrak{g}\subset\mathfrak{so}(k_{1})\oplus...\oplus\mathfrak{so}(k_{r})$,
then both EHP and the dual theories are trivial.
\end{enumerate}
Conditions (E2) and (E3) are immediately satisfied if we identify
$k$-groups, i.e, Lie subgroups of $SO(k_{1})\times...\times SO(k_{r})$,
or, equivalently, Lie subalgebras of $\mathfrak{so}(k_{1})\oplus...\oplus\mathfrak{so}(k_{r})$.
Indeed, if $A$ is a such subgroup, then the EHP theory for $A\hookrightarrow G_{A}$,
where $G_{A}$ is any matrix extension of $A$, obviously satisfies
(E2), so that if $n\geq6$ the dual theories are trivial. On the other
hand, if we consider EHP theories for reductions $H\hookrightarrow A$,
then (E3) is clearly satisfied and in dimension $n\geq6$ both the
dual and the actual EHP theory are trivial.

Some obvious examples of subgroups of $A\subset SO(k)$, are given
in the table below. Except for $Spin(4)\hookrightarrow O(8)$, which
arises from the exceptional isomorphism $Spin(4)\simeq SU(2)\times SU(2)$,
all the other inclusions already appeared in Subsection \ref{holonomy}.
Let us focus on condition (E2). In this case we can think of each
element of Table \ref{first_examples} as included in some $GL(k;\mathbb{R})$,
i.e, as a $G$-structure on a manifold and then as a geometry (recall
Table \ref{berger}). 
\begin{table}[H]
\begin{centering}
\begin{tabular}{|c|c|c|c|c|c|}
\hline 
$(n,k)$  & $(1,k)$  & $(2,k)$  & $(4,k)$  & $(1,7)$  & $(1,8)$\tabularnewline
\hline 
$H\subset O(n,k)$  & $SO(k)$  & $U(k)$  & $Sp(k)$  & $G_{2}$  & $Spin(4)$\tabularnewline
\hline 
\multicolumn{1}{c|}{} & $U(k)$  & $SU(k)$  & $Sp(k)\cdot Sp(1)$  &  & $Spin(7)$\tabularnewline
\cline{2-6} \cline{3-6} \cline{4-6} \cline{5-6} \cline{6-6} 
\end{tabular}
\par\end{centering}
\caption{\label{first_examples}First classical examples of geometric obstructions}
\end{table}

We could get more examples by taking finite products of arbitrary
elements in the table. In terms of geometry, this can be interpreted
as follows. Recall that a regular distribution of dimension $k$ on
an $n$-manifold can be regarded as a $G$-structure for $G=GL(k)\times GL(n-k)$.
Therefore, we can think of a product $O(k)\times O(n-k)$ as a regular
distribution of Riemannian leaves on a Riemannian manifold, $U(k)\times U(n-k)$
as a hermitan distribution, and so on.

Other special cases where condition (E2) applies are in table below.
In the first line, $O(k,k)$ is the so-called \emph{Narain group}
\cite{narain_group}, i.e, the orthogonal group of a metric with signature
$(k,k)$, whose maximal compact subgroup is $O(k)\times O(k)$. Second
line follows from an inclusion similar as $U(k)\hookrightarrow O(2k)$,
first studied by Hitchin and Gualtieri \cite{generalized_complex_PRIMEIRO,generalized _complex_thesis,generalized_complex_2},
while the remaining lines are particular cases of the previous ones.
The underlying flavors of geometry arose from the study of Type II
gravity and Type II string theory \cite{generalized_geometry_TYPE_II,generalized_geometry_TYPE_1}.
That condition (E2) applies for the second and third lines follows
from the fact that complexifying $U(k,k)\hookrightarrow O(2k,2k)$
we obtain $U(k,k)\hookrightarrow O(4k;\mathbb{C})$, as will be discussed
in the next section. 
\begin{table}[H]
\begin{centering}
\begin{tabular}{|c|c|c|}
\hline 
$G$  & $H$  & $\text{geometry}$\tabularnewline
\hline 
\hline 
$O(k,k)$  & $O(k)\times O(k)$  & Type II\tabularnewline
\hline 
$O(2k,2k)$  & $U(k,k)$  & Generalized Complex\tabularnewline
\hline 
$O(2k,2k)$  & $SU(k,k)$  & Generalized Calabi-Yau\tabularnewline
\hline 
$O(2k,2k)$  & $U(k)\times U(k)$  & Generalized Kähler\tabularnewline
\hline 
$O(2k,2k)$  & $SU(k)\times SU(k)$  & Generalized Calabi\tabularnewline
\hline 
\end{tabular}
\par\end{centering}
\caption{\label{more_examples}More examples of classical geometric obstructions}
\end{table}

About these two tables, some remarks:
\begin{enumerate}
\item Table \ref{first_examples} contains any ``classical'' flavors of
geometry, except symplectic geometry. The reason is that the symplectic
group $\operatorname{Sp}(k;\mathbb{R})$ is not contained in some
$O(r)$. But this does not mean that condition (E2) cannot be satisfied
by symplectic geometry. Indeed, generalized complex geometry contains
symplectic geometry \cite{generalized _complex_thesis}, so that Table
\ref{more_examples} implies that symplectic geometry fulfill condition
(E2). 
\item We could create a third table with ``exotic $k$-groups'', meaning
that a priori they are not related to any ``classical geometry'',
so that they describe some kind of ``exotic geometry''. For instance,
in \cite{more_subgroups_O(n)} all Lie subgroups $H\subset O(k)$
satisfying 
\[
\frac{(k-3)(k-4)}{2}+6<\dim H<\frac{(k-1)(k-2)}{2}
\]
were classified and in arbitrary dimension $k$ there are fifteen
families of them. Other exotic (rather canonical, in some sense) subgroups
that we could add are maximal tori. Indeed, both $O(2k)$ and $U(k)$
have maximal tori, say denoted by $T_{O}$ and $T_{U}$, so that the
reductions $T_{O}\hookrightarrow O(2k)$ and $T_{U}\hookrightarrow U(k)$
will satisfy condition E2. More examples of exotic subgroups to be
added are the \emph{point groups}, i.e, $H\subset\operatorname{Iso}(\mathbb{R}^{k})$
fixing at least one point. Without loss of generality we can assume
that this point is the origin, so that $H\subset O(k)$. Here we have
the symmetric group of any spherically symmetric object in $\mathbb{R}^{k}$,
such as regular polyhedrons and graphs embedded on $\mathbb{S}^{k-1}$. 
\item If we are interested only in condition (E2), then the tables above
can be enlarged by including embeddings of $O(k_{1})\times...\times O(k_{r})$
into some other larger group $\tilde{G}$. Indeed, in this condition
it only matters that $H$ is a $k$-group. Particularly, $O(k)$ admits
some exceptional embeddings (which arise from the classification of
simple Lie algebras), as in the table below \cite{exceptional_embed}.
\begin{table}[H]
\begin{centering}
\begin{tabular}{|c|c|c|c|c|c|}
\hline 
$k$  & 3  & 9  & 10  & 12  & 16\tabularnewline
\hline 
$O(k)\hookrightarrow$  & $G_{2}$  & $F_{4}$  & $E_{6}$  & $E_{7}$  & $E_{8}$\tabularnewline
\hline 
\end{tabular}
\par\end{centering}
\caption{\label{exceptional_embed} Exceptional embbedings of orthogonal groups.}
\end{table}
\end{enumerate}
Due to the above discussion, Tables \ref{first_examples} and \ref{more_examples},
as well as the possible ``table of exotic $k$-groups'', are not
only a source of examples for condition (E2), but also for condition
(E3). Indeed, we can just consider $G$ in condition (E3) as some
``$H$'' in the tables and take an arbitrary $H\subset G$. This
produces a long list of examples, because $G/H$ is a priori an arbitrary
homogeneous space subject only to the condition that $G$ is a $k$-group.
In geometric terms, if a geometry fulfills condition (E2), then any
induced ``homogeneous geometry'' fulfills (E3). On the other hand,
differently from what happens with condition (E2), Table \ref{exceptional_embed}
\emph{cannot} be used to get more examples of condition (E3). Indeed,
if $H\hookrightarrow G$ is a reduction fulfilling (E3) and $G\hookrightarrow\tilde{G}$
is an embedding, then $H\hookrightarrow\tilde{G}$ fulfills (E3) iff
it is satisfied by $G\hookrightarrow\tilde{G}$ (see diagram below).$$
\xymatrix{H\ar@/_{0.6cm}/[rrr]_{}\ar@{^{(}->}[r] & G\ar@{^{(}->}[r] & O(k_{1})\times...\times O(k_{r})\ar@{^{(}->}[r] & \tilde{G}}
$$

Let us now analyze condition (E1). First of all we notice that it
is very restrictive, because we need to work with reductions $H\hookrightarrow G$
in which $H\subset G$ is normal. For instance, in the typical situations
above, $H$ is not normal. Even so, there are two dual conditions
under which $H\hookrightarrow G$ fulfills (E1): when $G$ is a $k$-group
with $G/H\subset G$ and, dually, when $H$ is $k$-group with $G/H\subset H$.
In the first situation, we are just in condition (E3) for $G/H\hookrightarrow G$,
while in the second we are in condition (E3) for $G/H\hookrightarrow H$.
Therefore, there are not many new examples here. $\underset{\underset{\;}{\;}}{\;}$

\noindent \textbf{Conclusion.} \emph{Assume $n\geq6$}. \emph{So,
for any compatible FABS:} 
\begin{enumerate}
\item \emph{geometric/algebraic dual EHP is trivial in each geometry modeled
by tables above;} 
\item \emph{the actual gauge EHP is trivial in each homogeneous geometry
associated to the first two tables above.} 
\end{enumerate}

\subsection{Extended-Linear Examples \label{extended_linear_examples}}

In the last subsection we constructed explicit examples of geometric
obstructions for linear EHP theories and their duals. Let us now consider
extended-linear theories. This means that we will work with bundles
$P\rightarrow M$ structured over $\mathbb{R}^{k}\rtimes H$, where
$H$ is a linear group and $M$ is a $n$-dimensional smooth manifold.
Furthermore, we will take into account reductive connections for the
group reduction $H\hookrightarrow\mathbb{R}^{k}\rtimes H$. Our obstruction
theorem is now Corollary \ref{corollary_S0}, which implies that if
$n\geq6$ and
\begin{itemize}
\item $\mathfrak{h}\subset\mathfrak{so}(k_{1})\oplus...\oplus\mathfrak{so}(k_{r})$,
then EHP action and its duals are trivial.
\end{itemize}
Notice that this is exactly the same condition as (E2). This means
that \emph{if we are working in the extended-linear context, in dimension
$n\geq6$ EHP theory cannot be realized in any geometry of the last
subsection.}
\begin{rem}
Recall that it is in the extended-linear context, with $P=FM$, that
``abstract EHP theory'' becomes very close to the concrete one,
in that the geometries of the last subsection really describe geometries
(in the most classical sense, in terms of tensors) on the manifold
$M$. Therefore, the fact that EHP is trivial in a lot of situations
is a strong manifestation that the geometry of gravity is very rigid,
especially in higher dimension.
\end{rem}

\subsection{Cayley-Dickson Examples}

In the last two sections we presented obstructions to the realization
of linear/extended-linear (dual) EHP theories in commonly studied
geometries, as well as in some ``exotic'' geometries. Here we would
like to show that there are also obstructions for less studied (and
some never studied in detail) kinds of geometry. So, for now, let
$(A,*)$ be an arbitrary $\mathbb{R}$-algebra and let $\operatorname{Mat}_{k\times l}(A)$
be the $\mathbb{R}$-module of $k\times l$ matrices with coefficients
in $A$. The multiplication of $A$ induces a corresponding multiplication
\[
\cdot:\operatorname{Mat}_{k\times l}(A)\times\operatorname{Mat}_{l\times m}(A)\rightarrow\operatorname{Mat}_{k\times l}(A).
\]
In particular, for $k=0=m$ we see that for every $l$ we have a bilinear
map 
\[
b:A^{l}\times A^{l}\rightarrow A,\quad\text{given by}\quad b(x,y)=x_{1}*y_{1}+...+x_{l}*y_{l}.
\]
where we used the identifications 
\[
\operatorname{Mat}_{0\times l}(A)\simeq A^{l}\simeq\operatorname{Mat}_{l\times0}(A)\quad\text{and}\quad\operatorname{Mat}_{0\times0}(A)\simeq A.
\]

\begin{rem}
The bilinear map above is symmetric iff the algebra $A$ is commutative.
Furthermore, its non-degeneracy depends on whether $A$ has zero divisors
or not, and a priori it is not possible to ask about its positive
definiteness, because it takes values in $A$ and not in $\mathbb{R}$.
Thus, it is \textbf{not} an inner product in $\operatorname{Mat}_{0\times l}(A)$.
But, if we choose $A=\mathbb{R}$, then it is the standard inner product
of $\mathbb{R}^{l}$. Now, assume that $A$ is endowed with an involution
$\overline{(-)}:A\rightarrow A$. In this case, it can be combined
with $b$ to get a sesquilinear map $s$ in $A^{l}$, as in the diagram
below. $$
\xymatrix{\ar@/_{0.5cm}/[rrr]_{s}A^{l}\times A^{l}\ar[rr]^{\overline{(-)}^{l}\times id}&&A^{l}\times A^{l}\ar[r]^b&A}
$$Explicitly, 
\[
s(x,y)=\overline{x}_{1}*y_{1}+...+\overline{x}_{l}*y_{l}.
\]
We can now consider the subspace of all $l\times l$ matrices $M\in\operatorname{Mat}_{l\times l}(A)$
with coefficients in $A$ which preserve the sesquilinear form $s$,
in the sense that $s(Mx,My)=s(x,y).$ We say that these are the \emph{unitary
matrices in $A$}, \emph{respective to the involution $s$ induced
by the involution $\overline{(-)}$, }and we denote this set by $U(k;A)$.
If the involution is trivial (i.e, the identity map), then call them
the \emph{orthogonal matrices} \emph{in $A$}, writing $O(k;A)$ to
denote the corresponding space.
\end{rem}
\begin{example}
If we consider $A=\mathbb{R},\mathbb{C}$ and the trivial involution
we get, respectively, the real and the complex orthogonal groups $O(k)$
and $O(k;\mathbb{C})$. If we consider the canonical involution of
$\mathbb{C}$, we get the unitary group $U(k)$, while if we consider
$\mathbb{H}$ with its canonical involution we get the quaternionic
unitary group $U(k;\mathbb{H})$, which already have appeared in Tables
\ref{berger} and \ref{first_examples} with the more used notation
$Sp(k)$.
\end{example}
The last example leads us to compare different unitary groups of involutive
algebras which are related by the Cayley-Dickson construction discussed
in Example \ref{CD}. Indeed, recall that this construction takes
an involutive algebra $A$ and gives another involutive algebra $\operatorname{CD}(A)$
with weakened PI's. As an $R$-module, the newer algebra is given
by a sum $A\oplus A$ of ``real'' and ``imaginary'' parts. We
have an inclusion $A\hookrightarrow\operatorname{CD}(A)$, obtained
by regarding $A$ as the real part, which induces an inclusion into
the corresponding unitary groups $U(k;A)\hookrightarrow U(k;\operatorname{CD}(A))$.
This inclusion can be extended in the following way 
\[
U(k;A)\hookrightarrow U(k;\operatorname{CD}(A))\hookrightarrow U(2k;A),
\]
defined by setting the ``real part'' and the ``imaginary part''
as diagonal block matrices. Iterating we see that for every $k$ and
every $l$ there exists an inclusion 
\begin{equation}
U(k;\operatorname{CD}^{l}(A))\hookrightarrow U(2^{l}k;A).\label{inclusion_unitary}
\end{equation}

\begin{example}
If we start with $A=\mathbb{R}$ endowed with the trivial involution,
the first inclusion is $U(k)\hookrightarrow O(2k)$, which describes
complex geometry. The next iteration gives $Sp(k)\hookrightarrow O(4k)$,
which is quaternionic geometry. These are precisely the first lines
of Table \ref{first_examples}, but we can continue getting octonionic
geometry, sedenionic geometry, etc. 
\end{example}
Recalling that geometry can be regarded as the inclusion of Lie groups
$H\hookrightarrow G$, with the previous examples in ours minds, the
idea is to think of inclusion (\ref{inclusion_unitary}) as some flavor
of geometry, which we could name \emph{Cayley-Dickson geometry in
$A$, of order }$l$. But, this makes sense only if the unitary groups
in (\ref{inclusion_unitary}) are Lie groups. We notice that this
is the case at least when $A$ is finite-dimensional, as it will be
sketched now.

The idea is to reproduce the proof that the standard unitary/orthogonal
groups $U(k)$ and $O(k)$ are Lie groups. The starting point is to
notice that the involution of $A$ induces an involution in $\operatorname{Mat}_{k\times k}(A)$,
defined by the following composition, where $t$ is the transposition
map.$$
\xymatrix{\ar@/_{0.5cm}/_{(-)^{\dagger}}[rr]\operatorname{Mat}_{k\times k}(A)\ar[r]^{t} & \operatorname{Mat}_{k\times k}(A)\ar[r]^{\overline{(-)}} & \operatorname{Mat}_{k\times k}(A)}
$$We then notice that a $k\times k$ matrix $M$ with coefficients in
$A$ is unitary respective to $s$ iff $M$ is invertible with $M^{-1}=M^{\dagger}$,
i.e, iff $M\cdot M^{\dagger}=1_{k}=M^{\dagger}\cdot M$. Since $\overline{(-)}:A\rightarrow A$
is an algebra morphism, we have the usual property $(M\cdot N)^{\dagger}=M^{\dagger}\cdot N^{\dagger}$,
allowing us to characterize the unitary matrices as those satisfying
$M\cdot M^{\dagger}=1_{k}$. Therefore, defining the map 
\[
f:\operatorname{Mat}_{k\times k}(A)\rightarrow\operatorname{Mat}_{k\times k}(A)\quad\text{by}\quad f(M)=M\cdot M^{\dagger},
\]
in order to proof that $U(k;A)$ is a Lie group it is enough to verify
that the above map is, in some sense, a submersion (which will give
the smooth structure) and that the multiplication and inversion maps
are ``smooth''. It is at this point that we require that $A$ to
be finite-dimensional.

Now, we can return to the context of Geometric Obstruction Theory.
Taking $A=\mathbb{R}$, for every $k$ and $l$ the discussion above
gives a sequence of Lie group inclusions$$
\xymatrix{U(k;\operatorname{CD}^{l}(\mathbb{R}))\ar[r] & \cdots\ar[r] & U(2^{l-3}k;\mathbb{O})\ar[r] & U(2^{l-2}k;\mathbb{H})\ar[r] & U(2^{l-1}k;\mathbb{C})\ar[r] & O(2^{l}k)}
$$which imply that (in spacetime dimension $n\geq6$) EHP theory and
its duals are trivial for each of these geometries (because they satisfy
(E2)). In sum, Table \ref{first_examples} can be augmented to include
Cayley-Dickson geometries in $\mathbb{R}$ of arbitrary order, as
shown below.

\begin{table}[H]
\begin{centering}
\begin{tabular}{|c|c|c|}
\hline 
$G$  & $H$  & $\text{geometry}$\tabularnewline
\hline 
\hline 
$\mathbb{R}^{k}\rtimes H$  & $O(k)$  & $\text{Riemannian}$\tabularnewline
\hline 
$\mathbb{R}^{2k}\rtimes H$  & $U(k;\mathbb{C})$  & $\text{hermitean}$\tabularnewline
\hline 
$\mathbb{R}^{4k}\rtimes H$  & $U(k;\mathbb{H})$  & $\text{quaternionic}$\tabularnewline
\hline 
$\mathbb{R}^{8k}\rtimes H$  & $U(k;\mathbb{O})$  & $\text{octonionic}$\tabularnewline
\hline 
$\mathbb{R}^{16k}\rtimes H$  & $U(k;\mathbb{S})$  & $\text{sedenionic}$\tabularnewline
\hline 
$\vdots$  & $\vdots$  & $\vdots$\tabularnewline
\hline 
$\mathbb{R}^{2^{l}k}\rtimes H$  & $U(k;\operatorname{CD}^{l}(\mathbb{R}))$  & real $\text{CD of order \ensuremath{l}}$ \tabularnewline
\hline 
$\vdots$  & $\vdots$  & $\vdots$\tabularnewline
\hline 
\end{tabular}
\par\end{centering}
\caption{\label{extended_examples}Extended examples of concrete geometric
obstructions}
\end{table}

In Subsections \ref{linear_examples} and \ref{extended_linear_examples}
we showed that EHP theory cannot be realized in the most classical
flavors of geometry. These, however, constitute a \emph{finite} amount.
As a corollary of the construction above, we now know that EHP theory
is actually trivial in an\emph{ infinite }number of geometries:
\begin{cor}
In dimension $n\geq6$, the EHP theory and its duals cannot be realized
in an infinite number of geometries. 
\end{cor}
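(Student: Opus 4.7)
The plan is to exhibit an explicit countably infinite family of pairwise non-isomorphic geometries, each of which satisfies condition \emph{(E2)} (equivalently \emph{(S0)} in the extended-linear context), so that the triviality of the EHP action and of both its duals in dimension $n \geq 6$ is immediate from previous results.

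Concretely, I would fix $k=1$ and consider the family of structure reductions
\[
H_{l} \;:=\; U(1;\operatorname{CD}^{l}(\mathbb{R})) \;\hookrightarrow\; \mathbb{R}^{2^{l}} \rtimes H_{l}, \qquad l \in \mathbb{N},
\]
provided by the Cayley--Dickson tower. By the general embedding (\ref{inclusion_unitary}), each $H_{l}$ sits inside $O(2^{l})$, so its Lie algebra $\mathfrak{h}_{l}$ is a subalgebra of $\mathfrak{so}(2^{l})$. Hence $\mathfrak{h}_{l}$ is $(1,2)$-nil by Lemma \ref{sum_so}, which is precisely condition \emph{(S0)}. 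Corollary \ref{corollary_S0} then yields that, for every $l$ and every $n \geq 6$, the gauge EHP theory associated with the reduction $H_{l} \hookrightarrow \mathbb{R}^{2^{l}} \rtimes H_{l}$, together with its geometric and algebraic duals, is trivial.

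To finish, I need to argue that the $H_{l}$ yield genuinely distinct geometries rather than repetitions of finitely many. The cleanest route is a dimension count: since $\dim_{\mathbb{R}} \operatorname{CD}^{l}(\mathbb{R}) = 2^{l}$, the Lie group $H_{l}$ is built out of $2^{l} \times 2^{l}$ matrices subject to the unitary-type condition $M \cdot M^{\dagger} = 1$, so $\dim_{\mathbb{R}} H_{l}$ grows strictly (in fact exponentially) with $l$. Consequently the $H_{l}$ are pairwise non-isomorphic as Lie groups, and the associated reductions define pairwise inequivalent geometries. As a more conceptual alternative, one may invoke the fact recalled in Example \ref{CD} that the Cayley-Dickson construction strictly weakens the PI's at each step (commutative, then associative, then alternative, then power-associative are successively lost), which likewise forces the algebras $\operatorname{CD}^{l}(\mathbb{R})$ and the corresponding groups to be genuinely distinct.

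The main obstacle here is not the triviality of EHP, which is a direct quotation of the earlier obstruction theorems, but the verification that the family is genuinely infinite; the dimension argument above dispatches this in a single line, so the proof reduces essentially to pointing at Table \ref{extended_examples} and applying Corollary \ref{corollary_S0} uniformly in $l$.
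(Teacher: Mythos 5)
Your proof is correct and follows essentially the same route as the paper: the corollary is obtained by pointing at the Cayley--Dickson tower $U(k;\operatorname{CD}^{l}(\mathbb{R}))\hookrightarrow O(2^{l}k)$, observing that each member satisfies the $(1,2)$-nil/solvability hypothesis via Lemma \ref{sum_so}, and quoting Corollary \ref{corollary_S0} uniformly in $l$. Your added dimension-count verification that the $H_{l}$ are pairwise non-isomorphic is a small but welcome supplement that the paper leaves implicit.
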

Table \ref{extended_examples} can, in turn, be extended in three
different directions:
\begin{enumerate}
\item \emph{\uline{by adding distinguished subgroups}}. For every involutive
algebra $A$ we can define the subgroup $SU(k;A)\subset U(k;A)$ of
those matrices whose determinant equals the identity $1\in A$. When
$A$ is finite-dimensional, it will be a Lie subgroup, allowing us
to include $SU(k;\operatorname{CD}^{l}(\mathbb{R}))$ for every $l\geq0$
in Table \ref{extended_examples}. This means that if a geometry belongs
to the ``table of obstructions'', then its ``oriented version''
belongs too;
\item \emph{\uline{by replacing the base field}}. Notice that condition
$\mathfrak{h}\subset\mathfrak{so}(k_{1})\oplus...\oplus\mathfrak{so}(k_{r})$
was used above only in order to have $\alpha\curlywedge\alpha=0$
for every even-degree $\mathfrak{h}$-valued form, i.e, to ensure
that $\mathfrak{h}$ is $(k,1)$-nil for each $k$ even. In turn,
the condition $\alpha\curlywedge\alpha=0$ is satisfied exactly because
$\mathfrak{so}(n)$ is an algebra of skew-symmetric matrices. But
this remains valid for the Lie algebra $\mathfrak{so}(n;A)$ of $SO(n;A)$,
independently of the involutive algebra. This means that we can replace
$\mathbb{R}$ by any algebra $A$ in Table \ref{extended_examples}.
New interesting examples that arise from this fact are the following.
For given $p,q>0$, there is no $k$ such that $O(p,q)\subset O(k)$,
so that a priori we cannot add $O(p,q)$ to Table \ref{extended_examples}.
On the other hand, after complexification, i.e, after replacing $\mathbb{R}$
by $\mathbb{C}$ we have $O(p,q)\otimes_{\mathbb{R}}\mathbb{C}\simeq O(p+q)\otimes_{\mathbb{R}}\mathbb{C}$
for every $p,q\geq0$, so that they immediately enter in the ``obstruction
table''. This means that we can add to Table \ref{extended_examples}
``complex semi-Riemannian geometry'' and, similarly, ``quaternionic
semi-Hermitean geometry'', and so on.
\item \emph{\uline{by making use of Lie theory}}. The condition that
we need is on the\textbf{ Lie algebra} level. It happens that in general
there are many groups with the same algebra. Therefore, once we find
a Lie group whose algebra fulfills what we need, we can automatically
add to our ``table of obstructions'' every other Lie group that
induces the same algebra. In particular, we can double the size of
our current table by adding the universal cover (when it exists) of
each group. For instance, in Tables \ref{first_examples} and \ref{extended_examples}
the spin groups $\operatorname{Spin}(4)$ and $\operatorname{Spin}(7)$
were added due to exceptional isomorphisms on the level of \textbf{Lie
groups}. Now, noticing that for $k>2$ (in particular for $k=4,7$)
$\operatorname{Spin}(k)$ is the universal cover of $SO(k)$, we can
automatically add all theses spin groups to our table, meaning that
the dual EHP cannot be realized in ``spin geometry''. Similarly,
we can add the universal coverings of the (connected component at
the identity of) $U(k;\operatorname{CD}^{l}(\mathbb{R}))$. We can
also add the universal cover of the symplectic group $Sp(k;\mathbb{R})$,
usually known as the \emph{metapletic group}. Therefore, dual EHP
cannot be realized in ``metapletic geometry'' too. 
\end{enumerate}
Up to this point we gave examples which extend Table \ref{first_examples}.
We notice, however, that Table \ref{more_examples} can also be extended.
Indeed, for every $A$ we have the inclusion $U(k,k;\operatorname{CD}(A))\subset O(2k;2k,A)$,
so that by iteration we get 
\begin{equation}
U(k,k;\operatorname{CD}^{l}(A))\subset O(2^{l}k;2^{l}k,A).\label{generalized_CD}
\end{equation}

For $A=\mathbb{R}$ and $l=1$ this model generalized complex geometry,
leading us to say that the inclusion above models \emph{``generalized
Cayley-Dickson geometry in $A$ of degree $l$}''. For instance,
if we take $A=\mathbb{R}$ and $l=2$ this becomes generalized quaternionic
geometry, which is a poorly studied theory, started with the works
\cite{generalized_quaternionic_INITIAL,generalized_quaternionic_1}
. For $l=3,4,...$ it should be ``generalized octonionic geometry'',
``generalized sedenionic geometry'', and so on. The authors are
unaware of the existence of substantial works on these theories.

When tensoring inclusion (\ref{generalized_CD}) with $\operatorname{CD}^{l}(A)$
we get 
\[
U(k,k;\operatorname{CD}^{l}(A))\subset O(2^{l}k;2^{l}k,\operatorname{CD}^{l}(A))\simeq O(2^{l}k+2^{l}k;\operatorname{CD}^{l}(A)),
\]
so that condition 2 above implies that, independently of the present
development of abstract generalized Cayley-Dickson geometries, EHP
theory (and its duals) is trivial in each of them.$\underset{\underset{\;}{\;}}{\;}$

\noindent \textbf{Conclusion:}\emph{ In dimension $n\geq6$ and for
any FABS, extendend-linear EHP theory and its duals cannot be realized
in a lot of geometries, which include:}
\begin{enumerate}
\item \emph{real Cayley-Dickson geometries of all orders, their ``connected
components of the identity'' and their universal covering geometries;}
\item \emph{generalized Cayley-Dickson geometries of all orders;} 
\item \emph{the exceptional geometry $G_{2}$.}
\end{enumerate}

\subsection{Partially Abstract Examples}

The last examples were considered ``concrete'' because the algebras
underlying them are matrix algebras over the real numbers. So, in
order to give non-concrete examples it is enough to work with algebras
which are not matrix algebras ``and/or'' which have coefficient
rings other than $\mathbb{R}$. Let us call the ``or'' examples
\emph{partially abstract examples}.

The interesting part of the partially abstract situations is that
we can give abstract examples of obstructions without using the abstract
theorems of Section \ref{sec_abstract_obstructions}. For instance,
some obstructions for non-real matrix algebras where actually given
in last subsection in ``\emph{\uline{by replacing the base-field}}''.
Indeed, there we considered situations in which the new field $\mathbb{K}$
arises as an extension of $\mathbb{R}$ and the matrix algebra $A(k;\mathbb{K})$
in consideration is indeed a scalar extension $A(k;\mathbb{K})=A(k;\mathbb{R})\otimes_{\mathbb{R}}\mathbb{K}$.
But, since $\mathbb{R}$ is characteristic zero, it follows that every
extension $\mathbb{K}\supset\mathbb{R}$ is also characteristic zero,
allowing us to ask: \emph{can we give examples when $\mathbb{K}$
has prime characteristic?}

In \cite{maximal_tori} necessary and sufficient conditions were given
under which arbitrary orthogonal groups $O(q,\mathbb{K})$, where
$q:V\rightarrow\mathbb{K}$ is a positive-definite quadratic form
on a finite-dimension $\mathbb{K}$-space $q$ and $\operatorname{ch}(\mathbb{K})\neq2$,
admit an embedded maximal torus $\mathbf{T}(q;\mathbb{K})$. Independently
of the quadratic space $(V,q)$, the corresponding orthogonal group
is a Lie group and, as in the real case, the algebra $\mathfrak{o}(q;\mathbb{K})$
is $(2,1)$-nil. Therefore, under the conditions of \cite{maximal_tori},\emph{
EHP theory cannot be realized in any of these ``toroidal geometries''
}$\mathbf{T}(q;\mathbb{K})\hookrightarrow O(q;\mathbb{K})$. Notice
that for $\mathbb{K=\mathbb{R}}$ the same result appeared in \ref{linear_examples}
as a ``exotic linear example''.

\subsection{Fully Abstract Examples}

Finally, we give ``abstract examples'' of geometries in which EHP
theory cannot be realized. First we deal with algebra-valued geometry,
with graded geometry considered in the sequence. This means that we
work with a bundle $P\rightarrow M$ endowed with $A$-valued connections
$\nabla:TP\rightarrow A$, where $A$ is a graded a$\mathfrak{m}$-graded
algebra $A\simeq\oplus_{m}A_{m}$. The obstruction theorems are now
Theorems \ref{theorem_A1}, \ref{theorem_G1} and \ref{theorem_G1_2}.
In summary, if
\begin{enumerate}
\item[(F1)] \emph{\label{condition_F1}the algebra $A$ admits a vector space
decomposition $A\simeq A_{0}\oplus A_{1}$, where $A_{0}$ is a subalgebra
such that each $A_{0}^{m}=A_{0}\cap A^{m}$ is a weak $(k_{m},s_{m})$-solvable
subspace}, then EHP theory is trivial in dimension $n\geq k+s+1$,
where $(k,s)=\min(k_{m},s_{m})$.
\end{enumerate}
$\quad\;\,$The most basic examples are those for $\mathfrak{m}=0$
and $A_{1}=0$, i.e, the non-graded setting with $A$ itself $(k,s)$-solv.
As discussed in Subsections \ref{nil_algebras} and \ref{solv_algebras},
there are many natural examples of $(k,s)$-solv algebras, e.g, any
Lie algebra is $(k,1)$ for any even $k$. As a consequence,\emph{
in dimension $n\geq4$, any Lie algebra valued EHP theory is trivial.
}In particular, EHP theories with values in the Poincaré group $\mathfrak{iso}(n-1,1)$
are trivial. But, $\mathfrak{iso}(n-1,1)$-valued EHP theory is just
classical EHP theory, thus we conclude that General Relativity does
not makes sense in dimension $n\geq4$, which is absurd. We made a
similar mistake in Example \ref{example_mistake}: while the classical
EHP theory and the $\mathfrak{iso}(n-1,1)$-valued EHP theory take
values in the same \textbf{vector space}, at the same time that their
action functionals have the same shape, the \textbf{algebra} (and,
therefore, its properties) used to define the corresponding wedge
product is \emph{totally different}. Indeed, in the discussed cases,
the exterior products $\curlywedge_{\rtimes}$ and $[\wedge]$, respectively.

If we now allow $\mathfrak{m}$ to be nontrivial, but with $A_{1}=0$,
then condition (\ref{condition_F1}) is satisfied if each $A^{m}$
is a weak $(k_{m},s_{m})$-solvable subespace. In particular, it remains
satisfied if $A$ is itself $(k,s)$-solv. It happens that not only
Lie algebras are solv $(2,1)$-nil, but also a class of graded Lie
algebras. Consequently, EHP theories are also trivial in the domain
of graded Lie algebras. One can generalize even more thinking in EHP
theories with values in Lie superalgebras and in graded Lie superalgebras.
Indeed, a \emph{$\mathfrak{m}$-graded Lie superalgebra $\mathfrak{g}$}
is just a $\mathfrak{m}$-graded Lie algebra whose underlying PI's
(i.e, skew-commutativity and Jacobi identity) hold in the graded sense.
Particularly, this means that the $\mathbb{Z}_{2}$-grading writes
$\mathfrak{g}\simeq\mathfrak{g}^{0}\oplus\mathfrak{g}^{1}$, with
$\mathfrak{g}^{0}$ a $\mathfrak{m}$-graded Lie algebra and, therefore,
$(2,1)$-nil. It then follows that $\mathfrak{g}$ satisfies condition
F1. Summarizing, \emph{EHP theories cannot be realized in any ``Lie
algebraic'' context}.

Graded Lie Superalgebras are the first examples of algebras satisfying
(F1) with $A_{1}\neq0$, but they are far from being the only one.
Indeed, when we look at a decomposition $A\simeq A_{0}\oplus A_{1}$,
where $A_{0}$ is a subalgebra, it is inevitable to think of $A$
as an extension of $A_{1}$ by $A_{0}$, meaning that we have an exact
sequence as shown below. If $A_{0}$ is $(k,s)$-solv, then (F1) holds.
This can be interpreted as follows: \emph{ suppose that we encountered
an algebra $A_{1}$ such that EHP is not trivial there. So, EHP theory
will be trivial in any (splitting) extension of $A_{1}$ by a $(k,s)$-solv
algebra}.$$
\xymatrix{0\ar[r] & A_{0}\ar[r] & A\ar[r] & A_{1}\ar[r] & 0}
$$

In particular, because $(\mathbb{R}^{k},+)$ is abelian and, therefore,
$(1,1)$-nil, any algebra extension by $\mathbb{R}^{k}$ will produce
a context in which EHP theory is trivial. For instance, recall the
extended-linear context, which was obtained taking splitting extensions
of matrix algebras by $\mathbb{R}^{k}$. Thus, EHP theory is trivial
in every extended-linear context, so that classical EHP is trivial.
This implies tha GR is trivial, showing that \emph{we made another
mistake}. Once again, \emph{the mistake resides on the underlying
``wedge products'': the wedge product induced on an algebra extension
by the abelian group $(\mathbb{R}^{k},+)$ is }\textbf{\emph{not}}\emph{
the wedge product $\curlywedge_{\rtimes}$ studied in Subsection \ref{algebra_extensions}}.
Indeed, $\curlywedge_{\rtimes}$ does\textbf{ not} take the abelian
structure of $\mathbb{R}^{k}$ into account.

Another example about extensions is as follows: the paragraph above
shows that abelian extensions of nontrivial EHP theories are trivial,
but what about ``super extensions''? They remain trivial. Indeed,
being a Lie superalgebra, the translational superalgebra $\mathbb{R}^{k\vert l}$
(the cartesian superspace, regarded as a Lie superalgebra) is $(2,1)$-nil,
so that any algebra extension by it is trivial.

Finally, let us say that we can also consider ``fully exotic abstract
examples'' meaning geometries modeled by algebras fulfilling (F1),
but that have no physical meaning. Just to mention, in \cite{exotic_nil_1}
the authors build ``mathematically exotic'' examples of nilpotent
algebras, which fulfill (F1). If, for any reason, one tries to model
gravity as EHP with values in those algebras, one will find a trivial
theory.

\section{Conclusion \label{sec_conclusion}}

$\quad\;\,$In this notes, based on \cite{EHP_eu}, we considered
EHP action functional in different contexts and we gave obstructions
to realize gravity (modeled by these EHP actions) in several geometries.
In particular, we showed that EHP cannot be realized in almost all
``classical geometries'', including Riemannian geometry, hermitean
geometry, Kähler geometry, generalized complex geometry and many extensions
of them, as well as some exceptional geometries, such as $G_{2}$-geometry.
We also introduced the notions of geometric/algebraic duals of an
EHP theory and we have show that many obstructions also affect them.
A physical understanding of these ``dual theories'' is desirable.

In this process of finding geometric obstructions for EHP theories,
we identified a ``general obstruction'', corresponding to a ``solvability
condition'' on the underlying algebra, which led us to speculate
on the existence of a ``general obstruction'' for each gauge theory
and of a ``universal obstruction'', unifying all such ``general
obstructions''.

The speculations and the obstruction theorems developed here are independent
of the choice of a \emph{Functional Algebra Bundle System}, a concept
conceived in the present work, which codifies the data necessary to
pass from $A$-valued forms on a bundle $P\rightarrow M$ to bundle-valued
forms on $M$. A specific study of the category of these objects is
desirable (we have some work in developing stage \cite{FABS}). For
instance, which kind of limits/colimits exist in such category? It
has at least initial objects (corresponding to universal FABS)?

We believe that the main contribution of the present work is to clarify
that \emph{when working with classical theories defined by algebra-valued
differential forms, one needs to be very careful about the ``wedge
products'' used in the action funcional, given that the properties
of the underlying algebra deeply affect the properties of the corresponding
wedge product}.

\end{document}